\newtheorem{theorem}{Theorem}[section]
\newtheorem{proposition}[theorem]{Proposition}
\newtheorem{lemma}[theorem]{Lemma}
\newtheorem*{remark}{Remark}
\newtheorem{corollary}[theorem]{Corollary}
\newtheorem{hypothesis}[]{Hypothesis}
\def\ie{{\it i.\,e.}}
\def\eg{{\it e.\,g.}}
\def\Z{\mathbb{Z}} 
\def\R{\mathbb{R}} 
\def\C{\mathbb{C}} 
\def\N{\mathbb{N}} 
\def\Q{\mathbb{Q}}
\def\es{\epsilon_\sigma}
\def\sigt{{\sigma_{\text{\tiny{\!$T$}}}\,}}
\def\qea{{\sqrt{\es}}}
\def\q{\mathfrak{q}}
\def\cK{\mathcal{C}_K}
\DeclareMathOperator{\im}{im}
\def\qd1{ \sqrt{d+1}}
\def\qe1{ \sqrt{e+1}}
\def\qD{{\sqrt{D}}}
\def\zk{{\Z_K}}
\def\zkx{{\Z^\times_K}}
\def\mm{{\mathfrak{m}}}
\def\lra{{\ \xrightarrow{\ \ \ }\ }}
\def\ve{{\varepsilon}}
\def\JJ{\mathfrak{J}}
\def\p{{P}}
\def\HK{H_K} 
\def\h{h_K} 
\def\zl{\Z_L}
\def\j2{{\JJ_2}}
\def\L2{{L_2}}
\def\qu{\sqrt{-u_K}}
\def\LL{\mathbb{L}}
\def\K2{\mathcal{K}_2}
\def\QD{\Q(\qD)}
\def\D{\mathcal{D}}
\newcommand{\cheb}[2]{T_{#1}\left({#2}\right)}
\newcommand{\chsh}[2]{T^*_{#1}\left({#2}\right)}
\newcommand{\chebh}[1]{T_{#1}}
\newcommand{\chshh}[1]{T^*_{#1}}
\newcommand{\ag}[1]{{\zk/{(#1)}}}
\newcommand{\mg}[1]{{\bigl(\ag{#1}\bigr)^\times}}
\newcommand{\ol}[1]{{\overline{{#1}}}}
\newcommand{\leg}[2]{({\scriptstyle{\frac{#1}{#2}}})}
\newcommand{\flfrac}[2]{{\scriptstyle{\left\lfloor \frac{#1}{#2} \right\rfloor }}}
\newcommand{\fl}[1]{{{\left\lfloor {#1} \right\rfloor }}}
\newcommand{\ord}[2]{{\hbox{\rm ord}}_{#1}({#2})}
\DeclareMathOperator{\coker}{coker}
\def\openone{\mathbbm{1}}
\def\C{\mathbb{C}}
\def\Q{\mathbb{Q}}
\def\Z{\mathbb{Z}}
\def\jj{\mathfrak{j}}
\def\mm{\mathfrak{m}}
\def\pp{\mathfrak{p}}
\def\diag{\mathrm{diag}}
\def\bra#1{\langle#1|}
\def\braket#1#2{\langle#1|#2\rangle}
\def\ket#1{|#1\rangle}
\newcommand{\Gal}[2]{{\mathrm{Gal}_{#1/#2}}}
\newcommand{\Tr}[2]{{\mathrm{Tr}_{#1/#2}}}
\def\tr{\mathop{\rm tr}}
\begin{document}

\def\today{March 5, 2024}
\title{SIC-POVMs from Stark Units:\\Dimensions $n^2+3=4p$, $p$ prime}

\author{
  Ingemar Bengtsson${}^1$,
  Markus Grassl${}^{2,3}$,
  Gary McConnell${}^4$\\[2ex]
  {\footnotesize ${}^1$\,Stockholms Universitet, AlbaNova, Fysikum,
    S-106 91 Stockholm, Sweden}\\
  {\footnotesize ${}^2$\,International Centre for Theory of Quantum Technologies,
     University of Gdansk, 80-309 Gdansk, Poland}\\
  {\footnotesize ${}^3$\,Max Planck Institute for the Science of Light, 91058
      Erlangen, Germany}\\
  {\footnotesize ${}^4$\,Controlled Quantum Dynamics Theory Group, Imperial College, London, United Kingdom}
\def\footnotemark{}\thanks{E-mail addresses: \texttt{ingemar@fysik.su.se},
  \texttt{markus.grassl@ug.edu.pl},
  \texttt{g.mcconnell@imperial.ac.uk}}
}

\maketitle

\begin{abstract}
The existence problem for maximal sets of equiangular lines (or SICs)
in complex Hilbert space of dimension $d$ remains largely open. In a
previous publication we gave a conjectural algorithm for how to
construct a SIC if $d = n^2+3 = p$, a prime number. Perhaps the most
surprising number-theoretical aspect of that algorithm is the
appearance of Stark units in a key role: a single Stark unit from a
ray class field extension of a real quadratic field serves as a seed
from which the SIC is constructed. The algorithm can be modified to
apply to all dimensions $d = n^2+3$. Here we focus on the case when $d
= n^2+3 = 4p$, $p$ prime, for two reasons. First, special measures
have to be taken on the Hilbert space side of the problem when the
dimension is even. Second, the degrees of the relevant ray class
fields are `smooth' in a sense that facilitates exact calculations. As
a result the algorithm becomes easier to explain.  We give solutions
for seventeen different dimensions of this form, reaching $d =
39604$. Several improvements relative to our previous publication are
reported, but we cannot offer a proof that the algorithm works for any
dimensions where it has not been tested.
\end{abstract}

\tableofcontents

\section{Introduction}
We define a SIC, in a complex vector space of dimension $d$, as an
orbit under the Weyl--Heisenberg group that forms a maximal set of
equiangular lines. The name is short for the acronym SIC-POVM, spelt
out as symmetric informationally complete positive operator valued
measure \nocite{Zauner,Zauner_English}\cite{Zauner, Renes}.  If SICs
exist, they can be put to use in classical signal processing, quantum
information theory and quantum foundations.  Geometrically the
definition is as simple as it can be: a SIC is a regular simplex of
maximal size in complex projective space. But the existence problem
has turned out to be very hard. Perhaps the most surprising
number-theoretical aspect of the quest to solve it, thus far, is the
appearance of Stark units in a key role. To see why number theory
enters, recall that a single primitive $d$th root of unity---an
arithmetic object---gives the geometry of a regular~$d$-gon.  The
SIC existence problem seemingly has a similar flavour, with Stark
units in certain ray class field extensions of real quadratic fields
playing the role of the roots of unity.  We believe that our
construction brings something new to the original Stark conjectures,
by connecting them to a geometrical problem.

We begin with a summary of known results and conjectures about SICs. The Weyl--Heisenberg group 
\cite{Weyl} is precisely defined in equations \eqref{eq:presentation}--\eqref{eq:displ} below. 
It has generators $X$ and $Z$ obeying $X^d = Z^d = \openone$, and an additional 
generator of its centre.  
A (projective) orbit of length $d^2$ is obtained by choosing a {\it fiducial vector}  
$|\Psi_0 \rangle$, and then acting with the group to obtain the $d^2$ vectors
\begin{equation}
  |\Psi_{i,j}\rangle = X^iZ^j|\Psi_0 \rangle, \qquad 0 \leq i,j < d,
\end{equation}
where we ignore possible overall phase factors. By definition these $d^2$ vectors form 
a SIC if they define complex equiangular lines in the sense that 
\begin{equation}
  |\langle \Psi_0 |X^iZ^j|\Psi_0 \rangle |^2 =
  \begin{cases}
   \hfil 1, & \text{if $(i,j) = (0,0)$}; \\
   \frac{1}{d+1}, &  \text{if $(i,j) \neq (0,0)$.} 
  \end{cases}\label{eq:SIC_conditions}
\end{equation}
Numerical searches have found SICs for all $d \leq 193$.  Those
numerical solutions have given rise to precise conjectures about the
symmetries enjoyed by SICs \cite{Scott, Andrew}.  The unitary
automorphism group of the Weyl--Heisenberg group is known as the {\it
  Clifford group}, and every SIC found so far is invariant under an
element of order $3$ of the latter. This is known as {\it Zauner
  symmetry}. Exact solutions are known for all $d \leq 53$, and for
some higher dimensions.

It is important to notice that the Weyl--Heisenberg group admits a
canonical unitary matrix representation, so that one can meaningfully
talk about the number theoretical properties of the components of the
SIC vectors when expressed in the corresponding basis. It is also
important that the canonical representation uses only roots of unity:
more precisely the entries of the matrices are numbers from the
cyclotomic (`circle-dividing') number field generated by $(2d)$th
roots of unity. The same is true for the Clifford group. The number
field holding the SIC must therefore contain this cyclotomic field as
a subfield. A further remarkable observation was made through an
examination of exact solutions: for every case examined it was found
\cite{AYAZ} that the SIC vectors in dimension $d$ can be constructed
using some abelian extension of the real quadratic number field $\QD$,
where~$D$ is square-free and satisfies
\begin{equation}
  f^2D = (d+1)(d-3) = (d-1)^2 - 4 \label{eq:D}
\end{equation}
for some~$f \in \Z$. Since all abelian extensions of a number field are known through
class field theory, this then led to a very precise conjecture \cite{AFMY}: in every 
dimension $d$ there exists a {\it ray class SIC} that is constructed using the ray class field over the
base field $K = \QD$ with finite part~$\ol{d}$ of the {\it modulus}
defined as $d$ if $d$ is odd, and $2d$ if $d$ is
even. Typically there exist other, unitarily inequivalent SICs as
well, which live in some larger abelian extension of the base field. They will not concern us here 
(but see Ref.~\cite{KoppLagarias} for more). 

We should mention that if we fix the quadratic base field by fixing the
square-free integer $D$ we will find~\cite{AFMY} an infinite sequence of integers
$d_\ell$ that obey equation \eqref{eq:D}. We refer to them as
\emph{dimension towers}. An example for $D = 5$ is given by the sequence
\begin{equation}
  \{ d_\ell\}_{\ell = 1}^\infty = \{ 4,8,19, 48,124, 323, 844, 2208, 5779, 15128, 39604, \dots \}. 
\label{eq:tower} \end{equation}
It is conjectured that 
in every dimension $d_\ell$ there exists a SIC with unitary symmetry of order $3\ell$. A large 
symmetry makes it comparatively easy to find solutions, and for this particular sequence exact 
solutions up to $d = 323$ have been known for some time \cite{Fibonacci}. The dimension towers are of 
considerable interest in themselves, and we will devote Appendix \ref{tours} of this paper to 
outlining some of their features. 

Concerning the ray class fields we first observe that there are
powerful algorithms, implemented in computer algebra packages such as
Magma \cite{Magma}, which allow us to construct them with given $D$
and $d$, at least provided that the cyclic factors of the Galois group
over $K$ are not too large. Let us write~$\zk$ for the ring of
integers of~$K$.  The starting point for the construction is the
multiplicative group of~$\zk/\ol d \zk$, just as the multiplicative
group of~$\Z/\ol d \Z$ is the starting point for the construction of
the cyclotomic field; see Section \ref{glob} for how to continue. A
key fact is that if we have two moduli such that one divides the
other, then the ray class field whose modulus is a divisor will be a
subfield of the ray class field whose modulus it divides.

There are, however, open number theoretical questions lurking
here. For the cyclotomic number fields we are in possession of an
elegant description: a cyclotomic field with modulus $d$ is generated
by a primitive $d$th root of unity, and the $d$th roots of unity can
be obtained by evaluating the analytic function $e(x) = e^{2\pi i x}$
at rational points.  Finding an equally satisfactory description of
the ray class fields that we are interested in here is part of
Hilbert's $12$th problem \cite{Hilbert}, which has remained open for
more than a century.  However, around fifty years ago Stark proposed
that a set of algebraic units can be calculated numerically from the
value at $s=0$ of the first derivatives of an analytic $L$-function
that is associated to the number fields we are interested in
\cite{StarkIII}.  These units, whose existence is one subject of the
famous Stark conjectures, are known as {\it Stark units}. In
favourable circumstances they generate the corresponding number
fields.

How can Stark units be used to construct SICs? The first proposal was
made by Kopp \cite{Kopp}, who constructed SICs from Stark units in
dimensions $5$, $11$, $17$, and $23$. An extension to cover arbitrary
dimensions seems possible. Our proposal is quite different, and is
applicable only to dimensions of the form $d = n^2+3$ \cite{ABGHM}.
On the other hand we exploit some special features of this choice of
dimensions, or equivalently of this choice of modulus for the ray
class fields, which will enable us to reach dimensions much too large
for numerical searches to be feasible.

Let us be clear about what is achieved here: SICs are {\it
  constructed} from Stark units, but both proposals rely on a version
of the unproven Stark conjectures. There is no proof that they always
yield SICs, and at the end it has to be checked that the collections
of vectors that have been constructed do in fact solve the equations
that define a SIC. Hence SIC existence has been {\it proven} only in
those dimensions where they have been explicitly constructed (and this
will remain true at the end of this paper also).

What is special about $d = n^2+3$? Clearly $d-3 = n^2$, so equation \eqref{eq:D} 
tells us that $D$ is the square free part of $d+1$, and 
\begin{equation}
  d + 1 = f^2D
    \quad \Longrightarrow\quad
  d  = (f\sqrt{D} + 1) (f\sqrt{D}-1)
     = 4\times \frac{f\sqrt{D} + 1}{2}\times \frac{f\sqrt{D}-1}{2} .
\label{eq:split} \end{equation}
When $d = 4p$, it is easily checked that all three factors are
algebraic integers in the quadratic field. When $d = p$ we get only
two prime factors.  But in both cases the calculation shows that the
rational prime $p$ does not remain prime in the quadratic field $K =
\QD$. The key idea in Ref.~\cite{ABGHM}, which focused on the case
$d=p$, was to form the ideal $(f\sqrt{D} + 1)$ and use this as the
modulus for a ray class field over $K$. The result is a ray class field with 
degree $(p-1)/3\ell$ over the Hilbert class field $H_K$, where  
$\ell$ is the position of $d$ in the dimension tower. The ray class field with modulus $p$ is the 
compositum of that subfield with the cyclotomic field, and has degree  $(p-1)^2/3\ell$ 
over $H_K$. 

Provided that $d$ is odd this resonates with the conjectured symmetries of the SICs in 
these dimensions. They are special because the Clifford group contains operators of 
order $3\ell$ that are represented as permutation matrices. The conjectures say 
that there exist SIC fiducial vectors that are left invariant by such a permutation matrix, and 
have an anti-unitary symmetry in addition to this. Going through the details 
one finds that such a vector is formed from $(p-1)/3\ell$ distinct numbers, cyclically 
ordered by the Clifford group.  The temptation is to identify these numbers with the orbit 
of a Stark unit in the field with modulus $(f\sqrt{D} + 1)$ under its cyclic Galois group 
over $H_K$. Closer inspection shows that one has to start from the square 
root of a Stark unit. Then the construction can be made to work---at least, it works for 
the thirteen choices of $d = n^2 +3 = p$ that were tested in Ref.~\cite{ABGHM}. 

The construction generalises to all odd dimensions of the form $d =
n^2+3$, although then we have to deal with an entire lattice of ray
class fields with different moduli. But if the dimension is even there
is an immediate obstacle on the Hilbert space side of the problem. In
the standard representation of the Clifford group there is no operator
of order $3$ that is represented as a permutation matrix.  It would
therefore seem that a fiducial vector invariant under such an operator
necessarily involves cyclotomic numbers, and then the above
construction cannot work. This obstacle is completely removed in
Section \ref{sec:Clifford} below. There, a slightly non-standard
representation of the Clifford group is shown to give operators of
order $3\ell$ represented as permutation matrices.  If $d = n^2+3$ is
even then $d$ is divisible by $4$ but not by $8$, and this is one
reason why the present paper is focused on the case $d = n^2+3 =
4p$. It is the conceptually simplest case among the even dimensions.

When $d = 4p$ we have more than one ray class subfield to choose
from. We can use the ray class field whose modulus is the ideal $\pp
=\bigl((f\sqrt{D} + 1)/2\bigr)$, but we can also use the slightly
larger ray class field with modulus $2\pp$. Which of these subfields
should we use to write down the SIC fiducial vector? The answer will
turn out to be that we need both. In fact, before we are done, we will
need the modulus $4\pp$ as well. The degrees of the relevant ray class
fields over $K$ are
\begin{equation}
  \deg(K^{\pp\jj}) = \h\frac{p-1}{3\ell},\qquad
  \deg(K^{2\pp\jj}) = \h\frac{p-1}{\ell} , 
\end{equation}
where $\h$ is the class number of $K$ (the degree of the Hilbert class
field $H_K$ over $K$).  This brings us to another reason why we focus
on $d = n^2+3 = 4p$ here.  For the purpose of doing explicit
calculations in a number field it is convenient to have it expressed
as a tower of field extensions, and then the prime decomposition of
the degree matters. It helps, computationally, if the degree is {\it
  smooth}, in the sense that its prime factors are small relative to
the degree. This motivates a closer look at the factor $p-1$ in the
degrees. We find (for odd $n$) that
\begin{equation}
  d = n^2+3 = 4p
  \quad \Longrightarrow \quad
  p-1 = \frac{n^2-1}{4} = \frac{n-1}{2}\times\frac{n+1}{2}.
\end{equation}
Hence the upper bound for the largest prime factor in $p-1$ grows like
$\sqrt{p}$, while in the $d = p$ case it grows linearly with $p$. This
helpful fact has the computational consequence that we can rely on
exact arithmetic when discussing, for example, the action of the
Galois group. We hope that this will have the effect of making it
easier to follow the logic of this paper, compared to that of
Ref.~\cite{ABGHM}.

We break off this introduction here, and invite the reader to read the
rest of the paper. Section~\ref{sec:Clifford} gives an account of the
representation theory of the Clifford group on which we rely. We give
more details than usual because we handle the dimension-four factor of
the Hilbert space in an unusual way. Section \ref{sec:ansatz} gives a
first version of an Ansatz for a SIC fiducial vector in dimension $d =
n^2+3 = 4p$. Section \ref{sec:numbertheory} gives the number
theoretical results that enable us to make this Ansatz precise. What
is new in relation to~\cite{ABGHM} is that more than one ray class
field is involved, and that their moduli involve powers of
$2$. Section \ref{sec:results} contains our main result: a precise
version of our algorithm for constructing SICs in these dimensions. It
also gives some details about the dimensions where we have
successfully applied it.  Section \ref{sec:example} gives a worked
example for $d = 52$, which is small enough that we can give all the
calculations in detail. Section \ref{sec:twelve} explains why
dimension $12$ (a dimension divisible by $3$) is special, and Section
\ref{sec:overlaps} contains some useful observations concerning
overlap phases and Stark units. Finally, Section \ref{sec:conclusions}
consists of our conclusions as well as an outlook. Appendix
\ref{tours} gives some new results about dimension towers which apply
irrespective of the dimension; Appendix \ref{sec:AppendixB} places
the behaviour of the primes above~$2$ into the context of the somewhat
striking properties of the geometric scaling factor~$\xi =
\sqrt{-2-\qd1}$; Appendix \ref{sec:phase_ambiguity} gives some 
additional details concerning the representation of the groups; and 
Appendix \Ref{sec:Gik} discusses alternative strategies for exact 
verification of the SIC property. 

We have not been able to prove that the algorithm that we propose
works in all dimensions of the form $d = n^2+3$, but it does work in
every case that we have tested. This includes all $n \leq 53$ as well
as some higher dimensional cases. In this paper we will prove that the
construction works for seventeen different dimensions of the form $d
=n^2 + 3 = 4p$, including $d = 39604$.  It relies on the paradigm of
the Stark conjectures in order to give us the units which go into the
fiducial vectors, but the truth of the Stark conjectures as such is
not directly relevant to it.

\section{How to represent the Clifford group}\label{sec:Clifford}
To every finite dimensional Hilbert space $\C^d$ we can associate a
discrete Heisenberg group $H(d)$ known as a Weyl--Heisenberg group, as
well as its automorphism group with minimal centre within the unitary
group $U(d)$.  The latter is known as its Clifford group. These groups
are tied to dimension $d$ in the sense that the Weyl--Heisenberg group
admits faithful unitary irreducible representations only in dimension
$d$, and the Clifford group has a (projective) representation in dimension $d$
as well.  An interesting fact is that when the dimension is a
composite number both groups can be treated as direct products of the
corresponding groups in the factors, provided the factors are of
relatively prime dimensions. This means that we can confine our
discussion to prime power dimensions. We will restrict ourselves
further here, because the example we are interested in is $H(4p) =
H(4)\times H(p)$ where $p$ is an odd prime equal to $1$ modulo $3$.

Our goal is to construct SICs that are (projective) orbits under the
Weyl--Heisenberg group, and our focus is on the number theoretical
properties of the lines that form the SIC.  When representing a group
by unitary matrices one is forced to make a number of arbitrary
choices; notably one has to choose an orthonormal basis and make a
decision  concerning the phase factors of the vectors in that
basis.  Unfortunate choices will completely obscure the number
theoretical properties of the lines. In most of the literature the
choices originally made by Weyl are followed. We will indeed use this
{\it standard representation} when representing $H(p)$, but not when
representing $H(4)$. To explain why, we first remind the reader about
Weyl's choices \cite{Weyl}.

The group $H(d)$ can be presented using three generators $X$, $Z$, and
$\omega$. We impose the condition that $\omega$ commutes with $X$ and
$Z$, and that
\begin{equation}
  X^d = Z^d = \omega^d = \openone \qquad\text{and}\qquad ZX = \omega XZ. \label{eq:presentation}
\end{equation}
If the dimension $d$ is even, it turns out to be a good idea to extend 
the centre of the group \cite{Marcus} by defining a generator $\tau$ such that 
\begin{equation}
  \tau^2 = \omega.
\end{equation} 
For an irreducible unitary representation, Schur's lemma implies that
$\omega$ is represented by a primitive root of unity times the unit
matrix.  Our first (innocuous) choice is to set
\begin{equation}
  \omega = e^{\frac{2\pi i}{d}}\openone, \qquad
  \tau = - e^{\frac{\pi i}{d}}\openone,
\end{equation}
where throughout the following the notation~$\openone_n$ 
will denote the identity matrix operator on dimension~$n$, omitting the~$n$ where it is clear from the context. 
The sign is introduced so that we obtain the extra relation $\tau = 
\omega^{(d+1)/2}$ if $d$ is odd. (In the following 
we often use the notation $\tau = - e^{\pi i/d}$, and similarly for $\omega$. 
This should not cause confusion). If we introduce 
\begin{equation}
  \bar{d} =
  \begin{cases}
    \hfil  d, & \text{if $d$ is odd;} \\ 
    \hfil 2d, & \text{if $d$ is even,}
  \end{cases}\label{eq:dbar}
\end{equation} 
we can state that $\tau$ is represented by a primitive $\bar{d}$th 
root of unity. We remind the reader that it is usual to define the 
{\it displacement operators} 
\begin{equation}
  D_{i,j} = \tau^{ij}X^iZ^j. \label{eq:displ}
\end{equation}
Up to signs there are $d^2$ displacement operators, and they form a
unitary operator basis in $\C^d$.

It follows that the matrices representing the group necessarily 
include entries lying in the cyclotomic field $\Q(\tau)$. The 
aim is to show that no further extension of the rational numbers is needed 
in order to represent the entire group. Note that a cyclotomic number 
field generated by an $n$th root of unity $\omega_n$ necessarily contains 
the number $-\omega_n$, which is a $(2n)$th root of unity when $n$ is odd. Hence 
every cyclotomic field is of the form $\Q(\omega_{2d})$ for some $d$. 
Precisely because we decided to extend the centre of the Weyl--Heisenberg groups 
when $d$ is even, we can state that we use $\Q(\omega_{2d})$ when 
representing the Weyl--Heisenberg group in a Hilbert space of dimension $d$. 

Having chosen a primitive root of unity the next step is to choose an
orthonormal basis. The standard choice is to use the eigenbasis of the
unitary matrix representing $Z$ for this purpose. The defining
relations \eqref{eq:presentation} imply that its eigenvalues are $d$th
roots of unity, and Weyl went on to show that no repeated eigenvalues
occur. We still have to order the eigenvectors somehow, but this is an
innocuous choice. Thus we have determined that
\begin{equation}
  Z = \left( \begin{array}{cccc}
    1 & 0 & 0 & 0  \\
    0 & \omega & 0 & 0 \\
    0 & 0 & \omega^2 & 0 \\
    0 & 0 & 0 & \omega^3
  \end{array}\right).
\end{equation}
Here we assumed $d = 4$, but the generalisation to arbitrary $d$ should 
be obvious. 

There is one more choice to be made, which is non-trivial in principle. With the 
choices thus far, the defining relations \eqref{eq:presentation} imply 

\begin{equation}
  X = \left( \begin{array}{cccc}
    0 & 0 & 0 & a_1 \\
    a_2 & 0 & 0 & 0 \\ 
    0 & a_3 & 0 & 0 \\
    0 & 0 & a_4 & 0
  \end{array}\right) , 
\end{equation}
where $a_1, a_2, a_3, a_4$ are phase factors obeying $a_1a_2a_3a_4 =
1$.  Weyl chose phase factors in front of the basis vectors ensuring
that $a_1 = a_2= a_3 = a_4 = 1$. This is the standard representation
of the Weyl--Heisenberg group. It is canonical when $d = p$ is prime,
but not when $d = 4$ as we will see.

We now move on to the Clifford group, which is represented by unitary
matrices $U$ that permute the Weyl--Heisenberg group under
conjugation, \ie,
\begin{equation}
  UH(d)U^{-1} = H(d). \label{Clifforddef}
\end{equation}
We can make the restriction that the unitary matrices have determinant
equal to $\pm 1$. If the dimension is composite with relatively prime
factors, and if we ignore the matrix $-\openone$, it follows that the
Clifford group splits as a direct product. The quotient of the
Clifford group by the Weyl--Heisenberg group is isomorphic to the {\it
  symplectic group} $Sp(2, \Z_d)$.  We recall that in this two-dimensional setting, in fact
 $$Sp(2, \Z_d) \cong SL(2, \Z_d),$$
 the special linear group, and we shall use this identification, as well
 as the notation~$SL(2)$, without comment from now onwards. 
A projective representation of
this symplectic group is determined by the representation of $H(d)$ up
to overall phase factors. This means that to every $SL(2)$-matrix $F$
with entries that are integers modulo $\bar{d}$ we can associate a
unitary matrix $U_F$ in the Clifford group: here we choose $\bar{d}$
defined in eq.~\eqref{eq:dbar} instead of $d$ to keep track of phase
factors. To be precise, their action on the displacement operators is
\begin{equation}
  F = \left( \begin{array}{cc}
    \alpha & \beta \\
    \gamma & \delta
  \end{array}\right)_{\bar{d}}
  \qquad \Longrightarrow \qquad
  U_FD_{i,j}U_F^{-1} = D_{\alpha i + \beta j,\gamma i + \delta j},\label{eq:action_Clifford}
\end{equation}
where we used a subscript to indicate that the matrix elements are integers 
modulo $\bar{d}$. The unitary matrices $U_F$ are determined by this requirement, up 
to overall phase factors. When the integer $\beta$ admits an inverse modulo 
$\bar{d}$ the explicit formula for the matrix elements $(U_F)_{r,s}$ of $U_F$ is 
\begin{equation}
  F = \left( \begin{array}{cc}
    \alpha & \beta \\
    \gamma & \delta
  \end{array} \right)_{\bar{d}} \qquad \leftrightarrow \qquad
  (U_F)_{r,s} = \frac{e^{i\phi}}{\sqrt{d}} \tau^{\beta^{-1}(\delta r^2 - \gamma rs + \alpha s^2)}, 
\end{equation}
where $r,s$ run from $0$ to $d-1$.  At the other extreme, when the
symplectic matrix $F$ is diagonal, we obtain a permutation matrix with
matrix elements given by
\begin{equation}
  F = \left( \begin{array}{cc}
    \delta^{-1} & 0 \\
    0 & \delta
  \end{array}\right)_{\bar{d}}
  \qquad \leftrightarrow \qquad
  (U_F)_{r,s} = \pmb{\delta}_{\delta r, s}, \label{eq:diagonal}
\end{equation}
where the bold $\pmb{\delta}$ denotes the Kronecker delta. 
Full details are given in \cite{Marcus}. 

Here we wish to stress two points. First, the phase factor $e^{i\phi}$
can be chosen so that the entries in the representational matrices belong
to the cyclotomic field $\Q(\tau )$. Second, we represented the
Weyl--Heisenberg group by {\it monomial} matrices, that is to say by
matrices that contain only one non-zero element in each row and each
column. The matrices representing the symplectic group are not
monomial in general; but some of them are. Indeed operators that
permute the operators in the cyclic subgroup generated by $Z$ will
permute their joint eigenvectors as well, possibly up to adding phase
factors.  Hence, relative to a basis spanned by these eigenvectors,
such Clifford group elements will be given by monomial matrices, \ie,
by matrices that are permutation matrices possibly with their non-zero
elements being replaced by phase factors.

In the SIC problem we are interested in Clifford unitaries
$U_\mathcal{Z}$ such that $U_\mathcal{Z}^3 = \openone$, because it
seems that every SIC vector has a symmetry of order three
\cite{Zauner}, \cite{Scott}, \cite{Marcus}. Here $\mathcal{Z}$ is a
symplectic matrix of order three and trace $-1$. There are many such
matrices.  A choice that has become standard \cite{Marcus, Scott} is
\begin{equation}
  \mathcal{Z} = \left( \begin{array}{rr}
    0 & -1 \\
    1 & - 1
  \end{array}\right)_{\bar{d}}. \label{eq:standardzauner}
\end{equation}
Such Clifford unitaries are known as Zauner unitaries, because Zauner was 
the first to realise their importance \cite{Zauner}. Indeed,
according to Zauner's conjecture, for every dimension there is a
fiducial SIC vector $\ket{\Psi_0}$ such that with a suitable choice of the
phase factor in $U_\mathcal{Z}$ it is the case that
\begin{equation}
  U_\mathcal{Z}\ket{\Psi_0} =\ket{\Psi_0}, \qquad
  U_\mathcal{Z}^3 =\openone. \label{sicsymmetri}
 \end{equation}
This relation will be simplified considerably if we can choose a
matrix $U_\mathcal{Z}$ that is a permutation matrix of order $3$. If
so, we can hope to write down a fiducial vector using a number field
that does not contain complex roots of unity at all.

If the dimension $d$ is prime, one can show that the symplectic
group~$SL(2)$ contains a unique conjugacy class of order $3$ elements
\cite{Flammia}.  The question is whether this conjugacy class contains
a representative that is represented by a permutation matrix.  What we
need is a diagonal symplectic matrix of order $3$, as in equation
\eqref{eq:diagonal}, but now with the added requirement that $\delta^3
= 1$ modulo $d$. This has solutions if $d = p\equiv 1 \bmod 3$, but
not if $d = p\equiv 2\bmod 3$ \cite{Marcus}. There is another way to
understand this. An operator will be represented by a permutation
matrix if it permutes the vectors that form the basis. In the standard
representation this means that it must permute elements of the cyclic
subgroup generated by $Z$ among themselves. If $d = p$ the
Weyl--Heisenberg group contains $p+1$ cyclic subgroups of order $p$,
generated by $Z, X, XZ, \dots , XZ^{p-1}$ and (pairwise) having only
the unit element in common. A Clifford group element of order $3$ will
collect some of these subgroups into triplets, but if $p\equiv 1\bmod
3$ there must be a pair of subgroups ``left over'', and the order $3$
operator will indeed permute their elements among themselves. For $d =
4$, modulo the centre, there are six partially intersecting cyclic
subgroups of order $4$, and we cannot obtain a monomial
$U_\mathcal{Z}$ if we stay in the standard representation.

Fortunately an alternative representation is available whenever $d$ is
a square \cite{monomial}. We give the details for $d = 4$. When choosing a
representation it is natural to select a maximal commuting set of
operators and let their joint eigenvectors serve as the basis.  Hence
the commuting operators are represented by diagonal matrices. By
inspection of Figure \ref{fig:blomman} we see that $H(4)$ contains a
distinguished abelian subgroup consisting of order-two elements,
namely $\{ \openone, X^2,Z^2,X^2Z^2\}$. It is easy to see that the
defining relations \eqref{eq:presentation} imply that they mutually
commute. How is their joint eigenbasis related to the standard basis?
To see this we exhibit the Hilbert space as a tensor product $\C^4 =
\C^2\otimes \C^2$ by introducing a product basis
\begin{equation}
  |0_4\rangle = |0_2\rangle \otimes |0_2\rangle , \ 
  |1_4\rangle = |0_2\rangle \otimes |1_2\rangle , \
  |2_4\rangle = |1_2\rangle \otimes |0_2\rangle , \ 
  |3_4\rangle = |1_2\rangle \otimes |1_2\rangle.
\end{equation}
It is then seen that in the standard representation with respect to
this product basis: 
\begin{equation} D_{2,0} = X^2 =
  \sigma_x\otimes \openone_2, \qquad
  D_{0,2} = Z^2 = \openone_2\otimes \sigma_z, \qquad
  D_{2,2} = - X^2Z^2 = - \sigma_x\otimes \sigma_z,\label{eq:diag_ops}
\end{equation}
where $\sigma_x$, $\sigma_z$ are the Pauli matrices. Thus we wish to 
diagonalise $\sigma_x$ in the first factor without changing the already diagonal 
$\sigma_z$ in the second. For this purpose we introduce the two dimensional discrete Fourier matrix 
$F_2  = \frac{1}{\sqrt{2}}\left( \begin{array}{cr} 1 & 1 \\ 1 & -1 \end{array} \right)$ and note that 
\begin{equation}
  F_2\sigma_xF_2^{-1}
  = \frac{1}{2}\left( \begin{array}{cr} 1 & 1 \\ 1 & - 1 \end{array} \right)
    \left( \begin{array}{cc} 0 & 1 \\ 1 & 0 \end{array} \right)
     \left( \begin{array}{cr} 1 & 1 \\ 1 & - 1 \end{array} \right)
  = \left( \begin{array}{cr} 1 & 0 \\ 0 & -1 \end{array} \right)
  = \sigma_z. \label{Ftensor1}
\end{equation}
Hence we can go from the standard representation to a representation
where $X^2,Z^2, X^2Z^2$ are diagonal by applying the unitary operator
$F_2\otimes\openone_2$.  Since our point of view requires us to
carefully notice any number theoretical complication that may arise it
is comforting to note that here there are none---equation
\eqref{Ftensor1} involves only rational numbers.

\begin{figure}[hbt]
        \centerline{ \hbox{
                \epsfig{figure=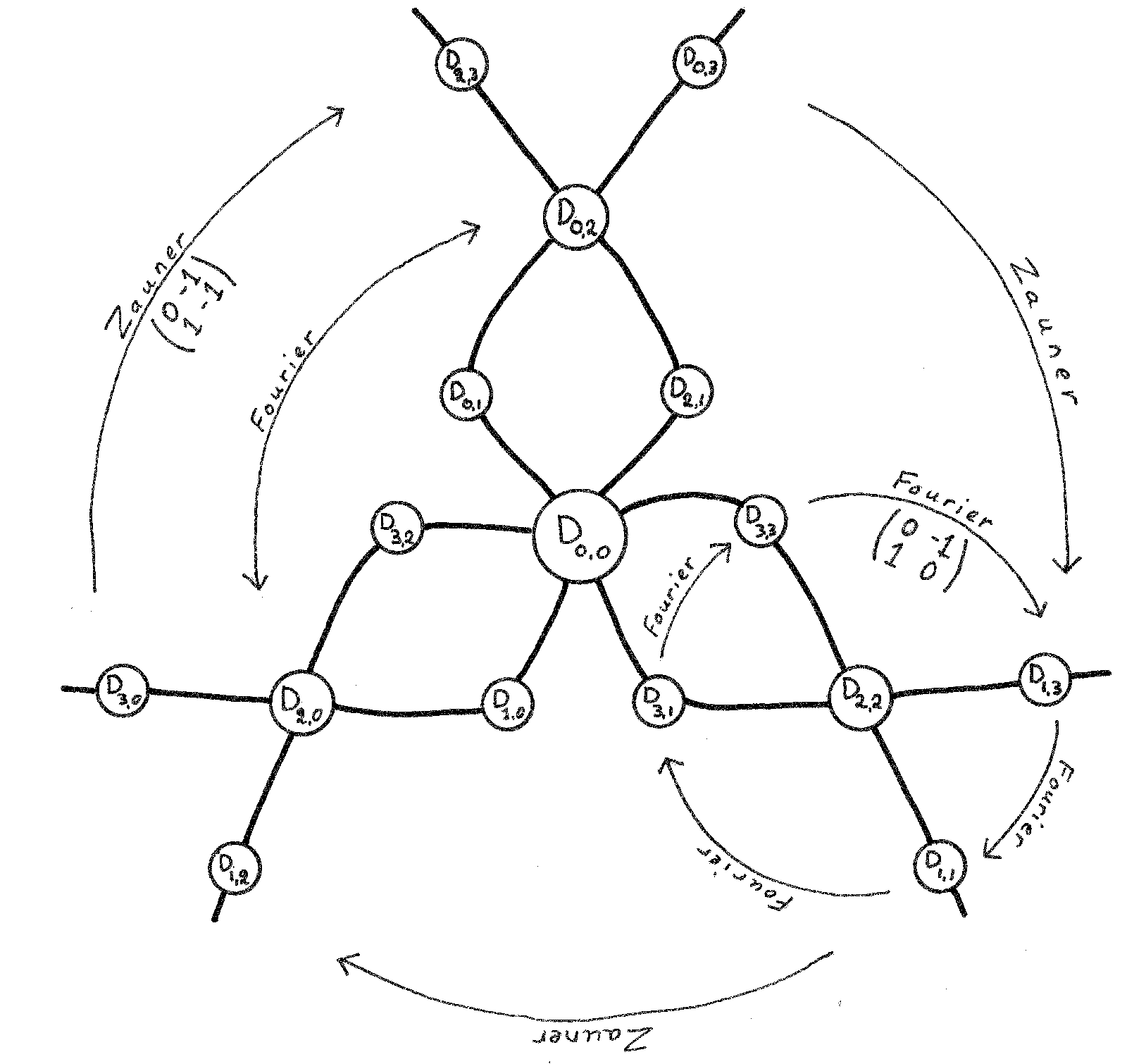,width=75mm}}}
        \caption{The Weyl--Heisenberg and Clifford groups for
          $d=4$. We show the sixteen displacement operators, the six
          cyclic subgroups generated by $D_{0,1}$, $D_{2,1}$,
          $D_{3,3}$, $D_{3,1}$, $D_{1,0}$, and $D_{3,2}$, as well as
          the action of two different symplectic unitaries, namely the
          Fourier matrix and the Zauner matrix
          $U_{\mathcal{Z}}$.\label{fig:blomman}}
 \end{figure}

The effect this basis change has on the Clifford group is dramatic.
Since the basis is defined using a maximal abelian subgroup containing
all the order-two elements in $H(4)$, and since the Clifford group
must permute the order-two elements among themselves, the effect of
the Clifford group on the new basis vectors is simply to permute them
and possibly to multiply them with phase factors. Hence the entire
Clifford group is now given by monomial matrices, which is why the
representation we have arrived at is known as the {\it monomial
  representation} \cite{monomial}.

We still have to deal with possible phase factors in front of the
basis vectors. Recall that Weyl chose them with a view to make $X$ a
real matrix. This choice is no longer natural. But the phase factors
are (almost) determined by the problem we are interested in. Our aim
is to choose the basis in such a way that a symplectic unitary
$U_\mathcal{Z}$ of order $3$, appearing in equation
\eqref{sicsymmetri}, becomes a permutation matrix. We begin by making
the standard choice given in eq.~\eqref{eq:standardzauner} with
$\bar{d} = 2d = 8$, express the symplectic unitary $U_\mathcal{Z}$ in
the standard representation \cite{Marcus}, and then perform the
transformation to the monomial basis. As a result
\begin{equation} U_\mathcal{Z} =
  \frac{1}{2}\left( \begin{array}{cccc}
    \tau^5 & \tau^5 & \tau^5 & \tau^5 \\ 
    \tau^6 & 1 & \tau^2 & \tau^4 \\
    \tau & \tau^5 & \tau & \tau^5 \\
    \tau^6 & \tau^4 & \tau^2 & 1
  \end{array} \right)
  \qquad \rightarrow \qquad
  U_\mathcal{Z} =  \left( \begin{array}{cccc}
    0 & \tau^5 & 0 & 0 \\
    0 & 0 & \tau^6 & 0 \\
    \tau^5 & 0 & 0 & 0 \\
    0 & 0 & 0 & 1
  \end{array} \right). 
\end{equation}
We will insist that $U_\mathcal{Z}$ be represented by a permutation matrix. 
We therefore change the phase factors of the basis vectors by applying a unitary 
transformation effected by the diagonal matrix
\begin{equation}
  T_1 = \diag(1,\tau^5, \tau^3, 1). \label{eq:faser}
\end{equation}
The overall transformation is
\begin{equation}
  T = T_1\cdot(F_2\otimes\openone_2)
  =\frac{1}{\sqrt{2}}
  \begin{pmatrix}
    1 & 0 & 1 & 0\\
    0 &\tau^5 & 0 &\tau^5 \\
    \tau^3 & 0 &\tau^7 & 0 \\
    0 & 1 & 0 & -1
  \end{pmatrix}. \label{eq:basis_change}
\end{equation}
The diagonal matrices representing $X^2$, $Z^2$, $X^2Z^2$ are
unaffected by this, but $U_\mathcal{Z}$ takes the form
\begin{equation}
  U_\mathcal{Z} =  \left( \begin{array}{cccc}
    0 & 1 & 0 & 0 \\
    0 & 0 & 1 & 0 \\
    1 & 0 & 0 & 0 \\
    0 & 0 & 0 & 1
  \end{array} \right).\label{eq:UZauner}
\end{equation}
In this basis, the generators of the Weyl--Heisenberg group are
represented as
\begin{alignat}{5}
  X=D^{(4)}_{1,0}&{} = \left(\begin{array}{cccc}
    0 & \tau^3 & 0 & 0 \\
    \tau^5 & 0 & 0 & 0 \\
    0 & 0 & 0 & \tau^3\\
    0 & 0 & \tau & 0
  \end{array}\right)
 &&{}= \tau\left(\begin{array}{cccc}
    0 & i & 0 & 0 \\
    -1 & 0 & 0 & 0 \\
    0 & 0 & 0 & i\\
    0 & 0 & 1 & 0
  \end{array}\right)\label{eq:X_adapted}\\
\noalign{and}
  Z=D^{(4)}_{0,1}&{} = \left(\begin{array}{cccc}
    0 & 0 & \tau^5 &0 \\
    0 & 0 & 0 & \tau^7 \\
    \tau^3 & 0 & 0 & 0 \\
    0 & \tau^5 & 0 & 0 \\
  \end{array}\right)
  &&{}=\tau\left(\begin{array}{cc}
    0 & -1\\
    i & 0
  \end{array}\right)\otimes
  \left(\begin{array}{cc}
    1 & 0\\
    0 & i
  \end{array}\right),\label{eq:Z_adapted}
\end{alignat}
where $i=\tau^2$, $i^2=-1$. Note that the eighth root of unity $\tau = \frac{1+i}{\sqrt{2}}$
is a common scalar factor; so up to this phase factor, the
matrices can be written using only fourth roots of unity.

We give the representation of two
other extended Clifford group elements that will figure in our
constructions, namely those corresponding to
\begin{equation}
  P = \left( \begin{array}{rr}
    - 1 & 0 \\
    0 & -1
  \end{array}\right)_8
  \qquad\text{and}\qquad
  J = \left( \begin{array}{rr}
    1 & 0 \\
    0 & -1
  \end{array}\right)_8.
\end{equation}
Recall that the subscript~$8$ denotes arithmetic modulo~$8$. 
The matrix $J$ is anti-symplectic (has determinant $-1$) and is represented 
by an anti-unitary operator \cite{Marcus}. In the standard representation $J$ is 
represented by pure complex conjugation on each entry of the target vector, 
with respect to a fixed embedding of our coefficient field. 
In the version of 
the monomial representation that we use here this is going to be slightly more complicated 
due to the phase factors we introduced. We find 
\begin{equation}
  U_P = \left( \begin{array}{cccc}
    1 & 0 & 0 & 0 \\
    0 & 1 & 0 & 0 \\
    0 & 0 & 1 & 0 \\
    0 & 0 & 0 & -1
  \end{array} \right)
  \qquad \text{and} \qquad
  A_J\ket{\Psi} = \left( \begin{array}{cccc}
    1 & 0 & 0 & 0 \\
    0 & i & 0 & 0 \\
    0 & 0 & -i & 0 \\
    0 & 0 & 0 & 1
  \end{array} \right)\ket{\Psi}^*\label{UP4}
\end{equation} 
The unitary operator $U_P$ is known as the parity operator. For the
anti-unitary operator $A_J$ \cite{Wigner} we give its action on the
fiducial vector $\ket{\Psi}$, where $\ket{\Psi}^*$ denotes the vector
obtained by complex conjugation of its components.

We have now specified the representation of the $d = 4p$ Clifford
group that we will use in this paper.  We first apply the Chinese
remainder theorem to rewrite the Weyl--Heisenberg group as $H(4)\times
H(p)$. In Hilbert space this transformation involves only a
permutation matrix. In the dimension-$p$ factor we use the standard
representation, and choose a representative $U_\mathcal{Z} \in U(p)$ that is a
permutation matrix. In the dimension-four factor we use the monomial
representation with its basis vectors enphased according to the
above. The full Clifford group is represented accordingly.

Unfortunately the requirement that we have an order-three permutation
matrix in the Clifford group does not determine the enphasing of the
basis vectors uniquely. 
This is related to the fact that the $d =
4$ Clifford group actually contains two isomorphic copies of the
Weyl--Heisenberg group \cite{Huangjun}. 
This also affects which of the two order-three  
permutation matrices in dimension $4p$ we construct in 
equations \eqref{eq:Zauner1} and \eqref{eq:Zauner2} below (see
Appendix \ref{sec:phase_ambiguity} for more details).

\section{An Ansatz for a SIC fiducial vector}\label{sec:ansatz}

Having decided on a representation of the Weyl--Heisenberg group, we
are in a position to write down an Ansatz (working presupposition) for a SIC fiducial vector that has the
symmetries that we expect from the numerical evidence in low
dimensions \cite{Andrew}.  Naturally we will make use of the freedom to
choose suitable representatives from the conjugacy classes of Clifford
unitaries. Moreover we will regard the Hilbert space $\C^4\otimes
\C^p$ as a direct sum of four copies of $\C^p$.

Let the dimension be $d_\ell = n^2 + 3 = 4p$, where $\ell$ is the position in the dimension 
tower. We will write the fiducial vector as 
\begin{equation}
  \ket{\Psi_0} = N\left( \begin{array}{c}
    {\bf v}_1 \\ \hline
    {\bf v}_2 \\ \hline
    {\bf v}_3 \\ \hline
    {\bf v}_4
  \end{array} \right)
  = N\sum_{i=1}^4\ket{i}\ket{{\bf v}_i},\label{eq:fid_ansatz}
\end{equation}
where ${\bf v}_1, \dots, {\bf v}_4$ are vectors in $\C^p$ and $N$ is a
normalisation factor at our disposal. This vector should have a
unitary symmetry of order $3\ell$, and we want to represent this by a
permutation matrix.  For this purpose we introduce a generator
$\theta$ for the multiplicative group of the integers modulo~$p$ (so that 
$\theta^{p-1}\equiv 1 \bmod p$ and this is the least such power), and set $\delta =
\theta^{(p-1)/3\ell}$ in equation \eqref{eq:diagonal}.  This results
in a permutation matrix $U_F$ such that $U_F^{3\ell} =\openone$.  
Consequently there are two possibilities for the Zauner symmetry in dimension $4p$,
namely $U_{\mathcal{Z}}\otimes U_F$ or $U_{\mathcal{Z}}\otimes
U_F^{-1}$, where $U_{\mathcal Z}$ is the permutation matrix from
eq.~\eqref{eq:UZauner}, and $U_F$ is the permutation matrix in dimension
$p$.  This symmetry gives us one of the two conditions
\begin{alignat}{5}
&&  \left(U_{\mathcal{Z}}\otimes U_F\right) \ket{\Psi_0}&{}=\ket{\Psi_0}\label{eq:Zauner1}\\
\text{or}\quad \nonumber \\ 
\quad&&  \left(U_{\mathcal{Z}}\otimes U_F^{-1}\right) \ket{\Psi_0}&{}=\ket{\Psi_0},\label{eq:Zauner2}
\end{alignat}
and we discuss the first possibility in what follows.  The two
possibilities are related by a Clifford transformation, but given the precise way in 
which we decided to enphase the basis vectors only one of them will afford a fiducial 
vector in a number field with the smallest possible degree, see Appendix \ref{sec:phase_ambiguity}.  
In expanded form, the symmetry \eqref{eq:Zauner1} reads
\begin{equation}
  \left( \begin{array}{c|c|c|c}
    0 & U_F & 0 & 0 \\ \hline
    0 & 0 & U_F & 0 \\ \hline
    U_F & 0 & 0 & 0 \\ \hline
    0 & 0 & 0 & U_F
  \end{array} \right)
  \left( \begin{array}{c}
    {\bf v}_1 \\ \hline
    {\bf v}_2 \\ \hline
    {\bf v}_3 \\ \hline
    {\bf v}_4
  \end{array} \right)
  =\left( \begin{array}{c}
    U_F{\bf v}_2 \\ \hline
    U_F{\bf v}_3 \\ \hline
    U_F{\bf v}_1 \\ \hline
    U_F{\bf v}_4
  \end{array} \right)
 =\left( \begin{array}{c}
    {\bf v}_1 \\ \hline
    {\bf v}_2 \\ \hline
    {\bf v}_3 \\ \hline
    {\bf v}_4 
  \end{array} \right).
\end{equation}
This requires
\begin{alignat}{5}
  {\bf v}_2 = U_F^{-1}{\bf v}_1
  \qquad\text{and}\qquad
  {\bf v}_3 = U_F{\bf v}_1,\nonumber\\
  \noalign{as well as}
  U_F^3{\bf v}_1 = {\bf v}_1
  \qquad\text{and}\qquad
  U_F{\bf v}_4 = {\bf v}_4. \label{eq:symmetri}
\end{alignat}
Thus the vector ${\bf v}_4$ has a permutation symmetry of order
$3\ell$, while for the remaining vectors the symmetry is of order
$\ell$. For the second option \eqref{eq:Zauner2}, the position of the
vectors ${\bf v}_2$ and ${\bf v}_3$ in the Ansatz
\eqref{eq:fid_ansatz} is interchanged.  The independent vectors ${\bf
  v}_1$ and ${\bf v}_4$ can be written as
\begin{equation}
  {\bf v}_1 = (1, y_1, \dots , y_{p-1})^{\rm T},
\qquad
  {\bf v}_4 = (\sqrt{x_0}, x_1, \dots , x_{p-1})^{\rm T}. \label{eq:v0_v1}
\end{equation} 
The normalisation factor $N$ was chosen to make one component equal to
$1$.  The numbers $\sqrt{x_0}$, $x_i$, $y_i$ are to be determined.

Conditions \eqref{eq:symmetri}, together with the form
\eqref{eq:diagonal} of $U_F$, imply that there are $(p-1)/3\ell$
independent numbers $x_i$ and $(p-1)/\ell$ independent numbers $y_i$.
To make this explicit, we introduce a multiplicative ordering of the
$p-1$ vector-indexing integers $j= 1,\ldots,p-1$ by re-indexing them
according to a new variable~$r$ with the range~$0\leq r\leq p-2$,
viz.
\begin{equation}
  j = \theta^r \bmod p, \label{theta}
\end{equation}
the relevance of which becomes clear in eqs.~\eqref{eq:components} below. 
We then introduce two cyclically indexed sets of complex numbers, 
\begin{equation}
  \{  \alpha_0,\alpha_1,\ldots,\alpha_{(p-1)/3\ell-1}\}
  \qquad\text{and}\qquad
  \{  \beta_0,\beta_1,\ldots,\beta_{(p-1)/\ell-1}\}.
\end{equation}
(Looking ahead: These are the numbers that will eventually turn out to be 
square roots of Stark phase units). We extend these cycles to cycles of
length $p-1$ by defining  
\begin{alignat}{5}
  \alpha_{r + j (p-1)/3\ell}
    &{}= \alpha_{ r}, &\qquad {j} &{}= 0, 1, \dots , 3\ell - 1\\
  \beta_{ r + k(p-1)/\ell}
    &{}= \beta_{r}, &\qquad {k} &{}=0, 1, \dots, \ell - 1.
\end{alignat}
The relation between the components $x_j$ and $y_j$ of the vectors
${\bf v}_4$ and ${\bf v}_1$, resp., and the numbers $\alpha_r$ and
$\beta_r$, resp., is given by
\begin{equation}
  x_{\theta ^r} =\alpha_r
  \qquad\text{and}\qquad
  y_{\theta ^r} =\beta_r.\label{eq:components}
\end{equation}
Using equation \eqref{eq:diagonal} with $\delta =
\theta^{(p-1)/3\ell}$ we can now check that conditions
\eqref{eq:symmetri} hold.

We also build an anti-unitary symmetry in. Again guided by Scott's conjectures \cite{Andrew} 
we take it to be 
\begin{equation}
( U_P^{(4)}\otimes U_P^{(p)}) \ket{\Psi_{0}}^* =  \left( \begin{array}{c|c|c|c}
    U_P^{(p)} & 0 & 0 & 0 \\ \hline
    0 & U_P^{(p)} & 0& 0 \\ \hline
    0 & 0 & U_P^{(p)} & 0 \\ \hline
    0 & 0 & 0 & -U_P^{(p)}
  \end{array} \right)\ket{\Psi_{0}}^* =\ket{\Psi_{0}}, \label{UPd}
\end{equation}
where $U_P^{(4)}$ is the parity operator in dimension four,
$\ket{\Psi_0}^*$ denotes component-wise complex conjugation on the
entries of the target vector~$\ket{\Psi_{0}}$, as in eq.~\eqref{UP4},
and $U_P^{(p)}$ is the parity operator in dimension $p$, a permutation
matrix obtained by setting $\delta = -1$ in equation
\eqref{eq:diagonal}.  This requires
\begin{alignat}{5}
  y_{-j}&{} = y_j^* , 
   &&\text{which is equivalent to $\beta_{r + (p-1)/ 2} = \beta_{ r}^*$,} \label{eq:cc1} \\
  x_{-j}&{} = -x_j^*,
    &&\text{which is equivalent to $\alpha_{r + (p-1)/2} = - \alpha_{ r}^*$,} \label{eq:cc2} \\ 
  \text{and}\quad
  \sqrt{x_0}&{} = - (\sqrt{x_0})^* .   \label{eq:cc3} 
\end{alignat}
Hence $x_0 < 0$. 

Finally we make the more dramatic assumption that 
\begin{alignat}{10}
  1&{}= |x_1|^2&&{}= \dots&&{}= |x_{p-1}|^2&&{}= |y_1|^2&&{}= \dots&&{}= |y_{p-1}|^2\\ 
   &{}= |\alpha_0|^2&&{}= \dots&&{}= |\alpha_{(p-1)/3\ell-1}|^2&&{}= |\beta_0|^2&&{}= \dots&&{}= |\beta_{(p-1)/\ell-1}|^2. 
\end{alignat}
Thus all these numbers lie on the unit circle, and the vector
$\ket{\Psi_0}$ is consequently referred to as being \emph{almost
flat}. Another way to state this assumption is to require that a real
SIC fiducial vector exists, from which the almost flat vector is
reached by means of a suitable Clifford transformation. Because we use
a non-standard representation of the Clifford group the standard
argument to this effect \cite{Roy, Mahdad, ABGHM} has to be modified,
and we give the details in Appendix \ref{sec:phase_ambiguity}.

For an almost flat vector some of the overlaps are real. The SIC
condition requires them to be phase factors divided by $\qd1$. Taken
together this means that we must (tentatively) impose
\begin{equation}
  \langle \Psi_0|\openone_4\otimes Z^j|\Psi_0\rangle
    = \begin{cases}
      \hfil 1, & \text{if $j = 0$;} \\
    \pm  \frac{1}{\qd1}, & \text{if $j \neq  0$.}
  \end{cases}\label{baby1pm} \end{equation}
We assume that the left-hand side forms a Galois orbit when $j \neq
0$, so the sign must be independent of $j$.  Written out, this leads
to the conditions
\begin{equation} N^2(|x_0|+d-1) = 1 
  \qquad\text{and}\qquad N^2(|x_0|-1) = \pm \frac{1}{\qd1}. 
\end{equation}
If we choose the positive sign, and recall that we have already
established that $x_0$ is negative, we find that
  \begin{equation} x_0 = - 2 - \qd1
  \qquad\text{and}\qquad
  N^2 = \frac{1}{d+1+\qd1}. \label{eq:x0}
\end{equation}
For the negative sign there is no solution (with $|x_0| \geq 0$ and $d > 3$). Hence we add equations 
\eqref{eq:x0} to the Ansatz. This means that the SIC conditions 
\begin{equation}
  \langle \Psi_0|\openone_4\otimes Z^j|\Psi_0\rangle
  = \begin{cases}
    \hfil1, & \text{if $j = 0$;} \\
  \frac{1}{\qd1}, & \text{if $j \neq  0$.}
  \end{cases}\label{baby1}
\end{equation}
  are built into the Ansatz. It also follows that 
\begin{equation}
  \langle \Psi_0|D_{0,2}\otimes Z^j|\Psi_0 \rangle
  = \langle \Psi_0|D_{2,0}\otimes Z^j|\Psi_0\rangle
  = \langle \Psi_0|D_{2,2}\otimes Z^j|\Psi_0 \rangle
  = - \frac{1}{\qd1}, \label{baby2}
\end{equation}
where the displacement operators $D_{i,j}$ are with respect to our
chosen representation in dimension four, and the operator $Z$ is again in the
standard representation in dimension $p$.

To go further we need to know more about the phase factors $\alpha_r$
and $\beta_r$.  This is where we turn to number theory.

\section{The number fields used for the SIC construction}\label{sec:numbertheory}
The position now is that we need two sets of calculable and
cyclically-indexed numbers on the unit circle, one set with
$(p-1)/3\ell$ members and one with $(p-1)/\ell$ members, to place in
the Ansatz for the SIC fiducial vector. We will discuss those numbers
now, subject to some hypotheses from the Stark conjectures.  We will
partly rely on the background provided in~\cite[\S III (A)]{ABGHM},
and then focus on special features arising when $d = 4p$. In
particular, we are interested in the lattice of ray class fields that
arises when we successively add factors of $2$ to their modulus
(Propositions~\ref{hydra1}, \ref{hydra2}, \ref{hydra4}, \ref{hydra5}
and~\ref{hydra7}).  For any number field~$F$, the notation~$\Z_F$ will
denote its ring of integers.  If~$\pp$ is a prime ideal of~$F$ lying
above the rational prime~$p$, with ramification index~$e\geq1$,
then~$F_\pp$ will denote the local field obtained by completion at the
place corresponding to~$\pp$,~$\Z_\pp$ its ring of integers,
and~$v_\pp(x)$ the valuation at~$\pp$ normalised so that~$v_\pp(p) =
e$.

Given any dimension $d\ge 4$, we start with the quadratic field
$K=\QD$, where $D$ is the square-free part of
$(d+1)(d-3)=(d-1)^2-4$ as specified in eq.~\eqref{eq:D}. The field has a
single non-trivial automorphism $\tau$ mapping $\sqrt{D}$ to
$-\sqrt{D}$. By $\jj$ we denote the embedding of $K$ into the complex
numbers with $\jj(\sqrt{D})>0$; hence $\jj^\tau$ is the embedding
mapping $\sqrt{D}$ to a negative number.  

For the construction of the fiducial vector, we need the fields shown
in Figure~\ref{fig:subfield_lattice}.  The lines and the numbers next to
the lines indicate a field extension and its degree.  The field
$K^{(1)}$, denoted below by~$H_K$, is the \emph{(wide) Hilbert class
field} of $K$, the maximal everywhere unramified abelian extension
of~$K$.  The notation~$(1) = 1 \zk$ signifies the ideal $\zk$ itself, 
so this is consistent with the notation for ray class fields which is introduced immediately below. 
The Galois group~$\Gal{\HK}{K}$ is isomorphic to~$\cK$, the ideal
class group of~$\zk$.  The degree of the extension, which is the order
of~$\cK$, is the class number of~$K$ and is denoted by~$\h$.

As noted in the introduction, every real quadratic field $\QD$ is
connected to a dimension tower $\{d_\ell(D) \}_{\ell = 1}^\infty$.
In~\cite{AFMY} and~\cite{ABGHM} it is explained that the
dimensions $d_\ell(D)$ above a given fixed value of $D$ take values
given by adding $1$ to the traces of the integer powers $u_D^r$ of the
first totally positive power $u_D$ of a fundamental unit $u_K$
for $\zk$.  That is, denoting the $\ell$th dimension above $\QD$
by $d_\ell(D)$:
\begin{equation}\label{deeell}
d_\ell(D) = u_D^\ell + u_D^{-\ell} + 1.
\end{equation}
An example with $D=5$, so $d_1 = 4$, was given in
eq.~\eqref{eq:tower}.  It is expected (and confirmed in every case we
have studied) that the unitary symmetry of a ray class SIC in
dimension $d_\ell$ is of order $3\ell$; hence SICs that occur at
position $\ell > 1$ in a dimension tower are easier to construct than
the mere size of the dimension would indicate.  However, for all but
one choice of the quadratic field $K=\QD$, dimensions of the form $d =
4p$ can occur only when $\ell = 1$. The unique exception is the
dimension tower for $D = 5$, given in eq.~\eqref{eq:tower}; see
Appendix~\ref{subsec:towers}.  And indeed, the highest dimension that
will be reached in this paper ($d =39604$) is for $D = 5$ and $\ell =
11$.

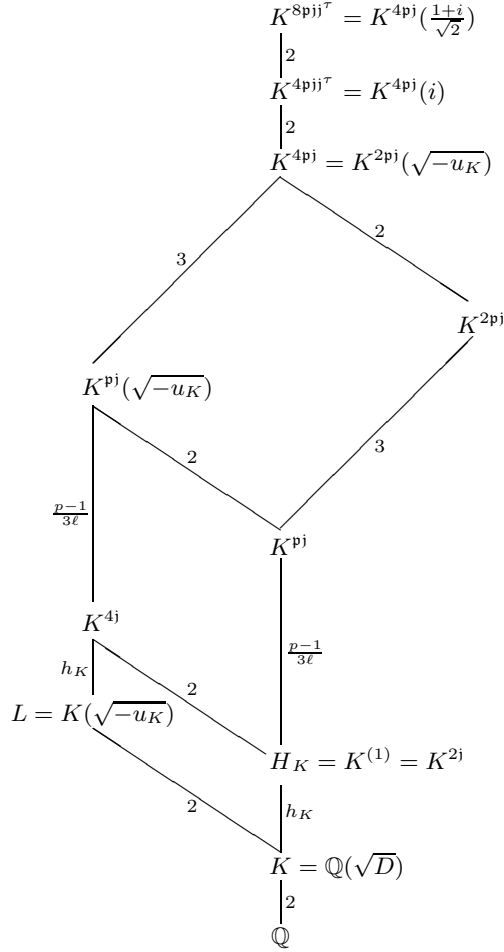
\begin{figure}[hbt]
  \centerline{{\small\unitlength0.9\unitlength
  \begin{picture}(250,410)(-20,-105)
    \put(100,290){\makebox(0,0){$K$\rlap{${}^{8\pp\jj\jj^\tau}=K^{4\pp\jj}(\frac{1+i}{\sqrt{2}})$}}}
    \put(102,275){\makebox(0,0)[l]{\scriptsize$2$}}
    \put(100,266){\line(0,1){18}}
    \put(100,260){\makebox(0,0){$K$\rlap{${}^{4\pp\jj\jj^\tau}=K^{4\pp\jj}(i)$}}}
    \put(102,245){\makebox(0,0)[l]{\scriptsize$2$}}
    \put(100,236){\line(0,1){18}}
    \put(100,230){\makebox(0,0){$K$\rlap{${}^{4\pp\jj}=K^{2\pp\jj}(\sqrt{-u_K})$}}}
    \put(100,224){\line(3,-2){78}}
    \put(139,199){\makebox(0,0)[bl]{\scriptsize$2$}}
    \put(100,224){\line(-1,-1){78}}
    \put(61,187){\makebox(0,0)[br]{\scriptsize$3$}}
    \put(22,136){\makebox(0,0){$K$\rlap{${}^{\pp\jj}(\sqrt{-u_K})$}}}
    \put(178,163){\makebox(0,0){$K$\rlap{${}^{2\pp\jj}$}}}
    \put(100,76){\line(-3,2){78}}
    \put(61,104){\makebox(0,0)[bl]{\scriptsize$2$}}
    \put(139,114){\makebox(0,0)[lt]{\scriptsize$3$}}
    \put(100,77){\line(1,1){80}}
    \put(100,70){\makebox(0,0){$K$\rlap{${}^{\pp\jj}$}}}
    \put(100,-14){\line(0,1){78}}
    \put(102,25){\makebox(0,0)[l]{\scriptsize$\frac{p-1}{3\ell}$}}
    \put(100,-20){\makebox(0,0){$H$\rlap{${}_K=K^{(1)}=K^{ 2\jj}$}}}
    \put(21,83){\makebox(0,0)[r]{\scriptsize$\frac{p-1}{3\ell}$}}
    \put(22,38){\makebox(0,0){$K$\rlap{${}^{4\jj}$}}}
    \put(94,-18){\line(-3,2){72}}
    \put(61,7){\makebox(0,0)[bl]{\scriptsize$2$}}
    \put(100,-59){\line(-3,2){78}}
    \put(61,-42){\makebox(0,0)[bl]{\scriptsize$2$}}
    \put(22,46){\line(0,1){82}}
    \put(21,18){\makebox(0,0)[r]{\scriptsize$\h$}}
    \put(22,7){\line(0,1){23}}
    \put(22,-1){\makebox(0,0){$L = K{}(\sqrt{-u_K})$}}
    \put(100,-59){\line(0,1){28}}
    \put(102,-42){\makebox(0,0)[l]{\scriptsize$\h$}}
    \put(100,-65){\makebox(0,0){$K$\rlap{${}=\QD$}}}
    \put(100,-89){\line(0,1){18}}
    \put(102,-80){\makebox(0,0)[l]{\scriptsize$2$}}
    \put(100,-95){\makebox(0,0){$\Q$}}
  \end{picture}
  }}
  \caption{Lattice of ray class fields for dimension $d=4p$.  Here $i$
    and $\frac{1+i}{\sqrt{2}}=e^{i\pi/4}$ denote primitive fourth and
    eighth roots of unity, respectively. The relations between the
    fields are discussed in Section \ref{sec:fields}.
    We compute Stark units for fields isomorphic to
    $K^{\mathfrak{p}\mathfrak{j}}$ and $K^{2\mathfrak{p}\mathfrak{j}}$.
    \label{fig:subfield_lattice}}
\end{figure}

\subsection{The exact sequence of global class field theory}\label{glob}
We need to introduce a small amount of notation and tools from global class field theory. 
Let $\mm_0$ be any {integral ideal} of the ring $\zk$, and let
$\mm_\infty$ denote some---possibly empty---subset of $\{ \jj,
\jj^{\tau} \}$.  The formal product $\mm = \mm_0 \mm_\infty$ is a
\emph{modulus}.  
In view of the absence of any standardised notation in the literature, 
we shall just write $K^{\mm}=K^{\mm_0\mm_\infty}$ for the ray
class field of $K$ for the modulus $\mm$, and $F(\alpha)$ for the
extension of any field $F$ by an algebraic number or indeterminate
$\alpha$.

We state the exact sequence of global class field theory (see
eq.~(2.7) in Ref.~\cite{cohenstev}, Theorem 1.7 in Ref.~\cite{milne},
or \S0 in Ref.~\cite{tate}).  With \emph{any} number field $K$ as base
field, and \emph{any} modulus $\mm = \mm_0 \mm_\infty$, the following
sequence (defining the map $\psi$) is exact:
\begin{equation}\label{globcft}
1 \rightarrow U_1^{\mm} \rightarrow {\zkx} \xrightarrow{\psi } \mg{\mm_0} \times{\{\pm1\}}^{\# \mm_\infty } 
	\rightarrow \Gal{K^\mm}{K} \rightarrow \cK   \rightarrow 1,
\end{equation}
where ${\# \mm_\infty }$ is the number of real infinite places in
$\mm$.  Note that ${ \mm_\infty }$ is a strict set (\ie, not a multiset) 
so that any infinite place may only appear once or not at all. 
The term $\ker\psi = U_1^\mm$ is the subgroup of the global
units ${\zkx}$ which are simultaneously congruent to $1$ modulo
$\mm_0$ and positive at the real places in $\mm_\infty$. 

In the case
of real quadratic fields, by Dirichlet's unit theorem~\cite[Theorem 5.1]{milneANT} the $\Z$-rank of
the unit group is $1$ 
and so this kernel has $\Z$-rank one. 
Moreover, it is
torsion-free; except possibly when the residue class ring $\ag{ \mm_0
}$ has characteristic $2$.  
As explained in~\cite{ABGHM}, the unit group $\zkx$ of $K$ is
generated by $-1$ and a fundamental unit $u_K=(n_1+\sqrt{n_1^2+4})/2$, 
where $n_1$ once more is the minimal positive 
integer such that the expression $n_1^2+4$ ($ = d_1+1$) 
under the square root sign has square-free part $D$. 
Taking the positive square root, we have
$\jj(u_K)>1$ (and therefore $\jj^\tau(u_K)<1$ in our negative norm 
cases where our dimensions $d$ are always of the form $n^2+3$).  

Specialising to the cases considered in this paper and its predecessor~\cite{ABGHM}, 
where $d$ is of the form $n^2+3$, 
the first totally positive power $u_D$ of the
fundamental unit $u_K = (n_1+\sqrt{n_1^2+4})/2$ is always $u_D = u_K^2$. 
When moreover $d=n^2+3=4p$, the rational prime $p$ splits over
$K$, \ie, the ideal $(p)$ factors into prime ideals $\pp$ and
$\overline{\pp}=\pp^\tau$, as already noticed in eq.~\eqref{eq:split}
above.  

For ease of notation later on we extract a pair of short exact sequences from the middle
of~\eqref{globcft} as follows: 
\begin{equation}\label{snort}
1 \rightarrow {\zkx} / U_1^{\mm}  \xrightarrow{\psi}  \mg{\mm_0} \times  {\{\pm1\}}^{\# \mm_\infty}
	\rightarrow  \coker\psi    \rightarrow 1 , 
\end{equation}
where the tail fits back into~\eqref{globcft} via
\begin{equation}\label{snortlet}
1 \rightarrow \coker\psi \rightarrow \Gal{K^\mm}{K} \rightarrow \cK   \rightarrow 1 . 
\end{equation}
So $\Gal{K^\mm}{K}$ is an abelian group extension of the ideal class group $\cK$ by $\coker\psi$, 
the image of the multiplicative group of the ray residue ring modulo the global units.

\subsection{Required hypotheses from the Stark conjectures}

We need to clarify the aspects of Stark's programme of
conjectures~\cite{StarkIII} which we shall exploit in our
construction.  For more detailed explanations surrounding their
application to the $n^2+3$ subset of the SIC problem, as well as
linking our notation to the literature, we point to Section IV (C)
of~\cite{ABGHM}.  For this paper, we need to assume the following
three hypotheses, which are just Hypotheses 2, 3, 4 from \cite{ABGHM}.
As above, the notation $\mm_0$ refers to a generic finite modulus,
and $\jj$ is the specific embedding defined at the beginning of
Section~\ref{sec:numbertheory}. 
We need to designate an involution $\sigt \in \Gal{K^{\mm_0\jj}}{H_K}$, 
following the notation in~\cite{ABGHM}, 
which acts as complex conjugation for every 
complex embedding of $K^{\mm_0\jj}$. \label{gammaray}

\begin{hypothesis}[\S4, p.~74 in Ref.~\cite{StarkIII}]\label{S1}
   The Galois element $\sigt$ induces complex conjugation in the
   complex embeddings of $K^{\mm_0\jj}$.  Since it is also algebraic
   inversion, it forces the Stark units in $K^{\mm_0\jj}$ to lie on
   the unit circle in their complex embeddings.
\end{hypothesis}
\noindent We will refer to these complex numbers as \emph{Stark phase units}.

\begin{hypothesis}[Theorem 1 (i) in Ref.~\cite{StarkIV}]\label{tood}
  In their real embeddings, the Stark units $\es$ are all positive.
\end{hypothesis}

\begin{hypothesis}[Stark/Tate~\rm{`over-$\Z$': Conjecture in Ref.~\cite{roblot2}}]
\label{sqaw}
  The extension $K^{\mm_0\jj}(\qea)$ of $K^{\mm_0\jj}$ obtained by
  adjoining the square root of any one of the $\es$ is
  itself an abelian extension of $K$.
\end{hypothesis}

\subsection{Structure of the tower of fields $K^{4\pp\jj}/K^{2\pp\jj}/K^{\pp\jj}/\HK/K$}\label{sec:fields}
The following ancillary results provide the class field-theoretic
backbone of the algorithms to be described in the succeeding sections
of the paper.  Unless stated otherwise, we assume everywhere in this
section that we are working with a dimension $d$ of the form $n^2+3$
which is $4$ times a prime number $p$.  As noted in~\cite{Anti1} and
equation~(3) of~\cite{ABGHM}, \emph{any} prime $p>3$ which divides
into such a dimension must satisfy $p\equiv 1\bmod 3$.

When writing conductors or moduli as principal ideals of $\zk$, the
ideal generated by an element $e \in\zk$ will be denoted by any one
of $e\zk = (e) = e$, depending on the context.

\subsubsection{The extensions $K^{2\jj}/\HK$, $L/K$ and $L\HK/\HK$}\label{grass}
\begin{proposition}\label{hydra1}
All four ray class fields $K^{2}$, $K^{2\jj}$, $K^{2\jj^\tau}$,
$K^{2\jj\jj^\tau}$ of $K$ with finite part of the modulus equal to
$(2) = 2\zk$ are isomorphic to the Hilbert class field $\HK$.
\end{proposition}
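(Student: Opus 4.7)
The plan is to apply the exact sequence~\eqref{snortlet} with $\mm$ taken to be the largest of the four moduli, namely $(2)\jj\jj^\tau$, and show that the associated homomorphism $\psi$ of~\eqref{snort} is surjective; this collapses the cokernel and forces the corresponding ray class field to coincide with $\HK$. The remaining three fields are then squeezed between $\HK=K^{(1)}$ and $K^{(2)\jj\jj^\tau}=\HK$ by the usual monotonicity of ray class fields in their moduli.

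First I would analyse the structure of $(2)$ in $\zk$. Since $d = n^2+3$ is a multiple of $4$, the integer $n$ is odd and hence $d+1 = n^2+4 \equiv 5\bmod 8$. Writing $d+1 = f^2 D$ with $D$ squarefree forces $f$ and $D$ to be odd, so $f^2\equiv 1\bmod 8$ and therefore $D\equiv 5\bmod 8$. This is precisely the congruence under which $2$ remains inert in $K=\QD$: $(2)$ is itself a prime ideal of $\zk$ with residue field~$\mathbb{F}_4$, and so $\mg{(2)}\cong\mathbb{F}_4^\times$ is cyclic of order~$3$. The target of $\psi$ is thus a group of order $12$.

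Next I would compute $\psi$ on the two generators of $\zkx = \langle -1, u_K\rangle$. For $-1$: it is congruent to $1$ modulo $(2)$ and has signature $(-1,-1)$ at the two real places, giving $\psi(-1) = (1,-1,-1)$. For the fundamental unit $u_K = (n_1+f_1\sqrt D)/2$, I would rewrite it using the integral generator $\theta=(1+\sqrt D)/2$ (so $\sqrt D = 2\theta-1$) as $u_K = (n_1-f_1)/2 + f_1\theta$; with $n_1,f_1$ both odd and with $D\equiv 5\bmod 8$ forcing $\theta^2+\theta+1\equiv 0\bmod 2$, one sees that $u_K\bmod (2)$ lies in $\mathbb{F}_4\setminus\mathbb{F}_2$ and so generates $\mg{(2)}$. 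Since $u_D=u_K^2$ is by construction the \emph{first} totally positive power of $u_K$, the norm of $u_K$ is $-1$; combined with $\jj(u_K)>1$, this yields $\psi(u_K) = (g,+1,-1)$ for some generator $g$ of $\mg{(2)}$. The subgroup of $\mg{(2)}\times\{\pm 1\}^2$ generated by $\psi(u_K)$ and $\psi(-1)$ then contains an order-$3$ element in the finite coordinate and realises all four sign patterns in the infinite coordinates, so it is the whole target.

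The exact sequence~\eqref{snortlet} now gives $\Gal{K^{(2)\jj\jj^\tau}}{K}\cong\cK$, that is $K^{(2)\jj\jj^\tau}=\HK$; the other three moduli divide $(2)\jj\jj^\tau$ and are divisible by $(1)$, so they trap their ray class fields between $\HK$ and $\HK$. The main obstacle I anticipate is the careful bookkeeping of the infinite part: one has to confirm, from the property that $u_K^2$ is the first totally positive power of $u_K$, that $\mathrm{Nm}(u_K)=-1$ and to reconcile the resulting signature $(+1,-1)$ with the chosen ordering of the embeddings $\jj,\jj^\tau$, so that the two generators $-1$ and $u_K$ can genuinely be seen to hit every element of a group of order~$12$ rather than lying in a proper subgroup through a notational accident.
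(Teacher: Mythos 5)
Your proof is correct and follows essentially the same strategy as the paper's: establish that $2$ is inert (via $D\equiv 5\bmod 8$), show that $u_K$ generates the cyclic group $\left(\zk/(2)\right)^\times\cong C_3$, and use the mixed signature of $u_K$ (from $\mathrm{Nm}(u_K)=-1$) to show $\psi$ surjects onto the full target, forcing $\coker\psi$ to vanish. Your reduction to the largest modulus $(2)\jj\jj^\tau$ and the subsequent squeeze via monotonicity of ray class fields is a minor organizational variant of what the paper does, and your alternative verification that $u_K$ generates the residue field (via the integral generator $\theta$ rather than the minimal polynomial $X^2-n_1X-1\bmod 2$) is equivalent.
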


Note the difference between our $n^2+3$
cases, where the fundamental units have norm $-1$, and all other
cases, where the fundamental units have norm $+1$.  In the latter
situation the order of exactly two of the four ray class groups in the
proposition will always be a factor of two bigger than the class number (the
remaining two will be isomorphic to the class group, as in the
proposition). 

\begin{proof}
Note first of all that this result makes no use of the value of $d = d_k(D)$; merely of $D$. 
A glance at the exact sequence \eqref{globcft} shows that the statement of the proposition is
equivalent---when $\mm_\infty \in \bigl\{ \phi$, $\{\jj\}$, $\{\jj^\tau\}$,
$\{\jj, \jj^\tau\}\bigr \}$ and $\mm_0 = (2)$---to $\psi$ being surjective.
Now since $D\equiv5\bmod8$ we know by standard arguments (see for
example Lemma~3 of~\cite{ABGHM}) that the prime $2$ is inert in the
extension $K/\Q$, and so $\left( \zk / {(2)} \right)^\times \cong
C_3$; $C_N$ being the symbol for a cyclic group of order $N$.  

Moreover, we now show that this group is generated 
by the image of $u_K = \frac{n_1+\sqrt{n_1^2+4}}{2}$. 
The minimal polynomial for $u_K$ 
over $\Q$ is $X^2 - n_1X - 1$, where $n_1$
is as above.  
In our case $n_1$ is odd, so modulo $2$ this becomes
just $X^2+X+1$.  If $u_K$ itself were $\equiv1\bmod 2\zk$ then
$u_K^2+u_K+1$ would not be zero modulo $2\zk$; so the image of $u_K$
must generate the cubic cyclic component, as asserted.

It remains to show that the $2$-primary part of the image of $\psi$
maps onto the component ${\{\pm1\}}^{\# \mm_\infty }$ in
eq. \eqref{globcft}.  But we have assumed that the norm of the
fundamental unit is $-1$; hence the fundamental unit is of mixed
signature.  Hence all four possible combinations of signs
$(\pm1,\pm1)$ occur (depending on the real places in the conductor)
for the odd powers of $\pm u_K$.
\end{proof}

As in~\cite{ABGHM}, we write $L = K(\sqrt{-u_K})$. 
The extension $L/K$ is properly quadratic because it is \emph{inert} over $\jj$, in 
Gras' terminology~\cite{roblot2,gras}: meaning that it becomes complex. 
For reference, we explain briefly this terminology and its 
more customary alternative. 
In \cite{gras} a 
real place which remains real in the infinite places above it is 
said to `split completely'. 
On the other hand, in Hasse's more standard language wherein the latter `split' 
extension would be said to be `unramified', our extension is~`ramified'. 

Since the Hilbert class field $\HK$ of $K$ is totally real, it 
follows that the compositum $L  \HK$ is equal to $\HK(\sqrt{-u_K})$ 
and is also a proper quadratic extension of $\HK$. 
Also following~\cite{ABGHM} we define, for each $k \geq 1$, 
\begin{equation}\label{norxi}
\xi_k = \sqrt{x_0} = \sqrt{-2-\sqrt{d_k+1}} ,
\end{equation}
reserving again the notation $n_k = \tr u_K^k$ for the positive integer 
such that $u_D^k+u_D^{-k}+1 = d_k = n_k^2+3$. 

It was observed in \cite{ABGHM} that the extension of $K$ generated by the
$\xi_k$ is independent of $k$, so that for a fixed $D$, we may simply focus on $\xi_1$. 
For completeness, we provide an argument here. 
For every odd $\ell \in \N$ we recall that 
\begin{equation}
  u_K^\ell  =  \frac{ n_\ell + \sqrt{n_\ell^2+4} }{2}
  \qquad\text{and}\qquad
   u_K^{-\ell}  =  -\frac{ n_\ell - \sqrt{n_\ell^2+4} }{2},
\end{equation}
and so in particular, 
\begin{equation}
\sqrt{d_\ell+1}  =  \sqrt{n_\ell^2+4} =  u_K^\ell  + u_K^{-\ell} .
\end{equation}
By Kummer theory we need only show that the ratio $\xi_\ell^2 / \xi_1^2$
is a square in $K$: 
\begin{alignat*}{5}
  \frac{\xi_\ell^2}{\xi_1^2}
  & {}=  \frac{-2-\sqrt{d_\ell+1}}{-2-\sqrt{d_1+1}} \\
  & {}=  \frac{ 2 + u_K^\ell + u_K^{-\ell} }{ 2 + u_K + u_K^{-1} },&\quad&\text{from above,} \\
  & {}=  \frac{1}{u_K^{\ell-1}} \times \frac{ u_K^{2\ell} + 2 u_K^\ell + 1 }{ u_K^2 + 2u_K + 1 },&&\text{by multiplying top and bottom by $u_K^\ell$,} \\ 
  & {}=  \left(  \frac{1}{u_K^{\frac{\ell-1}{2}}} \times   \frac{ u_K^{\ell} +1  }{ u_K + 1  } \right)^2 , &&\text{since $\ell$ is odd,}\\
  & {}\in  (K^\times)^2, && \text{as required.}
\end{alignat*}

Moreover, $L = K(\xi_1) = K(\qu)$, the last equality holding because
$\xi = \qu - \frac{1}{\qu};$ and conversely: $\qu =
\frac{u_K-1}{n}\xi.$ In particular, the discriminant of the
extension $L/K$ must divide into $4\zk$.

\begin{proposition}\label{hydra2}
The quadratic extension $L  \HK = \HK(\sqrt{-u_K})$ of the Hilbert class 
field $\HK$ equals the ray class field $K^{4\jj}$ of $K$ of modulus $4\jj$.  
\end{proposition}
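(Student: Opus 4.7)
The plan is to show $L\HK \subseteq K^{4\jj}$ by a conductor computation, and then match degrees over $K$ so that the inclusion becomes equality.

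For the containment, $L = K(\qu)/K$ is quadratic with discriminant dividing $\mathrm{disc}(X^2 + u_K) = -4u_K\zk = (4)$ (since $u_K \in \zkx$), so the finite part of the conductor $\mathfrak{f}(L/K)$ divides $(4)$. At the infinite places, $\jj(u_K) > 0$ makes $L$ complex at $\jj$, while $\jj^\tau(u_K) = -1/\jj(u_K) < 0$ makes $L$ real at $\jj^\tau$. Hence $\mathfrak{f}(L/K)$ divides $4\jj$, so $L \subseteq K^{4\jj}$; combined with $\HK = K^{(1)} \subseteq K^{4\jj}$, we obtain $L\HK \subseteq K^{4\jj}$.

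For the degrees, $[L\HK : K] = 2\h$, since $L\HK / \HK$ is properly quadratic: $\qu$ is imaginary at $\jj$ while $\HK$ is totally real. To compute $[K^{4\jj}:K]$ I apply the exact sequence~\eqref{globcft} with $\mm = 4\jj$. Because $2$ is inert in $K$ (as $D \equiv 5 \bmod 8$), $\zk/4\zk$ has a unique maximal ideal of index $4$, giving $|(\zk/4\zk)^\times| = 12$ and a target group of order $24$ for $\psi$. The image is generated by $\psi(-1) = (-1,-1)$ and $\psi(u_K) = (\ol{u_K}, +1)$; the nontrivial power of the first has sign $-1$ while every power of the second has sign $+1$, so the two cyclic subgroups intersect trivially and $|\mathrm{image}(\psi)| = 2 \cdot \mathrm{ord}(u_K \bmod 4\zk)$. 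Showing this order equals $6$ yields $|\coker\psi| = 2$, hence $[K^{4\jj}:K] = 2\h = [L\HK:K]$, and the containment above upgrades to equality.

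The crux --- and the main obstacle --- is the sharp order computation $\mathrm{ord}(u_K \bmod 4\zk) = 6$, i.e.\ ruling out $3$. From the minimal polynomial $u_K^2 = n_1 u_K + 1$ one obtains $u_K^3 \equiv (n_1^2+1)u_K + n_1 \pmod{4\zk}$; a short parity analysis, handling both the half-integer case $u_K = (n_1 + f\sqrt{D})/2$ with $n_1,f$ odd and the integer case $u_K = m + g\sqrt{D}$ (using $N(u_K) = -1$ in each), shows that the $\sqrt{D}$-coefficient of $u_K^3 - 1$ is nonzero modulo $4\zk$. Since Proposition~\ref{hydra1} already gives that $u_K$ has order $3$ in $(\zk/2\zk)^\times$, the order in $(\zk/4\zk)^\times$ must be exactly $6$.
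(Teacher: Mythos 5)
Your proof is correct, but it follows a genuinely different route from the paper's.  The paper simply cites Proposition~8 (A)(III) of~\cite{ABGHM} (with a sketched alternative that compares the exact sequence~\eqref{snort} for the two moduli $4\jj$ and $2\jj$ via the snake lemma and Proposition~\ref{hydra1}).  You instead work directly at the single modulus $4\jj$: the discriminant bound $\mathrm{disc}(X^2+u_K)=(4)$ plus the signature of $-u_K$ at the two real places gives the containment $L\HK\subseteq K^{4\jj}$, and then you pin down $[K^{4\jj}:K]=2\h$ from~\eqref{globcft} by computing $|(\zk/4\zk)^\times|=12$ and establishing $|\im\psi|=2\cdot\ord(u_K\bmod 4\zk)=12$.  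This is more self-contained than the paper's citation, at the price of the final parity computation.  That crucial step does go through: with $u_K=(n_1+f\sqrt D)/2$, $n_1,f$ odd and $D\equiv5\bmod8$, writing $u_K^3-1=(n_1^2+1)u_K+(n_1-1)$ in the basis $\{1,(1+\sqrt D)/2\}$ gives $(n_1^2+1)f\equiv2\bmod4$ as the coefficient of $(1+\sqrt D)/2$, hence $u_K^3\not\equiv1\bmod4\zk$; combined with $3\mid\ord$ (from Proposition~\ref{hydra1}) and the exponent of $(\zk/4\zk)^\times\cong C_3\times C_2\times C_2$ being~$6$, this forces $\ord=6$.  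One small remark: your ``integer case'' $u_K=m+g\sqrt D$ never actually arises here, since $N(u_K)=-1$ together with $D\equiv5\bmod8$ forces the half-integer form; handling it is harmless but unnecessary.
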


\begin{proof}
This is case \rm (A) (III) \rm of Proposition 8 of~\cite{ABGHM} (see also Remark 9 (i) there). 
We could also prove it directly by writing the respective versions of
equation \eqref{snort} for the conductors $ 4\jj$ and $ 2\jj$, 
linking them by the natural homomorphisms induced by the 
divisibility relation between the conductors, and using the 
snake lemma~\cite[(II.28), p.120]{gelman} together with 
Proposition~\ref{hydra1};
see also Proposition~\ref{ram2} in the Appendix.
\end{proof}

\subsubsection{The extensions $K^{\pp\jj}/\HK$ and $K^{2\pp\jj}/\HK$}
We note that $\pp=\bigl((\qd1 + 1)/2\bigr)$, by direct calculation: 
its $\Gal{K}{\Q}$-conjugate $\ol\pp$ is a distinct ideal ($p = d/4$ 
cannot be ramified in $K/\Q$, since the discriminant $D$ of $K/\Q$ is coprime to $p$) 
which multiplies with it to give the ideal $\pp\ol\pp = p\zk$.

\begin{proposition}\label{hydra3}
The extension $K^{\pp\jj}/\HK$ is cyclic of degree $(p-1)/3\ell$. 
\end{proposition}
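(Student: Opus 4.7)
The plan is to apply the exact sequence of global class field theory~\eqref{globcft} with modulus $\mm=\pp\jj$; by~\eqref{snortlet} this reduces the claim to the computation of $\coker\psi$, where $\psi\colon\zkx\to\mg{\pp}\times\{\pm1\}$ is the map of the sequence and $\Gal{K^{\pp\jj}}{\HK}\cong\coker\psi$.  Since $\zkx=\langle-1\rangle\times\langle u_K\rangle$ with $\jj(u_K)>0$ and $\jj(-1)<0$, we have $\psi(u_K)=(u_K\bmod\pp,+1)$ and $\psi(-1)=(-1,-1)$.

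To extract $\coker\psi$ I would introduce the surjective homomorphism $\phi\colon\mg{\pp}\times\{\pm1\}\to\mg{\pp}$ defined by $\phi(x,\epsilon)=x\epsilon$; direct inspection gives $\ker\phi=\langle(-1,-1)\rangle=\langle\psi(-1)\rangle\subseteq\mathrm{im}\,\psi$, so by the third isomorphism theorem $\coker\psi\cong\mg{\pp}/\langle u_K\bmod\pp\rangle$.  Since $\mg{\pp}\cong\mathbb{F}_p^\times$ is cyclic of order $p-1$, this quotient is cyclic of order $(p-1)/\ord{\pp}{u_K}$, so the cyclicity claim is automatic; only the identity $\ord{\pp}{u_K}=3\ell$ remains.

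Starting from $d_\ell=u_K^{2\ell}+u_K^{-2\ell}+1=4p\equiv0\bmod\pp$ and multiplying by $u_K^{2\ell}$ yields $u_K^{4\ell}+u_K^{2\ell}+1\equiv0\bmod\pp$, so $u_K^{2\ell}$ is a primitive cube root of unity mod $\pp$.  A parity argument shows that $\ell$ must be odd: if $\ell$ were even then $u_K^\ell$ would have norm $+1$ and $(u_K^\ell+u_K^{-\ell})^2=d_\ell+1=f^2D$ would force $D$ to be a perfect square, contradicting squarefreeness of $D>1$.  Hence $u_K^\ell+u_K^{-\ell}=+f\sqrt{D}$ (by positivity at~$\jj$), and in the factorisation
\[
u_K^{4\ell}+u_K^{2\ell}+1=(u_K^{2\ell}+u_K^\ell+1)(u_K^{2\ell}-u_K^\ell+1)
\]
the second factor equals $u_K^\ell(f\sqrt{D}-1)$.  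A direct computation shows that $(f\sqrt{D}-1)=-(1-f\sqrt{D})$ generates the ideal $2\ol{\pp}$; since $\pp\neq\ol{\pp}$ ($p$ splits in $K$) and $\pp\nmid 2$ (odd residue characteristic), we conclude $f\sqrt{D}-1\notin\pp$.  Therefore the first factor vanishes mod $\pp$, giving $u_K^{3\ell}\equiv1\bmod\pp$ and hence $\ord{\pp}{u_K}\mid3\ell$.

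To exclude $\ord{\pp}{u_K}$ being a proper divisor $3g$ of $3\ell$ (with $g\mid\ell$, $g<\ell$): such an order would force $u_K^{2g}$ to be a primitive cube root of unity mod $\pp$ and hence $p\mid d_g$ with $d_g\in\{p,2p,3p\}$, which the structural analysis of the dimension tower in Appendix~\ref{tours} excludes.  The main obstacle is the $\pm1$ sign ambiguity in $u_K^{3\ell}\bmod\pp$: pinning it down to $+1$ rather than $-1$ rests on the clean ideal identity $(f\sqrt{D}-1)=2\ol{\pp}$, which in turn depends on the explicit factorisation $\pp=((1+f\sqrt{D})/2)$ of $p$ in $\zk$.
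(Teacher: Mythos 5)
Your proof is correct, but it takes a genuinely different route from the paper's. The paper derives Proposition~\ref{hydra3} as a corollary of Proposition~\ref{hydra4} by comparing the exact sequences~\eqref{snort} for the two conductors $\pp\jj$ and $2\pp\jj$, and Proposition~\ref{hydra4}'s degree statement in turn rests on a citation to Proposition~10 of~\cite{ABGHM}. You instead compute $\coker\psi$ directly for the modulus $\pp\jj$, which is a clean reduction: the auxiliary map $\phi(x,\epsilon)=x\epsilon$ collapses the sequence to $\coker\psi\cong\mg{\pp}/\langle u_K\rangle$, and everything then hinges on establishing $\ord{\pp}{u_K}=3\ell$. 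You prove that order identity from scratch, via the cyclotomic factorisation of $u_K^{4\ell}+u_K^{2\ell}+1$ and the observation that $(f\qD-1)=2\ol\pp$ lies outside $\pp$—which is exactly the splitting $p=\frac{f\qD+1}{2}\cdot\frac{f\qD-1}{2}$ that motivates the whole choice of modulus. What this buys you is a self-contained proof that does not lean on~\cite{ABGHM}; what the paper's approach buys is brevity by re-using Proposition~\ref{hydra4}, which is needed anyway.

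Two small points worth tightening. First, your parity argument for $\ell$ odd is correct and actually self-contained, though the paper reaches the same conclusion indirectly via Proposition~\ref{ellone} (which forces $D=5$ or $\ell=1$) combined with \eqref{zeroed} in Appendix~\ref{tours}. Second, the final exclusion step is more compressed than it should be: ``$d_g\in\{p,2p,3p\}$, which the structural analysis... excludes'' needs unpacking. The clean chain is: since $\ell$ is coprime to $6$, so is $g\mid\ell$, hence by \eqref{onetwo} one has $d_g\mid d_\ell=4p$ (eliminating $3p$); by~\eqref{ddivd} one also has $d_1\mid d_g$; if $\ell>1$ then $D=5$ by Proposition~\ref{ellone}, so $d_1=4$, and $4\mid d_g$ together with $p\mid d_g$ forces $4p\mid d_g$, contradicting $d_g<d_\ell=4p$; if $\ell=1$ there is no proper $g$ to exclude. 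Your citation of Appendix~\ref{tours} is apt but should name \eqref{onetwo}, \eqref{ddivd} and Proposition~\ref{ellone} explicitly, since none of the Appendix propositions alone states what you need.
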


\begin{proof}
Both assertions will follow from Proposition~\ref{hydra4}, upon 
comparison of the exact sequences \eqref{snort} for the two 
respective conductors, since the only term introduced by the 
extra factor of $2$ in the conductor is a copy 
of $\left( \zk/(2) \right)^\times \cong C_3$, and the 
Galois group here is a quotient of the (cyclic) one below 
and consequently must itself be cyclic. 
\end{proof}

\begin{proposition}\label{hydra4}
The extension $K^{2\pp\jj}/\HK$ is cyclic of degree $(p-1)/\ell$.  
\end{proposition}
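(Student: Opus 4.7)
The plan is to apply the exact sequence \eqref{snort} of global class field theory to the modulus $\mm = 2\pp\jj$, combined with the snake lemma against the simpler modulus $2\jj$, to show that $\Gal{K^{2\pp\jj}}{\HK} \cong \coker\psi_{2\pp\jj}$ (via \eqref{snortlet}) is a cyclic quotient of $\mg{\pp}$ of the asserted order. First, by the Chinese Remainder Theorem $\mg{2\pp}\cong\mg{2}\times\mg{\pp}\cong C_3\times C_{p-1}$, using inertness of $2$ in $K/\Q$ (established in the proof of Proposition~\ref{hydra1}) and the fact that $\pp$ has residue field of size $p$. Comparing the sequences \eqref{snort} for the pair $2\jj\mid 2\pp\jj$ with surjective vertical reductions, and invoking $\coker\psi_{2\jj}=1$ from Proposition~\ref{hydra1}, the snake lemma produces the exact sequence
\[
  0 \lra U_1^{2\jj}/U_1^{2\pp\jj} \lra \mg{\pp} \lra \coker\psi_{2\pp\jj} \lra 0,
\]
so $\coker\psi_{2\pp\jj}$ is automatically cyclic as a quotient of $\mg{\pp}$.

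Next I would compute $U_1^{2\jj}$ explicitly. The map $\zkx\to\mg{2}\times\{\pm1\}$ sends $-1\mapsto(1,-1)$ (since $-1\equiv1\bmod 2$) and $u_K\mapsto(\omega,+1)$ with $\omega$ a generator of $C_3$, so its kernel is $\langle u_K^3\rangle=U_1^{2\jj}$. Hence $U_1^{2\jj}/U_1^{2\pp\jj}$ embeds into $\mg{\pp}$ as $\langle u_K^3\bmod\pp\rangle$, and $|\coker\psi_{2\pp\jj}|=(p-1)/m$ with $m=\ord{\pp}{u_K^3}$. The proposition is therefore equivalent to the identity $m=\ell$.

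The crucial final step is to prove $m=\ell$. Using $\pp=\bigl((\sqrt{d+1}+1)/2\bigr)$ together with the identity $\sqrt{d+1}=u_K^\ell+u_K^{-\ell}$ (valid for $\ell$ odd, which is always the case when $d=4p$), reduction modulo $\pp$ gives $u_K^{2\ell}+u_K^\ell+1\equiv 0$, so $u_K^\ell$ is a primitive cube root of unity in the residue field; in particular $u_K^{3\ell}\equiv 1\bmod\pp$ and therefore $m\mid\ell$. The main obstacle is excluding $m<\ell$: such an $m$ would make $u_K^m$ itself a primitive cube root of unity modulo $\pp$, forcing $\pp\mid u_K^m+u_K^{-m}+1$ and thence $p\mid d_m$ after squaring the congruence $u_K^m+u_K^{-m}\equiv-1\bmod\pp$. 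Since $d_m<d_\ell=4p$, this would require $d_m\in\{p,2p,3p\}$; each case yields a Pell-type Diophantine relation between $n_m$ and $n_\ell$ (namely $n_\ell^2-4n_m^2=9$, $n_\ell^2-2n_m^2=3$, and $n_\ell^2-12k^2=1$ with $n_m=3k$), and a direct case-by-case analysis shows that the second is impossible modulo~$8$, while the (finitely many or Pell-generated) solutions of the first and third always force $d_\ell+1$ to be square-free, placing $d_\ell$ at position $1$ of its own dimension tower and so making the hypothetical $m<\ell$ within that tower unrealisable. Hence $m=\ell$ and $|\coker\psi_{2\pp\jj}|=(p-1)/\ell$, completing the proof.
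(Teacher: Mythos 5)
Your use of the snake lemma to relate the sequences \eqref{snort} for moduli $2\jj$ and $2\pp\jj$, together with $\coker\psi_{2\jj}=1$ from Proposition~\ref{hydra1}, is a clean way to obtain cyclicity of $\coker\psi_{2\pp\jj}$ as a quotient of $\mg{\pp}\cong C_{p-1}$; this is tidier than the paper's route, which constructs an explicit element of order $p-1$ after listing the six elements of $\im\psi$ inside $C_2\times C_3\times C_{p-1}$. Your reduction of the degree statement to the single identity $m := \ord{\pp}{u_K^3}=\ell$ (equivalently $\ord{\pp}{u_K}=3\ell$) is also correct, and the paper treats exactly this quantity by citing Proposition~10 of~\cite{ABGHM}, and by reducing to the case $\ell=1$ (or $D=5$ separately) via Proposition~\ref{ellone}.

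The gap is in your final step. After correctly deriving $p\mid d_m$ for the hypothetical $m<\ell$, $m\mid\ell$, and correctly eliminating $d_m=2p$ by a congruence modulo $8$, you assert without proof that the solutions of $n_\ell^2-4n_m^2=9$ and of the Pell equation $n_\ell^2-12k^2=1$ ``always force $d_\ell+1$ to be square-free.'' For the first equation the claim is checkable because $(n_\ell-2n_m)(n_\ell+2n_m)=9$ has the unique positive solution $(n_\ell,n_m)=(5,2)$, giving $d_\ell+1=29$; but for the Pell family there are infinitely many solutions $(7,2),(97,28),(1351,390),\dots$, and there is no general reason why $12k^2+5$ should be square-free along that sequence, so the assertion is an unproven Diophantine conjecture. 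Moreover, square-freeness of $d_\ell+1$ is not the right invariant to invoke: what you actually need is $\ell=1$ (so that $m<\ell$ is vacuously impossible), and Proposition~\ref{ellone} already supplies this whenever $D\ne5$, with no Diophantine analysis. That disposes of every hypothetical counterexample except possibly those with $D=5$, for which one must separately verify that neither $n_\ell^2=4n_m^2+9$ nor $n_\ell^2=12k^2+1$ can be solved by any $n_\ell=\tr(u_K^\ell)$ in the $D=5$ tower (the paper handles $D=5$ by appealing to the arithmetic of the tower in Appendix~\ref{tours}, specifically that $\ell$ is odd and coprime to $3$, together with the cited result from~\cite{ABGHM}). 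As written, your proof therefore does not close the case $D=5$, and the general square-freeness claim would need an independent argument even to apply in the cases where it is true.
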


\begin{proof}
The degree is given by Proposition 10 {\rm(II)} of~\cite{ABGHM}.
Suppose first of all that $D\neq5$, which by Proposition~\ref{ellone} means
that we may assume $\ell=1$.  To prove that the Galois group in
question is cyclic, we must produce an element of
order $\frac{p-1}{\ell} = p-1$ inside the cokernel of $\psi$
in \eqref{globcft} (the exact sequence being expressed 
with conductor $\mm = 2\pp\jj$).  The codomain
of $\psi$ is of the form $\{\pm1\} \times \left( \zk/(2)
\right)^\times \times \left( \zk/\pp \right)^\times \cong C_2 \times
C_3 \times C_{p-1}$;
see also the discussion after the statement of Proposition~\ref{ants}. 

Let $\lambda = \lambda + \pp$ be a generator of $\left( \zk/\pp
\right)^\times \cong C_{p-1}$. 
There are of course $\phi(p-1)$ choices for a generator, 
where $\phi$ is the ordinary Euler totient function; but 
without loss of generality we may
choose it to satisfy $\lambda^{\frac{p-1}{3}} \equiv u_K \bmod \pp$,
since $\ord{\pp}{u_K} = 3\ell = 3$ by applying the argument in the
proof of the same Proposition~10 of~\cite{ABGHM} cited above but using
(in that paper's notation) $\partial_\ell/2$ in place
of $\partial_\ell$.  
Notice further that independently of the choice
of the generator $\lambda$ we will have $\lambda^{\frac{p-1}{2}} = -1
+ \pp$.

Similarly, in the proof of Proposition~\ref{hydra1} above 
(or see Proposition 8~(A)~(III) of~\cite{ABGHM}) it
is shown that the restriction of the image of $u_K$ generates the
group $\left( \zk/(2) \right)^\times$; so from now on we shall use the
image $u_K + 2\zk$ of $u_K$ as a choice of generator for $\left(
\zk/(2) \right)^\times \cong C_3$.  Where it causes no confusion we
shall simply write the elements of $\zk$ to represent their images
under the respective reduction maps.

Putting all of this together, bearing in mind the choice of the real
infinite place $\jj$ which sends $u_K$ to a positive number as well as
the fact that the ring $\zk/(2)$ has characteristic $2$, the image
of $\psi$ therefore contains the diagonal
embedding $\langle\psi(-1)\rangle = \langle(-1,1,-1)\rangle$ of the
second roots of unity $\{\pm1\}$ as well as a $C_{3\ell} = C_3$ term
generated by $\psi(u_K) = (1,u_K,u_K)$, since $U_1^{2\pp\jj} = \langle
u_K^{3\ell}\rangle = \langle u_K^{3}\rangle$.  So the elements of the
group $\im\psi$ inside the codomain as expressed above may be
denumerated as
\[
\im\psi = 
\{
(1,1,1), (-1,1,\lambda^{\frac{p-1}{2}}), (1,u_K,\lambda^{\frac{p-1}{3}}), 
(-1,u_K,\lambda^{\frac{5(p-1)}{6}}), (1,u_K^2,\lambda^{\frac{2(p-1)}{3}}), 
(-1,u_K^2,\lambda^{\frac{(p-1)}{6}})
\} ;
\]
that is, a cyclic group of order $6$ generated by either 
of $(-1,u_K,\lambda^{\frac{5(p-1)}{6}})$ or $(-1,u_K^2,\lambda^{\frac{(p-1)}{6}})$. 
Hence for example raising the element $(1,1,\lambda)$ to a power $r$ 
will land in the image of $\psi$ if and only if $r$ is divisible by $p-1$. 
This provides us with a cyclic subgroup of $\coker\psi$ of order $p-1$, as required. 

For the remaining case where $D=5$ we first observe that $\ell$ must
be odd and not divisible by $3$, by~\eqref{zeroed} in
Appendix~\ref{tours}.  So the argument above goes through virtually
unchanged, since raising to the power $\ell$ is an automorphism of
$C_6$.
\end{proof}

\subsubsection{The extensions $K^{\pp\jj}(\alpha_r)/K$ and $K^{2\pp\jj}(\beta_r)/K$}
Recall that we still work under the hypothesis that $d=n^2+3$ 
is of the form $4p$ for some (odd) prime $p$. 
We are now ready to prove the main result of this section.  Once again
the results of~\S 3A of~\cite{ABGHM} pretty much suffice to prove it;
however because this will be important in other contexts and it is a
relatively self-contained sub-case, we shall prove it more directly.
We recall in passing that $p\zk$ splits as $\pp \ol\pp$, and in the
extension $L/K$, $\pp$ splits and $\ol\pp$ remains inert
if $p\equiv1\bmod4$, and vice-versa if $p\equiv3\bmod4$ (see Lemma~13
of~\cite{ABGHM}).

\begin{proposition}\label{hydra5}
With notation as above, the square roots $\alpha_r$ of the Stark phase
units $\es$ of $K^{\pp\jj}$ are contained in the field $L K^{\pp\jj} = K^{4\pp\jj}$. 
\end{proposition}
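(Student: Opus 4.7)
The strategy is to invoke Hypothesis~\ref{sqaw} to bring the extension $K^{\pp\jj}(\alpha_r)/K$ within the reach of global class field theory, and then to bound its conductor carefully using the unit and positivity properties of~$\es$. By Hypothesis~\ref{sqaw} this extension is abelian over $K$, so it sits inside $K^{\mm}$ for some (minimal) modulus $\mm$. The plan is to show that $\mm \mid 4\pp\jj$, which places $K^{\pp\jj}(\alpha_r)$ inside $K^{4\pp\jj}$, and then to narrow the location down to $LK^{\pp\jj}$ by a uniqueness argument in a cyclic Galois group.

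Bounding the conductor splits into four local problems. First, since $K^{\pp\jj} \subseteq K^{\pp\jj}(\alpha_r)$, the conductor $\mm$ is divisible by the conductor of $K^{\pp\jj}/K$, which divides~$\pp\jj$. Second, $\es$ is a unit in $\Z_{K^{\pp\jj}}$, so by Kummer theory the quadratic extension $K^{\pp\jj}(\alpha_r)/K^{\pp\jj}$ is unramified at every finite prime not lying above~$2$; in particular it contributes no new ramification at $\pp$ or at primes away from~$2$. Third, by Hypothesis~\ref{tood} the unit~$\es$ is positive at every real embedding of $K^{\pp\jj}$, so $\alpha_r$ is real there and no archimedean ramification is introduced at~$\jj^\tau$. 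Fourth, at the unique prime $(2)$ of $\zk$ (inert, since $D\equiv 5\bmod 8$) one performs a local Kummer-theoretic computation at the completion of $K^{\pp\jj}$ above $(2)$ to bound the local conductor exponent by~$2$, giving a $(2)$-part dividing~$(4)$. Combining these four ingredients yields $\mm \mid 4\pp\jj$, hence $K^{\pp\jj}(\alpha_r) \subseteq K^{4\pp\jj}$.

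To refine this containment to $LK^{\pp\jj}$, observe that $\Gal{K^{4\pp\jj}}{K^{\pp\jj}}$ is a subgroup of the abelian group $\Gal{K^{4\pp\jj}}{K}$ of order~$6$, hence abelian of order~$6$, hence cyclic. A cyclic group of order~$6$ contains a unique subgroup of index~$2$, so $K^{4\pp\jj}$ contains a unique quadratic subextension of $K^{\pp\jj}$. By Proposition~\ref{hydra2} together with the identification $K^{4\pp\jj} = K^{2\pp\jj}(\sqrt{-u_K})$ from the diagram, this unique subextension is $K^{\pp\jj}(\sqrt{-u_K}) = LK^{\pp\jj}$. Since $[K^{\pp\jj}(\alpha_r):K^{\pp\jj}]\leq 2$, it is either equal to $K^{\pp\jj}$ (trivially contained in $LK^{\pp\jj}$) or coincides with the unique quadratic subfield~$LK^{\pp\jj}$; in either case $\alpha_r \in LK^{\pp\jj}$, as required.

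The main obstacle is the local analysis at the inert prime~$(2)$. Generic Kummer quadratic extensions at the $2$-adic completion of $K^{\pp\jj}$ can, in principle, carry conductor exceeding $(4)$, so the argument cannot treat $\es$ as an arbitrary $2$-adic unit: it must exploit the specific arithmetic structure of Stark phase units, either through a congruence of $\es$ modulo a suitable power of $(2)$ that follows from the Stark package, or by showing directly that the only abelian quadratic extensions of $K^{\pp\jj}$ compatible with Hypotheses~\ref{S1}--\ref{sqaw} are those whose $2$-part of the conductor is bounded by~$(4)$.
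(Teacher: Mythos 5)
Your strategy aligns with the paper's: invoke Hypothesis~\ref{sqaw} to place $K^{\pp\jj}(\alpha_r)$ inside a ray class field, then bound its conductor locally. The split into the $\pp$-part, the finite primes away from~$2$ (no new ramification by Kummer theory since $\es$ is a unit), the archimedean places (Hypothesis~\ref{tood}), and the prime above~$2$ is the right one. The gap is precisely the step you flag yourself at the end: the local bound at~$2$ is asserted but never established, and your concluding paragraph misdiagnoses it as possibly requiring Stark-specific input. In fact the bound is generic. Since $D\equiv5\bmod8$ the prime~$2$ is inert in $K/\Q$, and $K^{\pp\jj}/K$ is unramified above~$2$ because $p$ is odd; hence any completion of $K^{\pp\jj}$ at a prime above~$2$ is unramified over~$\Q_2$. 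For such a local field~$F$ and a unit $u\in\Z_F^\times$, the different of $F(\sqrt{u})/F$ divides $(2\sqrt{u})=(2)$, so the local conductor exponent is at most~$2$. No structure of~$\es$ beyond being a unit enters; this is exactly the content of Lemma~7 of~\cite{ABGHM}, which the paper cites to close this step. Your proof would be complete if you proved or cited that estimate rather than speculating that it might fail.

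Your concluding refinement --- that $\Gal{K^{4\pp\jj}}{K^{\pp\jj}}$ is cyclic of order~$6$, hence has a unique index-$2$ subgroup, so that $K^{\pp\jj}(\alpha_r)$ must coincide with $K^{\pp\jj}(\sqrt{-u_K})=LK^{\pp\jj}$ --- is correct and in fact sharper than the paper's own proof of this proposition, which only establishes $\alpha_r\in K^{4\pp\jj}$ and asserts at the outset an equality $LK^{\pp\jj}=K^{4\pp\jj}$ that a degree count (or Figure~\ref{fig:subfield_lattice} itself) shows to be off by a factor of~$3$. The containment $\alpha_r\in LK^{\pp\jj}$ is what the paper's Proposition~\ref{hydra6} supplies afterwards; your group-theoretic finish would deliver it directly once the $2$-adic conductor bound is in place.
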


\begin{proof}
The fact that $L K^{\pp\jj} = K^{4\pp\jj}$ follows 
from Proposition~\ref{hydra2} and the fact that the 
conductor of a compositum of fields is the lowest common multiple of the 
conductors of the fields; see~\cite[Proposition II.4.1.1]{gras}. 
So for the containment of the $\alpha_r$, it will be enough to show that 
the conductor of $K^{\pp\jj}(\alpha_r) / K$ divides into ${4\pp\jj}$. 
We know by Hypothesis \ref{sqaw} and class field theory, that this conductor 
must divide into ${2^t\pp\jj}$ for some minimal $t\geq0$.

Let $\alpha$ be any of the square roots of the Stark phase units
mentioned in the statement of the proposition.  By Kummer theory
with $K^{\pp\jj}$ as base field, the argument below will be independent of this
initial choice.  If $\alpha\in K^{\pp\jj}$ then there is nothing
further to prove; so we suppose that $K^{\pp\jj}(\alpha)/K^{\pp\jj}$
is a proper quadratic extension.  

Here we need to invoke 
Hypothesis~\ref{tood}, since we want all of the real places 
of $K^{\pp\jj}$ (that is, those above $\jj^\tau$) to remain real in 
the field $K^{\pp\jj}(\alpha)$, so that in 
particular the conductor of $K^{\pp\jj}(\alpha) / K$ 
still only contains the one real place $\jj$ of $K$. 
(As remarked in Section~\ref{grass}, in Gras' terminology~\cite{roblot2,gras} 
these real places `split completely' in the 
extension $K^{\pp\jj}(\alpha) / K^{\pp\jj}$, and 
therefore in $K^{\pp\jj}(\alpha) / K$; 
in most older references they would
be said to be `unramified'). 

Since $p=\pp\ol\pp$ is odd there is no ramification 
above the prime ideal $2\zk$ in the extension $K^{\pp\jj} / K$. 
Hence, using Lemma~7 of~\cite{ABGHM} with $F=K^{\pp\jj}$ and $u=\alpha$, 
in our case the absolute ramification index $e$ equals $1$ 
and we deduce by putting the local results together 
(as per the proof of that lemma) that the power of $2$ appearing in the  
conductor of $K^{\pp\jj}(\alpha) / K$, referred to above as $t$, must be $1$ or $2$. 
This proves that the conductor of $K^{\pp\jj}(\alpha_r) / K$ divides into ${4\pp\jj}$, as asserted. 
\end{proof}

\begin{proposition}\label{hydra6}
The numbers $\alpha_r\sqrt{x_0}$ lie in the field $K^{\pp\jj}$.
\end{proposition}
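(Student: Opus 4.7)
The plan is to use the Galois theory of the tower $K^{\pp\jj}\subseteq K^{2\pp\jj},\,K^{\pp\jj}(\sqrt{x_0})\subseteq K^{4\pp\jj}$ coming from Figure~\ref{fig:subfield_lattice} to show that $\alpha_r\sqrt{x_0}$ is fixed by every element of $G=\Gal{K^{4\pp\jj}}{K^{\pp\jj}}$. First I would note that $\sqrt{x_0}=\xi_1\in L$ (by the computation $\xi_1=\qu-1/\qu$ in Section~\ref{grass}), so by Proposition~\ref{hydra2} we have $\sqrt{x_0}\in L\HK\subseteq K^{4\pp\jj}$; combined with Proposition~\ref{hydra5}, both $\alpha_r$ and $\sqrt{x_0}$ lie in $K^{4\pp\jj}$, and $(\alpha_r\sqrt{x_0})^2=\es\cdot x_0\in K^{\pp\jj}$. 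So the extension $K^{\pp\jj}(\alpha_r\sqrt{x_0})/K^{\pp\jj}$ is at most quadratic and it suffices to show triviality, i.e.\ that every element of $G$ fixes the product.

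By the degrees read off the lattice, $|G|=6$ and $G\cong\Gal{K^{4\pp\jj}}{K^{\pp\jj}(\sqrt{x_0})}\times\Gal{K^{4\pp\jj}}{K^{2\pp\jj}}\cong C_3\times C_2$. Let $\rho$ generate the order-$3$ factor and $\sigma$ the order-$2$ factor. Since $\alpha_r^2=\es\in K^{\pp\jj}$, each of $\rho,\sigma$ sends $\alpha_r$ to $\pm\alpha_r$; as $\rho$ has order $3$ while $-1$ has order $2$, I conclude $\rho(\alpha_r)=\alpha_r$. Moreover $\rho$ fixes $\sqrt{x_0}$ pointwise (it fixes $K^{\pp\jj}(\sqrt{x_0})$), so $\rho(\alpha_r\sqrt{x_0})=\alpha_r\sqrt{x_0}$. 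On the other hand, $\sigma$ fixes $K^{2\pp\jj}$ but not $\sqrt{-u_K}$, giving $\sigma(\sqrt{x_0})=-\sqrt{x_0}$; hence $\sigma(\alpha_r\sqrt{x_0})=\alpha_r\sqrt{x_0}$ precisely when $\sigma(\alpha_r)=-\alpha_r$.

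The remaining (and, I expect, the genuinely hard) step is therefore to show $\sigma(\alpha_r)=-\alpha_r$, equivalently $\alpha_r\notin K^{\pp\jj}$. This is a statement about the conductor of $K^{\pp\jj}(\alpha_r)/K$: Proposition~\ref{hydra5} established it divides $4\pp\jj$, but here I must rule out the possibility that it divides $\pp\jj$. The natural strategy is a local analysis above the prime $2\zk$ (which is inert in $K$ by the argument in Proposition~\ref{hydra1}), refining Lemma~7 of~\cite{ABGHM}: one shows that $\es$ cannot be a square in the completion $K^{\pp\jj}_{(2)}$ by studying the $2$-adic valuation of $\es-1$ in conjunction with the Artin reciprocity behaviour of $\sigt$ from Hypothesis~\ref{S1} and the positivity from Hypothesis~\ref{tood}. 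Granting this non-triviality, both $\rho$ and $\sigma$ fix $\alpha_r\sqrt{x_0}$, so it lies in $K^{\pp\jj}$.
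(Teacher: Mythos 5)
Your Galois/Kummer reduction is correct and does match what the paper calls its ``Kummer theory argument'': by Propositions~\ref{hydra5} and~\ref{hydra2} both $\alpha_r$ and $\sqrt{x_0}$ lie in $K^{4\pp\jj}$, the product squares to $\es x_0\in K^{\pp\jj}$, and since $\Gal{K^{4\pp\jj}}{K^{\pp\jj}}$ is cyclic of order $6$ the only proper quadratic subextension of $K^{\pp\jj}$ inside $K^{4\pp\jj}$ is $K^{\pp\jj}(\sqrt{x_0})$. You rightly conclude that the proposition is \emph{equivalent} to $\alpha_r\notin K^{\pp\jj}$ (indeed, if $\alpha_r\in K^{\pp\jj}$ then $\alpha_r\sqrt{x_0}\notin K^{\pp\jj}$ and the claim would be false), and all the bookkeeping with $\rho$ and $\sigma$ is sound.

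The gap is that you never prove $\alpha_r\notin K^{\pp\jj}$; you write ``granting this non-triviality'' and offer only a sketch. But this non-triviality is the whole content of the proposition --- everything else is a formal consequence of facts already established in Propositions~\ref{hydra2}, \ref{hydra4}, \ref{hydra5}. The strategy you gesture at (controlling the $2$-adic valuation of $\es-1$) is not obviously available from Hypotheses~\ref{S1}--\ref{sqaw} as stated: Hypothesis~\ref{sqaw} would be satisfied trivially with $t=0$ if $\alpha_r$ happened to lie in $K^{\pp\jj}$, so these hypotheses do not by themselves force ramification of $K^{\pp\jj}(\alpha_r)/K^{\pp\jj}$ above $2$. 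The paper's own proof is a one-line deferral to Theorem~14 of~\cite{ABGHM}, which is precisely where this step is supposed to be handled; your proposal reconstructs the surrounding scaffolding faithfully but omits the load-bearing step.
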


\begin{proof}
This Kummer theory argument was observed in Theorem~14
of~\cite{ABGHM}, under Hypothesis~\ref{sqaw} of this paper.
\end{proof}

\begin{proposition}\label{hydra7}
The square roots $\beta_r$ of the Stark phase units of $K^{2\pp\jj}$
are contained in the field $LK^{2\pp\jj} = K^{4\pp\jj} $.
\end{proposition}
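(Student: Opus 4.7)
The plan is to mirror the proof of Proposition \ref{hydra5} almost verbatim, with $\pp$ replaced by $2\pp$. Two things need to be established: first, the equality $LK^{2\pp\jj} = K^{4\pp\jj}$ of fields on the right-hand side, and second, the containment of each square root $\beta_r$ in this field. The main novelty compared to Proposition \ref{hydra5} is that the base field $K^{2\pp\jj}$ now carries ramification above the rational prime $2$ coming from the factor $(2)$ in its modulus, which must be tracked carefully in the local calculation.

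For the equality of fields, I would invoke the standard fact that the conductor of a compositum is the least common multiple of the conductors of the factors (see \cite[Proposition II.4.1.1]{gras}). By Proposition \ref{hydra2}, $L\HK = K^{4\jj}$, which has conductor $4\jj$; combined with the conductor $2\pp\jj$ of $K^{2\pp\jj}$, the $\lcm$ is $4\pp\jj$, so $LK^{2\pp\jj} \subseteq K^{4\pp\jj}$. Equality of degrees then follows from the lattice diagram in Figure~\ref{fig:subfield_lattice}, or alternatively from the fact that $L \cap K^{2\pp\jj} = K$ (which can be read off by comparing conductors once more).

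For the containment of the $\beta_r$ in $K^{4\pp\jj}$, fix one square root $\beta = \sqrt{\es}$ of a Stark phase unit of $K^{2\pp\jj}$; by Kummer theory the conclusion is independent of this choice. Under Hypothesis~\ref{sqaw}, the extension $K^{2\pp\jj}(\beta)/K$ is abelian over $K$, so it has a well-defined conductor $\mm$. Hypothesis~\ref{S1} together with Hypothesis~\ref{tood} implies that $\es$ is a \emph{positive} real number at every real embedding of $K^{2\pp\jj}$ (those lying over $\jj^\tau$), so that all such real places remain real in $K^{2\pp\jj}(\beta)$; thus the infinite part of $\mm$ consists at most of the single place $\jj$. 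At the finite level, since $K^{2\pp\jj}(\beta)$ is an abelian extension of $K$ containing $K^{2\pp\jj}$, the finite part of $\mm$ is divisible by $2\pp$; and since $p = \pp\ol\pp$ is odd, no prime of $K^{2\pp\jj}$ above $\ol\pp$ can ramify when we extract a square root of a unit, so no new prime ideals enter the conductor. Hence $\mm$ divides $2^t\pp\jj$ for some minimal $t\geq 1$.

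The step I expect to require the most care is the bound $t \leq 2$, since, unlike the situation in Proposition \ref{hydra5}, the prime $2\zk$ of $K$ is already ramified in $K^{2\pp\jj}/K$ (the inertia coming from the $C_3$-factor discussed in the proof of Proposition~\ref{hydra4} is unramified at $2$, but the modulus itself forces some ramification). The cleanest route is to reduce to a local calculation at each prime $\mathfrak{P}$ of $K^{2\pp\jj}$ above $2\zk$, exactly as in Lemma~7 of \cite{ABGHM}: if $\es$ is a unit in the local ring $\Z_{\mathfrak{P}}$, then the extension $K^{2\pp\jj}_{\mathfrak{P}}(\sqrt{\es})/K^{2\pp\jj}_{\mathfrak{P}}$ is a Kummer extension of degree dividing $2$, and standard results on ramification of quadratic extensions of $2$-adic local fields bound the exponent of the local conductor at $\mathfrak{P}$ in terms of $2e(\mathfrak{P}|2)$. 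Pulling these local conductors back to $K$ via the norm and comparing with the exponent of $2\zk$ already present in the modulus of $K^{2\pp\jj}/K$, the extra power of $2$ introduced by adjoining $\beta$ is at most one, giving $t\leq 2$. This places the conductor of $K^{2\pp\jj}(\beta)/K$ inside $4\pp\jj$, and hence $\beta \in K^{4\pp\jj} = LK^{2\pp\jj}$, as required.
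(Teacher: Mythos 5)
Your proposal is correct and reaches the conclusion by the same core mechanism as the paper: Hypothesis~\ref{sqaw} gives abelianity of $K^{2\pp\jj}(\beta)/K$, Hypothesis~\ref{tood} controls the infinite places, oddness of $p$ rules out new finite ramification away from $2$, and Lemma~7 of~\cite{ABGHM} bounds the power of $2$ in the conductor. You also correctly flag the one point that genuinely differs from Proposition~\ref{hydra5}: the base $K^{2\pp\jj}$ is already (tamely, with index $3$) ramified above $2\zk$. Where you diverge from the paper is in establishing $LK^{2\pp\jj}=K^{4\pp\jj}$. You use conductor-of-compositum equals lcm together with a degree count (or linear disjointness $L\cap K^{2\pp\jj}=K$); the paper instead runs two applications of the snake lemma on the class-field-theory exact sequences for conductors $2\pp\jj$ and $4\pp\jj$, producing the short exact sequence $1\to U_1^{2\pp\jj}/U_1^{4\pp\jj}\to\ker\bigl(\mg{4}\to\mg{2}\bigr)\to\Gal{K^{4\pp\jj}}{K^{2\pp\jj}}\to 1$, from which $\Gal{K^{4\pp\jj}}{K^{2\pp\jj}}\cong C_2$ and the $2$-ramification are read off directly. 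Both routes are valid; the snake-lemma computation is more informative (it fixes the degree as exactly $2$ and ties in with Figure~\ref{fig:subfield_lattice}), whereas your argument is shorter and more elementary. One small slip: the parenthetical ``the inertia coming from the $C_3$-factor \ldots is unramified at $2$'' is self-contradictory as written---that cyclic factor of order $3$ \emph{is} the tame inertia at $2\zk$ in $K^{2\pp\jj}/K^{\pp\jj}$---but this does not affect the logic. I would also encourage you to spell out the ``pulling local conductors back to $K$ via the norm'' step more carefully (the paper is equally terse here, hiding behind ``a fortiori''); the point is that the characters of the abelian group $\Gal{K^{2\pp\jj}(\beta)}{K}$ which do not factor through $\Gal{K^{2\pp\jj}}{K}$ still have $2$-conductor exponent at most $2$, and the conductor of the extension is the lcm of the character conductors.
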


\begin{proof}
By class field theory and Proposition \ref{hydra2}, 
$$
K^{\pp\jj}  \leq  K^{2\pp\jj}  \leq  K^{4\pp\jj} = LK^{2\pp\jj} \ .
$$
Forming exact commutative diagrams from the exact
sequences~\eqref{snort} and \eqref{snortlet} for the
conductors $2\pp\jj$ and $4\pp\jj$ and using the functoriality of
class field theory for the respective connecting maps, we obtain by
two applications of the snake lemma~\cite[(II.28), p.120]{gelman} a
short exact sequence
\begin{align*}
1 \lra U_1^{2\pp\jj} / U_1^{4\pp\jj} \lra  \ker{\left( \mg{4} \lra \mg{2} \right)} \lra \Gal{K^{4\pp\jj}}{K^{2\pp\jj}} \lra 1,
\end{align*}
where we know (proof of Proposition~8 (A) of~\cite{ABGHM}) the central
term is a Klein $4$-group and (by the argument concerning the orders
of $u_K$ in the proof of Proposition~\ref{hydra4}) the left term is a $C_2$
group, so the right hand side must also be isomorphic to $C_2$.
So $K^{4\pp\jj}$ is a proper quadratic extension of $K^{2\pp\jj}$,
ramified above $2$ by the characterisation of $K^{4\pp\jj}$ as containing $L 
K^{\pp\jj}$ in Proposition~\ref{hydra5}.

On the other hand, 
by Hypothesis~\ref{sqaw} (with $\mm_0 = 2\pp$), 
the $2$-part of the conductor of the ray class field of $K$ containing
the square root $\beta_r$ of any Stark unit from $K^{2\pp\jj}$ 
must divide into $2^t\zk$ for some $t\geq0$. 
(It is independent of the choice of $\beta_r$, by Hypothesis~\ref{tood}). 
But then a fortiori by Lemma~7 of~\cite{ABGHM} applied as in Proposition~\ref{hydra5}, 
it follows that the field formed by extending $K^{2\pp\jj}$ by this
square root, is indeed contained in $K^{4\pp\jj}$.
\end{proof}

\section{Main results}\label{sec:results}
It is time to pull the threads together.  We start this section with a
compact description of the algorithm that we have successfully used to
compute exact fiducial vectors in the sixteen dimensions of the
form $d=n^2+3=4p$ appearing in Table~\ref{tab:solutions} below.
 After that we will provide explanations and
discuss several aspects, including some possible variations of some of
the steps. The reader may find it helpful to refer to Figure
\ref{fig:subfield_lattice} for some of the notation, and to look
at Section \ref{sec:example} where an example is worked out in
complete detail. A list of dimensions where we have successfully
computed an exact SIC is given in Table \ref{tab:solutions}.  Data for
the solutions can be found online \cite{online}.

\subsection{Our algorithm}\label{sec:recipe}
As introduced above, the field $K$ is the real quadratic field
$K=\QD$ where $D$ is the square-free part of $(d+1)(d-3)$,
which equals the square-free part of $d+1$ in the case considered
here. 
Choosing the factor
$\pp=\bigl(\qd1 + 1)/2\bigr)$ of $(p)$ over the integers
of $K$ yields the ray class field $K^{\pp\jj}$, which is of degree
$(p-1)/3\ell$ over the Hilbert class field, by Proposition~\ref{hydra3}.

The main steps are as follows:
\begin{enumerate}
  \item Compute numerical real Stark units for the fields
    $K^{\pp^\tau\jj^\tau}$ and $K^{2\pp^\tau\jj^\tau}$ to sufficient
    precision and determine their exact---\ie, algebraic---minimal
    polynomials $r_1(t)$ and $r_2(t)$ over $K$.
  \item Apply the automorphism $\tau$ to obtain the polynomials
    $p_1(t)=r_1^\tau(t)$ and $p_2(t)=r_2^\tau(t)$. 
    Compute the
    factorisation of $p_1(t^2/x_0)$ and $p_2(t^2)$ in $K[t]$
    and pick factors $\tilde{p}_1(t)$ and $\tilde{p}_2(t)$.
  \item Compute exact defining polynomials for the ray class field
    $K^{2\pp\jj}$, yielding a tower of number fields that
    includes the Hilbert class field $H_K$ as well as the field
    $K^{\pp\jj}$. The field $K^{4\pp\jj}$ is obtained as a quadratic
    extension (at most) $K^{4\pp\jj}=K^{2\pp\jj}(\sqrt{-u_K})=K^{2\pp\jj}(\sqrt{x_0})$ of $K^{2\pp\jj}$.
  \item Compute the Galois group of the extension $K^{2\pp\jj}/K$,
    which by class field theory will be {abelian}. Then pick an
    automorphism $\sigma \in \Gal{K^{2\pp\jj}}{K}$ that fixes $H_K$
    and has order $(p-1)/\ell$.
  \item Compute the exact roots $\tilde{\alpha}_r$ and $\beta_r$ in
    $K^{2\pp\jj}$ of the polynomials $\tilde{p}_1(t)$ and $\tilde{p}_2(t)$.
    The square roots $\alpha_r$ of the Stark phase units of the field
    $K^{\pp\jj}$ are obtained as
    $\alpha_r=\tilde{\alpha}_r/\sqrt{x_0}$.
  \item In order to construct a fiducial vector using the Ansatz of
    Section \ref{sec:ansatz} from this data, we have to make suitable
    choices for the remaining free parameters.

    We have already fixed an automorphism $\sigma$ and an element
    $\beta_0$, defining the cyclic orbit $\beta_r=\sigma^r(\beta_0)$.
    First, we have to find a suitable generator $\theta$ of the
    multiplicative group of the integers modulo $p$.  We do not need
    to test all generators $\theta$ of the cyclic group of order
    $p-1$.  It is sufficient to consider them modulo the permutation
    symmetry of order $\ell$.  Moreover, we have to determine which of
    the two factors of each of the polynomials $p_1(t^2/x_0)$ and
    $p_2(t^2)$ gives the correct sign for the square roots of the
    Stark phase units. We can choose an arbitrary fixed first element
    $\beta_0$ in the large cycle of conjugates, but we have to find a
    correlated first element $\alpha_0$ of the small cycle. Finally,
    we have to test which of the two possibilities \eqref{eq:Zauner1}
    and \eqref{eq:Zauner2} for the symmetry leads to a fiducial
    vector. The total number of choices are
    $\varphi\bigl((p-1)/\ell\bigr)$, $2^2=4$, $\deg K^{\pp\jj}/K=\h(p-1)/3\ell$,
    and two, respectively. In brief: 
    \begin{enumerate}
      \item pick a $\theta\in(\Z/p\Z)^\times$ modulo the $\ell$-fold symmetry group
      \item pick signs $s_0,s_1\in\{1,-1\}$
      \item pick one of the square roots of the Stark phase units in
        $K^{\pp\jj}$ as $\alpha_0$
      \item for each choice of $\theta$, $\alpha_0$, and $s_0,s_1$,
        using
        \begin{equation}
          x_{\theta ^r} =s_0\alpha_r=s_0\sigma^r(\alpha_0)
          \qquad\text{and}\qquad
          y_{\theta ^r} =s_1\beta_r=s_1\sigma^r(\beta_0).\label{eq_sigma_theta}
        \end{equation}
        (cf. eq.~\eqref{eq:components}) we obtain the vectors ${\bf v}_1$
        and ${\bf v}_4$ of \eqref{eq:v0_v1}, which in combination with
        \eqref{eq:symmetri} yields a candidate for the fiducial
        vector $\ket{\Psi}$ for which we test the overlap $\bra{\Psi}I_4\otimes X^{(p)}\ket{\Psi}$
     \item test whether 
        $({\bf v}_1^{\rm T},{\bf v}_2^{\rm T},{\bf v}_3^{\rm T},{\bf v}_4^{\rm T})^{\rm T}$ or
        $({\bf v}_1^{\rm T},{\bf v}_3^{\rm T},{\bf v}_2^{\rm T},{\bf v}_4^{\rm T})^{\rm T}$
        corresponding to \eqref{eq:Zauner1} and \eqref{eq:Zauner2}
        yields a fiducial vector
    \end{enumerate}
\end{enumerate}

In Table \ref{tab:solutions}, we provide the main data for the
dimensions $d=n^2+3=4p$ for which we have computed an exact fiducial
vector. Table \ref{tab:timings} provides information on the time for
some of the computational steps.

\begin{table}[hbt]
  \caption{Dimensions of the form $d=n^2+3=4p$ for which we have
    computed a fiducial vector. With the exception of $d=12$ treated
    in Section \ref{sec:twelve}, the list is complete up to $n =
    67$. The factorisation of the degree of the ray class field
    $K^{2\pp\jj}$ over $K$ corresponds to the cyclic factors of prime-power order of the
    Galois group $\Gal{K^{2\pp\jj}}{K}$.  We need a precision to approximately twice as many
    digits as the value given in the column `size' defined in eq.~\eqref{eq:size}.  The number of combinations we have to test is
    $4\times (\deg K^{2\pp\jj}/K)/3\times
    \varphi\bigl((p-1)/\ell\bigr)$.\label{tab:solutions}}
  \let\t\times
  \def\arraystretch{1.2}
  \arraycolsep0.75\arraycolsep
  \begin{small}
  \centerline{$
    \begin{array}{|r|r@{{}=4\times{}}l|c|c|r@{{}={}}l|c|c|c|c|c|c|c|c|c|c|c|c|c|c|c|c|}
      \hline
      n & d & p & D & \ell & \multicolumn{2}{c|}{\deg K^{2\pp\jj}/K} & \h&\text{size}& \varphi\bigl((p-1)/\ell\bigr)\\
      \hline
      &\multicolumn{2}{c|}{}&&&\multicolumn{2}{c|}{}&&&\\[-3ex]
      \hline
      5 &   28 &  7  &   29 &  1 &    6  & 2\t 3                     & 1 &    3 &    2\\\hline
      7 &   52 &  13 &   53 &  1 &   12  & 2^2\t 3                   & 1 &    7 &    4\\\hline
     11 &  124 &  31 &    5 &  5 &    6  & 2\t 3                     & 1 &    3 &    2\\\hline
     13 &  172 &  43 &  173 &  1 &   42  & 2\t3\t 7                  & 1 &   42 &   12\\\hline
     17 &  292 &  73 &  293 &  1 &   72  & 2^3\t 3^2                 & 1 &   94 &   24\\\hline
     25 &  628 & 157 &  629 &  1 &  312  & 2\t 2^2\t 3\t 13          & 2 &  402 &   48\\\hline
     29 &  844 & 211 &    5 &  7 &   30  & 2\t 3\t 5                 & 1 &   16 &    8\\\hline
     31 &  964 & 241 &  965 &  1 &  480  & 2\t 2^4\t 3\t 5           & 2 &  734 &   64\\\hline
     35 & 1228 & 307 & 1229 &  1 &  918  & 2\t 3^3\t 17              & 3 & 1288 &   96\\\hline
     41 & 1684 & 421 & 1685 &  1 &  840  & 2\t 2^2\t 3\t 5\t 7       & 2 & 1667 &   96\\\hline
     43 & 1852 & 463 & 1853 &  1 &  924  & 2\t 2\t 3\t 7\t 11        & 2 & 1829 &  120\\\hline
     49 & 2404 & 601 & 2405 &  1 & 2400  & 2\t 2\t 2^3\t 3\t 5^2     & 4 & 3686 &  160\\\hline
     55 & 3028 & 757 & 3029 &  1 & 3024  & 2^2\t 2^2\t 3^3\t 7       & 4 & 5217 &  216\\\hline
     67 & 4492 &1123 & 4493 &  1 & 3366  & 2\t 3^2\t 11\t 17         & 3 & 7465 &  320\\\hline
    139 &19324 &4831 &  773 &  1 & 4830 & 2\t 3\t 5\t 7\t 23        & 1 & 10815 & 1056 \\\hline
    199 &39604 &9901 &    5 & 11 &  900 & 2^2\t 3^2\t 5^2            & 1 &   577 & 240 \\\hline
  \end{array}
  $}
  \end{small}
\end{table}

\begin{table}[hbt]
  \caption{Run-time for the calculation of the defining polynomial for
    the number field $K^{2\pp\jj}$ and for the numerical Stark units.  Most of
    the timings are for Magma.  Timings for the number field using
    Hecke and timings for numerical Stark units using Pari/GP are
    marked with ${}^*$. Note that we have not always used the very
    same algorithm nor the same computer.\label{tab:timings}}
  \let\t\times
  \def\s{\text{ s}\ }
  \def\ss{\text{ s}\rlap{${}^*$}\ }
  \def\arraystretch{1.2}
  \arraycolsep0.75\arraycolsep
  \begin{small}
  \centerline{$
    \begin{array}{|r|r|r@{{}=}l|r|r|r|r|c|c|c|c|c|c|c|c|c|c|c|c|c|c|c|}
      \hline
      n & d & \multicolumn{2}{c|}{\deg K^{2\pp\jj}/K} &\text{number field}&\text{precision} & \text{Stark units} \\
      \hline
       &&\multicolumn{2}{c|}{}&&&\\[-3ex]
      \hline
      5 &   28 &    6 & 2\t 3                 &   0.050\s &   100 \ &   0.920\s \\\hline
      7 &   52 &   12 & 2^2\t 3               &   0.110\s &   100 \ &   2.110\s \\\hline
     11 &  124 &    6 & 2\t 3                 &   0.050\s &   100 \ &   0.860\s \\\hline
     13 &  172 &   42 & 2\t 3\t 7             &   1.250\s &   150 \ &  27.660\s \\\hline
     17 &  292 &   72 & 2^3\t 3^2           &   1.770\s &   200 \ & 118.580\s \\\hline
     25 &  628 &  312 & 2\t 2^2\t 3\t 13      & 110.660\s &   900 \ &\text{$72$ min}^* \\\hline
     29 &  844 &   30 & 2\t 3\t 5             &   0.080\s &   100 \ & 7.390\s \\\hline
     31 &  964 &  480 & 2\t 2^4\t 3\t 5       &  10.700\s &  1500 \ & \text{$26$ days}\\\hline
     35 & 1228 &  918 & 2\t 3^3\t 17          & 654.100\ss&  2600 \ & \text{$26$ days}\\\hline
     41 & 1684 &  840 & 2\t 2^2\t 3\t 5\t 7   &   7.250\s &  3500 \ & \text{$194$ hours}^*\\\hline
     43 & 1852 &  924 & 2\t 2\t 3\t 7\t 11    &  49.700\s &  4000 \ & \text{$406$ hours}^*\\\hline
     49 & 2404 & 2400 & 2\t 2\t 2^3\t 3\t 5^2 & 440.670\ss &  7500 \ &\text{$881$ days}^*\\\hline
     55 & 3028 & 3024 & 2^2\t 2^2\t 3^3\t 7   &  298.680\ss & 10700 \ &\text{$19.1$ years}^* \\\hline
     67 & 4492 & 3366 & 2\t 3^2\t 11\t 17    &  274.378\ss  & 15200 \ &\text{$70.5$ years}^*\\\hline
    139 &19324 & 4830 & 2\t 3\t 5\t 7\t 23   &  762.586\ss &22000\  &\text{$328$ years}^* \\\hline
    199 &39604 &  900 & 2^2\t 3^2\t 5^2       &  11.560\s  & 2000\ &\text{$57$ days}\\\hline
  \end{array}
  $}
  \end{small}
\end{table}

\subsection{Comments on individual steps}
\subsubsection{Computing numerical real Stark units}
Stark conjectured \cite{StarkIII,StarkIV} that for certain ray class
fields $K^{\mm\jj}$ with totally real base field $K$, there are
algebraic units $\epsilon$ which can be obtained from the derivative
of zeta functions at zero via
\begin{alignat}{5}
  \epsilon_{\mathfrak{c}} = \exp(2\zeta'(0,\mathfrak{c}))\label{eq:exp_Stark}
\end{alignat}
(we include
a factor $2$ to match our conventions below).
Here $\mathfrak{c}$ is a representative ideal of an ideal class in the ray class group 
for $K$ with modulus $\mm\jj$.
Since by class field theory this group is
isomorphic to the Galois group $G=\Gal{K^{\mm\jj}}{K}$, we may equally well use the
elements $\sigma\in G$ of the Galois group to label the Stark
units. Computing numerical approximations $\tilde{\epsilon}_\sigma$ of
those Stark units for all elements of the ray class group, we obtain a
complete set of conjugates. Then the coefficients $\tilde{c}_i$ of the
polynomial
\begin{alignat}{5}
  \tilde{f}(t)=\prod_{\sigma\in G}(t-\tilde{\epsilon}_\sigma)=\sum_{i}\tilde{c}_it^i
\end{alignat}
are numerical approximations of integers $c_i$ in $K$. Defining the
size of the polynomial $\tilde{f}(t)$ as
\begin{alignat}{5}
  \text{size}\bigl(\tilde{f}(t)\bigr)=\max_i \log_{10} |\tilde{c}_i|,\label{eq:size}
\end{alignat}
we observe that a numerical precision of about twice the size (in number of 
digits) is sufficient to obtain the exact values of the coefficients
$c_i$ using an integer relation algorithm.

Stark relates the zeta function in eq.~\eqref{eq:exp_Stark} to Hecke
$L$-series attached to Hecke characters $\psi$ via
\begin{alignat}{5}
  L(s,\psi)=\sum_{\mathfrak{c}}
  \psi(\mathfrak{c})\zeta(s,\mathfrak{c}),\label{eq:L_zeta}
\end{alignat}
where we have dropped a factor $1/2$ in relation to the expression
given on p.~66 of \cite{StarkIII}. We can obtain the values of the
zeta function from the values of the $L$-series inverting
eq.~\eqref{eq:L_zeta} using the orthogonality relations for
characters.  For any Hecke character $\psi$, there is an associated
primitive character \cite[ch. 16, \S4,
  Definition~4.4]{Husemoeller}. Following \cite{StarkIII}, denoting
the primitive character by $\psi^*$, we have
\begin{alignat}{5}
   L'(0,\psi)
     = L'(0,\psi^*)\prod_{\mathfrak{p}|\mm}\bigl(1-\psi^*(\mathfrak{p})\bigr),\label{eq:Euler_factors}
\end{alignat}
where the inverse Euler factor product is over all finite primes
$\mathfrak{p}$ dividing into the modulus $\mm$.  If $K^{\mm_1\jj}\le
K^{\mm\jj}$ is a subfield with the finite part $\mm_1$ of the modulus
dividing $\mm$, then the primitive characters corresponding to the
Hecke characters of $K^{\mm_1\jj}$ are a subset of those of
$K^{\mm\jj}$. Therefore, it suffices to compute the derivative of the
$L$-series at zero for all primitive characters of the largest
field. Then, for each field we can use eq.~\eqref{eq:Euler_factors} to
compute the corresponding values $L'(0,\psi)$. From those, we
eventually obtain the numerical approximations of the Stark units for
both $K^{2\pp^\tau\jj^\tau}$ and $K^{\pp^\tau\jj^\tau}$ using
eq.~\eqref{eq:Euler_factors}, inverting eq.~\eqref{eq:L_zeta}, and
using eq.~\eqref{eq:exp_Stark}.

In our situation, we compute the values $L'(0,\psi^*)$ for all
associated primitive Hecke characters of the field
$K^{2\pp^\tau\jj^\tau}$.  Algorithms for this are available,
\eg, in the computer algebra systems Magma and Pari/GP \cite{PARI2},
with the latter providing a more advanced parallel
implementation. Timings for these calculations are shown in the last
column of Table \ref{tab:timings}.

\subsubsection{Minimal polynomials of the square roots}
For the square roots of Stark phase units, we consider the following
factorisation of the minimal polynomials $p_i(t)$ of the Stark phase
units after some quadratic substitution, \ie,
\begin{alignat}{5}
  p_1(t^2/x_0) &{}= \tilde{p}_1(t)\tilde{p}_1(-t)\label{eq:factor_p0}\\
  \noalign{and}
  p_2(t^2) &{}= \tilde{p}_2(t)\tilde{p}_2(-t).\label{eq:factor_p1}
\end{alignat}
Note that \textit{a priori}, we cannot distinguish between
$\tilde{p}_i(t)$  and $\tilde{p}_i(-t)$.  This requires that we determine
the signs $s_i$ in our final step. 

The first factorisation \eqref{eq:factor_p0} for the square roots
$\alpha_r$ is related to the result of Proposition \ref{hydra6} that all
products $\alpha_r\sqrt{x_0}$ lie in the ray class field $K^{\pp\jj}$,
which implies that one of the factors $\tilde{p}_i(t)$ is their
minimal polynomial over $K$.  For the second factorisation
\eqref{eq:factor_p1} we have observed in all cases thus far  
that such a factorisation always exists over the field $K$, which
implies that the square roots $\beta_r$ actually lie in the same field
$K^{2\pp\jj}$ as the Stark phase units.

There are some possible variations here.  As stated, the algorithm
uses a rescaling of the Stark phase units in $K^{\pp\jj}$ by a factor
$\sqrt{x_0}$ before the minimal polynomial of their square roots is
calculated.  For the Stark phase units in $K^{2\pp\jj}$ no such
rescaling is necessary since we find their square roots to lie in
$K^{2\pp\jj}$. We can do without the rescaling also for the smaller
field if we factor the polynomials $p_i(t^2)$ over $K(\sqrt{x_0})$, at
the price of having to perform the calculations below in the larger
field $K^{4\pp\jj}$. Using rescaling, there are a few natural scale
factors to choose from, such as $\sqrt{x_0}$ and $\sqrt{-u_K}$.  The
latter choice leads to somewhat smaller coefficients in the minimal
polynomials for the square roots of the rescaled Stark phase units in
the subfield $K^{\pp\jj}$.  However, we decided to use $\sqrt{x_0}$ in
the discussion here, as this is directly related to the zeroth
component of the vector ${\bf v}_4$.

\subsubsection{Computing defining polynomials for $K^{2\pp\jj}$}
The computer algebra system Magma supports the calculation of defining
polynomials for ray class fields.  It computes a defining polynomial
corresponding to each cyclic factor of the Galois group (cf. the
factorisation of the degree of $K^{2\pp\jj}/K$ in Table
\ref{tab:timings}).  The run-time depends mainly on the largest cyclic
factor.  For the cases where the largest factor was at least $17$, we
used the computer algebra system Hecke \cite{hecke} instead, as it
provides a more advanced algorithm for this task.  As the degree of
the fields in our examples is relatively smooth---with the largest
cyclic factor being $27$---the run-time for this step is less than
$15$ minutes.

Where the degree of the ray class field contains prime-power cyclic factors, 
we computed successive extensions wherein each was 
of prime degree over the previous level, yielding an overall tower of
number fields with successive extensions having prime degree.  
In some cases,
we used various additional methods to find defining polynomials
with smaller coefficients.  Heuristically, this results in faster
arithmetic in the number fields.

\subsubsection{Computing the Galois group}
Computer algebra systems like Magma have built-in algorithms to
compute the group of automorphisms of number fields.  In our case, we
can make use of the tower of number fields with successive relative
extensions of prime degree referred to above.  Assume we have an
extension $F_{i+1}=F_i(q_{i})$, with say $f_i(x)$ being the minimal
polynomial of $q_i$ over $K_i$.  An automorphism of $F_{i+1}$ that
fixes $F_i$ maps $q_i$ to some other root of $f_i(x)$: \ie, we have to
find all roots of the polynomial $f_i(x)$, which is of small prime
degree in our case, and all roots lie in $F_{i+1}$.  If $\pi$ is a
non-trivial automorphism of $F_i$, then $q_i$ can be mapped to any
root of the polynomial $f_i^\pi(x)$, which is obtained by applying
$\pi$ to the components of $f_i(x)$.  We also note that the defining
polynomials for each cyclic factor have coefficients in the quadratic
field $K=\QD$, by the fundamental theorem of abelian groups.  Hence we
are able to restrict ourselves to finding roots in fields of
relatively small degree.

When we use this approach to compute the Galois group, we get all
automorphisms, but their group structure is only implicit.  In order
to find an automorphism $\sigma$ of order $(p-1)/\ell$, we make a
random choice and compute the order.  Note that such an automorphism
$\sigma$ is guaranteed to exist in $\Gal{K^{2\pp\jj}}{H_K}$ by
Proposition \ref{hydra4}. We can lift $\sigma$ to an automorphism of
$K^{4\pp\jj}/H_K$ which satisfies $\sigma(\sqrt{x_0})=\sqrt{x_0}$,
because as illustrated in Figure~\ref{fig:subfield_lattice},
$LK^{\pp\jj}$ and $K^{2\pp\jj}$ are linearly disjoint over
$K^{\pp\jj}$, and their compositum is $K^{4\pp\jj}$,
so $\Gal{K^{4\pp\jj}}{K}$ is a direct product of the $C_2$-group
$\Gal{K^{4\pp\jj}}{K^{2\pp\jj}}$ with $\Gal{K^{2\pp\jj}}{K}$.

We fix an embedding of our tower of number fields into the complex
numbers that maps $\sqrt{D}$ to a positive number.  Then we can
identify an element of the Galois group of order two that acts like
complex conjugation; see the definition of $\sigt$ in
Section~\ref{gammaray}.  For the Stark phase units, we know that this
automorphism maps them to their inverse. So we do not need that
automorphism when working with Stark phase units. But when we
transform the fiducial vector to the standard basis, it is more
convenient to know how complex conjugation acts on general elements of
our number fields.

\subsubsection{Computing exact roots $\alpha_0$ and $\beta_0$}\label{sec:exact_roots}
A more complex step is the determination of the roots of the minimal
polynomials $\tilde{p}_i(t)$.  The degree of those polynomials is
$\h(p-1)/3\ell$ and $\h(p-1)/\ell$, respectively.  By Propositions
\ref{hydra5} and \ref{hydra7}, those roots lie in the number field
$K^{4\pp\jj}$. While for small dimension we can make use of standard
algorithms, for larger examples we use an approach that is adjusted to
our situation.  First, since we have computed the Galois group and we
expect all roots to lie in the corresponding ray class field, it
suffices to compute a single root.  We compute the factorisation in
steps aligned with the tower of number fields. In each step, the
degree of the factors is reduced by a prime number.  It is sufficient
to continue with only one of the factors in each step.  For those who
are interested in more details, we use a Trager-like algorithm and use
modular techniques to compute the required greatest common divisors of
polynomials.

\subsubsection{Search for the remaining parameters}\label{sec:final_steps}
As already stated in Step 6 of our algorithm in Section \ref{sec:recipe},
the Ansatz of Section \ref{sec:ansatz} for the fiducial vector leaves
a few choices.

Eq. \eqref{eq:components} that relates the
sequences of square roots $\alpha_r$ and $\beta_r$ to the components
$x_{\theta^r}$ and $y_{\theta^r}$ of the vectors ${\bf v}_4$ and ${\bf
  v}_1$, respectively, requires that we choose an element $\theta$ of order
$p-1$ in the integers modulo $p$. For $\ell>1$, there will be an
additional permutation symmetry of order $\ell$ among those elements,
and it suffices to test the candidates for $\theta$ modulo that symmetry.

For each of the sequences $\alpha_r$ and $\beta_r$, we have to pick a
first element $\alpha_0$ and $\beta_0$ in the orbit with respect to
the automorphism $\sigma$.  We can fix an arbitrary element $\beta_0$
for the larger of the two orbits, since any other choice will result
in a fiducial vector as well, provided $\alpha_0$ was chosen
accordingly.  For $\alpha_0$, we test all possible choices.  The same
is true for the choice of signs $s_0$ and $s_1$ (but see the next
subsection).

For each choice of $\theta$, we get the permutation matrix $U_F$
acting on $\C^p$ using $\delta=\theta^{(p-1)/3}$ in
eq.~\eqref{eq:diagonal}.  We compute the vectors ${\bf v}_4$ and ${\bf
  v}_1$ given in eq.~\eqref{eq:v0_v1}, as well as ${\bf
  v}_2=U_F^{-1}{\bf v}_1$ and ${\bf v}_3=U_F {\bf v}_1$.  Those
vectors are combined to yield a non-normalised candidate vector $({\bf
  v}_1^{\rm T},{\bf v}_2^{\rm T},{\bf v}_3^{\rm T},{\bf v}_4^{\rm
  T})^{\rm T}$.

We compute the overlap with respect to $\openone_4\otimes X^{(p)}$ which
(ignoring normalisation) equals
\begin{alignat}{5}
  \sum_{i=0}^3 \bra{{\bf v}_i} X^{(p)}\ket{{\bf v}_i}.\label{eq:baby_test}
\end{alignat}
(Note that in order to compute this inner product, we need the
automorphism that acts as complex conjugation.)  This is nothing but
the sum of the inner product of the vectors ${\bf v}_i$ with their
shifted versions which can be easily computed in the corresponding
number field. The drawback is that eq.~\eqref{eq:baby_test} does not
allow us to discriminate between the two possibilities $U_F$ and
$U_F^{-1}$ for the symmetry in eqs.~\eqref{eq:Zauner1} and
\eqref{eq:Zauner2}.  Recall that replacing $U_F$ by $U_F^{-1}$ results
in swapping ${\bf v}_2$ and ${\bf v}_3$. In order to discriminate
between these two options, we have to test additional overlaps.  As
indicated in Section \ref{sec:example}, using $D_{i,j}^{(4)}\otimes
X^{(p)}$ for any displacement operator $D_{i,j}^{(4)}$ in dimension
$4$ is inconclusive.  We would have to consider more general
displacement operators in dimension $p$, which would then in turn
require us to use $p$th roots of unity and hence perform calculations
in a field of much larger degree.  Instead, we transform the two
candidates for the fiducial vector to the basis of the standard
representation of the Weyl--Heisenberg group (inverting the
transformation \eqref{eq:basis_change}) and test the conditions there.
The transformation requires calculations in the field
$K^{8\pp\jj\jj^\tau}$, adjoining an eighth root of unity.  Then we
test the quantities $G(i,k)$ defined in eq.~\eqref{Gik1} in Section
\ref{sec:verification} below.  We observe that testing just one
$G(i,k)$ with $i,k$ co-prime to $d$, say $G(1,1)$, is usually
sufficient for us to determine which of $U_F$ or $U_F^{-1}$ we need.

\subsection{Some improvements}\label{sec:signs} 
The fact that the final step of our algorithm involves a search over a
finite number of undetermined choices is clearly a weakness.  However,
if one is willing to make further assumptions, it is possible to
reduce the search.  In particular we can determine the generator
$\theta\in(\Z/p\Z)^\times$ of the non-zero integers modulo $p$, as
well as the signs $s_0$ and $s_1$, without checking the SIC property.
Notice that when the dimension is prime these are the only choices one
has to make \cite{ABGHM}.  The algorithm as described in Section
\ref{sec:recipe} does not depend on these additional assumptions; but
we have observed them to hold in all of the cases which we have examined.

To determine $\theta$, first recall that we can use elements of the
Galois group to label the numerical real Stark units.  What is more,
we have identified a mapping from the non-zero integers modulo $d$ to
numerical Stark units that is compatible with the action of the Galois
group as well as with multiplication in $\Z/d\Z$, based upon the
natural identification of ideals modulo $d$ in $\Z$ with those modulo
$2\pp$ in $\zk$.  This allows us to associate each non-zero index in
the fiducial vector with a numerical real Stark unit.  Given the exact
expression for the square roots of the Stark phase units from
Section \ref{sec:exact_roots}, we can choose a real embedding of
the ray class field and match the exact square roots with the
numerical Stark units. From the action of the Galois transformation
$\sigma$ on the exact square roots, we can derive the corresponding
permutation of the indices in the fiducial vector.  Matching the exact
and the numerical values allows us not only to determine the generator
$\theta$, but also the first elements $\alpha_0$ and $\beta_0$ in the
orbits.  We hope to return to this issue and discuss it in detail
elsewhere.

The signs can be fixed very simply, if one relies on an observation
made in Section \ref{sec:overlaps}.  We give the details there. Using
that method, we can reduce the overall run-time of our search by what amounts 
essentially to a factor of four.

\subsection{Complete verification}\label{sec:verification}
So far, we have not been able to prove that our algorithm always
works. Hence we have to rely on the complete verification of the SIC
conditions for the exact SIC fiducial vectors that we have calculated.
This is clearly a major task since the number of overlaps grows
quadratically with the dimension. Fortunately simplifications are
possible.  Let us first recall how the use of the standard
representation of the Weyl--Heisenberg group allows us to do the
entire calculation in the number field generated by the components of
the fiducial vector \cite{Roy, Mahdad, ADF}.  The idea is to take a
discrete Fourier transform of the sequences of SIC conditions, and
calculate 
\begin{equation}
  G(i,k) 
  :=
  \frac{1}{d}
    \sum_j\omega^{kj}|\bra{\Psi}X^iZ^j\ket{\Psi}|^2
  =
  \sum_{r=0}^{d-1}\bar{a}_{r+i}\bar{a}_{r+k}a_{r}a_{r+i+k}, \label{Gik1}
\end{equation}
where $a_r$ denotes a component of $\ket{\Psi}$ in the chosen basis,
and $\omega=e^{2\pi i/d}$ is a primitive complex $d$th root of unity.
This works because the matrix representation of the operator that
occurs here is
\begin{equation}
  (X^iZ^j)_{r,s} = \omega^{js}\delta_{r,s+i},. 
\end{equation}
where $\delta$ is the Kronecker delta. 
The invertibility of the discrete Fourier transform then implies that
we can rephrase the SIC conditions \eqref{eq:SIC_conditions} in terms
of $G(i,k)$:
\begin{equation}
  |\bra{\Psi}X^iZ^j\ket{\Psi} |^2
  = \frac{d\delta_{i,0}\delta_{j,0}+1}{d+1}
  \quad \Longleftrightarrow \quad 
  G(i,k) = \frac{\delta_{i,0} + \delta_{k,0}}{d+1}. \label{Gik2}
\end{equation}
Since $G(i,k)$ can be expressed in terms of the components of the
fiducial vector in the standard basis, we only need eighth roots of
unity, but not $p$th roots of unity when checking that this condition
holds.  This is the approach which we applied in \cite{ABGHM}, where we
do not need complex roots of unity at all. In Appendix \ref{sec:Gik} we 
describe a significant improvement to this approach. 

A critical question is: what is the minimum number of sufficient conditions? 
Once this question is answered we find, somewhat counterintuitively, that 
we can reduce the overall complexity of
the verification when we consider the overlap phases in
eq.~\eqref{eq:SIC_conditions}, which requires calculations in a number
field that contains $d$th roots of unity (with $d=4p$) and which has
an even larger degree. We have to adjoin both a fourth and a $p$th root
of unity to the field $K^{4\pp\jj}$ in which the fiducial vector in
the adapted basis can be expressed.  We consider the overlap phases of
the fiducial vector in our adapted basis
\begin{alignat}{5}
  \qd1\bra{\Psi_0}D_{i,j}^{(4)}\otimes X^a Z^b\ket{\Psi_0},\label{eq:overlaps}
\end{alignat}
where $D_{i,j}^{(4)}$ is an element of the Weyl--Heisenberg group in
our representation in dimension four, and $X^aZ^b$ is in the
Weyl--Heisenberg group in the standard representation in dimension
$p$.  For simplicity we have not included the cyclotomic phase factors
in the displacement operators acting on the dimension $p$ factor in
eq.~\eqref{eq:overlaps}.  They do become relevant in Section
\ref{sec:overlaps}, where we discuss the actual numbers that
constitute the overlaps.

We can use the action of the Galois group on eq.~\eqref{eq:overlaps} to
reduce the number of overlap phases that we have to check.
Since the cyclotomic field generated by the $p$th root of unity is
disjoint to the field $K^{4\pp\jj\jj^\tau}$ containing the fiducial
vector and the displacement operators $D_{i,j}^{(4)}$, for every $b\ne
0$ there is a Galois automorphism that maps $X^a Z^b$ to $X^a Z$ and 
changes neither the fiducial vector nor $D_{i,j}^{(4)}$.  The
modulus of the overlap in eq.~\eqref{eq:overlaps} will be the
same. Hence it is sufficient to consider the exponents $b=0$ and $b=1$
in \eqref{eq:overlaps}.  Moreover, applying the automorphism $\sigma$
to the fiducial vector multiplies the indices in the dimension-$p$
component by $\theta$ (see eq.~\eqref{eq_sigma_theta}). This implies
\begin{alignat}{5}
  \sigma\left(\qd1\bra{\Psi_0}D_{i,j}^{(4)}\otimes X^a Z^b\ket{\Psi_0}\right)
  =\qd1\bra{\Psi_0}D_{i,j}^{(4)}\otimes X^{a\theta^{-1}} Z^{b\theta^{-1}}\ket{\Psi_0}.
\end{alignat}
As $\theta$ is a primitive element of $\Z/p\Z$, it suffices to
consider the exponents $a=0$ and $a=1$ in eq.~\eqref{eq:overlaps}.

To reduce the number of displacement operators that we have to
consider in the dimension four component, we use the action of the
pre-ascribed Zauner symmetry $U_{\mathcal{Z}}$ (see
eq. \eqref{eq:UZauner}) of our fiducial vector on the operators
$D_{i,j}^{(4)}$ given in eqs.~\eqref{eq:X_adapted} and
\eqref{eq:Z_adapted}.  By definition, the additional symmetry does not
change the fiducial vector, but we can use it to transform the
displacement operator.  Additionally, we note that replacing an
operator $D_{i,j}^{(4)}\otimes X^a Z^b$ in eq.~\eqref{eq:overlaps} by its
adjoint-up-to-phase $D_{-i,-j}^{(4)}\otimes X^{-a} Z^{-b}$ does not change the
modulus.  The same is true for complex conjugation.  It turns out that
it is sufficient to consider the operators $D_{i,j}^{(4)}$ for
$(i,j)=(0,0),(0,1),(0,2)$.  Note that $D^{(4)}_{0,0}=\openone_4$,
$D^{(4)}_{0,2}=\openone_2\otimes \sigma_z$ (see
eq. \eqref{eq:diag_ops}), and $D^{(4)}_{0,1}$ is given in
eq. \eqref{eq:Z_adapted}.

In total, we have to check only only $12$ out of the $d^2$ conditions
for our candidate fiducial vectors.  Four of the overlaps, namely
those for the diagonal operators $D^{(4)}_{0,0}\otimes X^0Z^0$,
$D^{(4)}_{0,0}\otimes X^0Z^1$, $D^{(4)}_{0,2}\otimes X^0Z^0$, and
$D^{(4)}_{0,2}\otimes X^0Z^1$ are already implied by our Ansatz (see
eqs.~\eqref{baby1} and \eqref{baby2}).
Hence we only have to check eight overlap phases, but in number fields
of very large degree.  More details on these overlap phases can be
found in Section \ref{sec:overlaps} and Table \ref{tab:baby4p}.

We have verified the solutions for dimensions up to $d=1852$ using
exact arithmetic.  For larger dimensions, for which the degree of the
required number field is also much larger, we have checked those eight
conditions numerically.  Timings are given in
Table~\ref{tab:verification}.  For the exact verification, we state
the time it took to compute the squared modulus of the eight overlaps.
We note that the most time-consuming operation is not to compute the
overlap phase, but to multiply it with its conjugate value.  Recall
that the degree of the number field including the $d$th root of unity
scales as $O(d^2)$.  For the numerical verification, we first have to
compute a numerical fiducial vector from the exact one. Mapping the
elements of the high-degree number field to high-precision floating
point values takes considerable time. Hence, for these cases, we state
the total time to compute a numerical fiducial vector from the exact
one and to compute the numerical overlaps.  The time for computing the
overlaps and their absolute value is given in parenthesis.  For
comparison, we also give the run-time to compute a single term
$G(i,k)$ for fixed $i$, $k$ using exact arithmetic in the last column.
For the last three dimensions we only provide estimates.

\begin{table}[hbt]
  \caption{Run-time for the verification of the solutions. The column
    $G(i,k)$ provides information on the time required to compute one
    value $G(i,k)$ using exact arithmetic. For $d=4492$, $d=19324$,
    and $d=39604$, the times for $G(i,k)$ are estimates.  Note that we
    did not always use the same computer.\label{tab:verification}}
  \let\t\times \def\s{\text{ s}\ } \def\ss{\text{ s}\rlap{${}^*$}\ }
  \def\arraystretch{1.2} \arraycolsep0.75\arraycolsep
  \begin{small}
  \centerline{
    \begin{tabular}{|r|r|r@{${}={}$}l|r|c@{\;}c|c|c|r|c|c|c|c|c|c|c|c|c|c|c|}
      \hline
      $n$ & $d$ & \multicolumn{2}{c|}{$\deg K^{4\pp\jj}/K$} &\multicolumn{1}{c|}{precision}&\multicolumn{2}{c|}{CPU time} & \multicolumn{1}{c|}{$G(i,k)$} \\
      \hline
       &&\multicolumn{2}{c|}{}&&&&\\[-3ex]
      \hline
      $5$ &    $28$ &   $12$ & $2\t 2\t 3$                 & \multicolumn{1}{c|}{exact} & \multicolumn{2}{c|}{$<1$ s} & $<1$ s  \\\hline
      $7$ &    $52$ &   $24$ & $2\t 2^2\t 3$               & \multicolumn{1}{c|}{exact} & \multicolumn{2}{c|}{$<1$ s} & $<1$ s  \\\hline
     $11$ &   $124$ &   $12$ & $2\t 2\t 3$                 & \multicolumn{1}{c|}{exact} & \multicolumn{2}{c|}{$<1$ s} & $<1$ s  \\\hline
     $13$ &   $172$ &   $84$ & $2\t 2\t 3\t 7$             & \multicolumn{1}{c|}{exact} & \multicolumn{2}{c|}{$31.5$ s} & $5.4$ s \\\hline
     $17$ &   $292$ &  $144$ & $2\t 2^3\t 3^2$             & \multicolumn{1}{c|}{exact} & \multicolumn{2}{c|}{$264.3$ s} & $37.8$ s  \\\hline
     $25$ &   $628$ &  $624$ & $2\t 2\t 2^2\t 3\t 13$      & \multicolumn{1}{c|}{exact} & \multicolumn{2}{c|}{$207.8$ min} & $19$ min \\\hline
     $29$ &   $844$ &   $60$ & $2\t 2\t 3\t 5$             & \multicolumn{1}{c|}{exact} & \multicolumn{2}{c|}{$205.0$ s} & $12.7$ s\\\hline
     $31$ &   $964$ &  $960$ & $2\t 2\t 2^4\t 3\t 5$       & \multicolumn{1}{c|}{exact} & \multicolumn{2}{c|}{$24.8$ h} & $101$ min\\\hline
     $35$ &  $1228$ &  $1836$ & $2\t 2\t 3^3\t 17$         & \multicolumn{1}{c|}{exact} & \multicolumn{2}{c|}{$4.3$ days} & $9.0$ h \\\hline
     $41$ &  $1684$ & $1680$ & $2\t 2\t 2^2\t 3\t 5\t 7$   & \multicolumn{1}{c|}{exact} & \multicolumn{2}{c|}{$5.5$ days} & $6.8$ h\\\hline
     $43$ &  $1852$ & $1848$ & $2\t 2\t 2\t 3\t 7\t 11$    & \multicolumn{1}{c|}{exact} &  \multicolumn{2}{c|}{$7.7$ days} & $8.7$ h\\\hline
     $49$ &  $2404$ & $4800$ & $2\t 2\t 2\t 2^3\t 3\t 5^2$ &  $100\,000$ digits & $61.2$ min &($154.8$ s) &$89.5$ h \\\hline
     $55$ &  $3028$ & $6048$ & $2\t 2^2\t 2^2\t 3^3\t 7$   &  $100\,000$ digits & $97.7$ min &($206.6$ s) & $204.4$ h \\\hline
     $67$ &  $4492$ & $6732$ & $2\t 2\t 3^2\t 11\t 17$    &  $100\,000$ digits & $4.9$ h &($403.1$ s) & $\approx 15.8$ days\\\hline
    $139$ & $19324$ & $9660$ & $2\t 2\t 3\t 5\t 7\t 23$   &  $100\,000$ digits & $30.4$ h &($20.2$ min)& $\approx 570$ days\\\hline
    $199$ & $39604$ & $1800$ & $2\t 2^2\t 3^2\t 5^2$       & $100\,000$ digits & $109.3$ min &($39.8$ min) & $\approx 30$ days\\\hline
    \end{tabular}}
  \end{small}
\end{table}

\section{A detailed example in dimension $52$}\label{sec:example}
We illustrate our algorithm by giving one example in full detail, and we choose $d = 52 = 4\times 13$
for this purpose. The quadratic base field is $K = \Q(a)$, where
$a = \sqrt{53}$. Other than the somewhat simplifying fact that the class number $\h = 1$, 
this dimension provides a good illustration of the
algorithm outlined in Section \ref{sec:recipe}.  The element $x_0$ is
$x_0=-2-\qd1=-2-a$. The ray class fields $K^{\pp\jj}$, $K^{2\pp\jj}$ and
$K^{4\pp\jj}$ that we need in order to construct the fiducial vector have degrees 
$4$, $12$ and $24$, respectively. This should be compared with the degree of the
full SIC field including all the relevant roots of unity, which has
degree $1152$. So the saving is considerable.
\medskip

Step 1: We compute real Stark units for the fields
$K^{\pp^\tau\jj^\tau}$ and $K^{2\pp^\tau\jj^\tau}$. The precision
we need in order to determine their minimal polynomials, $r_1(t)$
respectively $r_2(t)$, is very modest in this case (cf. Table
\ref{tab:solutions}). We find
\begin{equation}
   r_1(t) = t^4 - \frac{1}{2}(5a+39)t^3 + (21a+154)t^2 - \frac{1}{2}(5a+39)t + 1.
 \end{equation}
The polynomial $r_2(t)$ is of degree $12$, and we wait until the next
step---where the coefficients shrink somewhat---before we express it 
explicitly.
\medskip

Step 2: We apply the automorphism $\tau$, that is $a \mapsto -a$, to
the coefficients of the minimal polynomials.  The roots of the
transformed polynomials $p_1(t)=r_1^\tau(t)$ and $p_2(t)=r_2^\tau(t)$
are phase factors. We want their square roots.  For $p_2$ we obtain
them by factoring the polynomial $p_2(t^2)$ over the quadratic
field. It will factor because the square roots of the Stark phase
units in $K^{2\pp\jj}$ lie in $K^{2\pp\jj}$. One of the factors is
\begin{alignat}{5}
 \tilde{p}_2(t)= t^{12} &{}- \frac{a-1}{2}t^{11} + \frac{7a-43}{2}t^{10} + \frac{15a-113}{2}t^9 - \frac{79a-573}{2}t^8 \nonumber \\ 
&{}- \frac{53a - 391}{2}t^7 + (122a-891)t^6 - \frac{53a-391}{2}t^5 \\ 
&{}- \frac{79a-573}{2}t^4 +  \frac{15a-113}{2}t^3 + \frac{7a-43}{2}t^2 - \frac{a-1}{2}t + 1 . \nonumber
\end{alignat}
So this is the minimal polynomial whose roots $\beta_r$ are the square
roots of the Stark phase units in $K^{2\pp\jj}$.  Because we picked
one out of the two factors of the degree $24$ polynomial $p_2(t^2)$,
we have introduced one global sign ambiguity.

For $p_1(t)$ we have to deal with the fact that the square roots of
the Stark phase units in $K^{\pp\jj}$ do not lie in $K^{\pp\jj}$.  If
we rescale them with the factor $\sqrt{x_0}$ they do (see Proposition
\ref{hydra6}), so we can factor the polynomial $p_1(t^2/x_0)$ over
$K$. One of the factors is
\begin{equation}
  \tilde{p}_1(t) = t^4 + \frac{3a - 15}{2}t^3 - (4a - 41)t^2 - \frac{9a - 129}{2}t + 4a + 57.
 \end{equation} 
Its roots $\tilde{\alpha}_r$ are square roots of the rescaled Stark
phase units in $K^{\pp\jj}$. Again there is a global sign ambiguity depending
on which factor we pick.

As mentioned in Section \ref{sec:signs}, and discussed in detail in
Section \ref{sec:overlaps} below, we can determine the correct signs
already in this step, if we are willing to take equations
\eqref{eq:baby3} and \eqref{eq:baby4} from Section~\ref{sec:overlaps}
on trust. In fact the above choices of $\tilde{p}_1(t)$ and
$\tilde{p}_2(t)$ were determined in this way, and we will take this
into account when we come to Step 6 below.
\medskip

Step 3: Here we build the number fields $K^{\pp\jj}$, $K^{2\pp\jj}$, and $K^{4\pp\jj}$, by
means of a tower of field extensions of degrees equal to the prime
factors of the degree of $K^{4\pp\jj}$. Magma provides good guidance for this,
but many slight variations and improvements are possible in this
step. One possibility is
\begin{alignat}{5}
               & K(t_1),    &\quad & \text{$t_1$ being a root of $x^2 + \frac{3a-23}{2}$;}\nonumber\\
  K^{\pp\jj}  ={}& K(t_1,t_2),  && \text{$t_2$ being a root of $x^2+a+1 - \frac{10a+68}{13} t_1$;}\nonumber \\ 
  K^{2\pp\jj} ={}& K^{\pp\jj}(t_3), &&\text{$t_3$ being a root of $x^3 - 78x + 25a - 27$;}\nonumber \\ 
  K^{4\pp\jj} ={}& K^{2\pp\jj}(\xi), &&\text{$\xi$ being a root of $x^2 - x_0$.}
\end{alignat}
The field $K^{4\pp\jj}$ will not be needed until we come to Step 6 of our algorithm. 
\medskip

Step 4: The Galois group of the extension $K^{2\pp\jj}/K$ permutes the roots
of the minimal polynomials that appear in the tower leading up to
$K^{2\pp\jj}$. Notice that we want the minimal polynomials over the fixed
field $K$. For the first extension, clearly the roots are $t_1$ and
$-t_1$. For $t_2$, the minimal polynomial is a degree four polynomial
with coefficients in $K$. However, it comes to us in an already
factored form, namely
\begin{equation} p_{t_2}(x) = \left( x^2+a+1 - \frac{10a+68}{13} t_1\right) \left( x^2+a+1 + \frac{10a+68}{13} t_1 \right) . 
\end{equation} 
The first factor is the polynomial that actually appeared in the
tower, the second factor arises when we take the Galois conjugates of
its coefficients, in this case letting $t_1 \mapsto -t_1$. Hence
there are four roots, and they are the Galois conjugates of
$t_2$. Clearly there are three conjugates of $t_3$. This means that
the Galois group is a cyclic group of order $12$, containing four
generators of order $12$. We pick one and call it $\sigma$. It effects
\begin{alignat}{5}
  \sigma(t_1) &{}= - t_1,\nonumber \\
  \sigma (t_2) &{}= - \frac{1}{4}\bigl((4a+30)t_1 + 3a + 23\bigr)t_2 , \\ 
  \sigma (t_3) &{}= - \frac{1}{30}\bigl((a-1)t_3^2 - (a-41)t_3 - 52a + 52\bigr).\nonumber
\end{alignat}
It can be worked out by hand that if $t_2$ (say) is a root of
$p_{t_2}(x)$ then so is $\sigma (t_2)$, and that $\sigma^4(t_2) = t_2$
while $\sigma^3(t_3) = t_3$. In higher dimensions the computer algebra
package has to do the work.

For complex conjugation, we pick the automorphism that changes the
signs of $t_2$ and $\xi$, and fixes the other generators.
\medskip

Step 5: It is unclear whether we could calculate the roots $\tilde{\alpha}_r$ and $\beta_r$ of the polynomials 
$\tilde{p}_1(t)$ and $\tilde{p}_2(t)$ by hand, but the computer algebra package does it without apparent delay. We need only 
one root from each, because the rest can be generated using the Galois transformation $\sigma$ from the 
previous step. We pick one root from each polynomial, say  
\begin{equation}
  \tilde{\alpha}_0 = -\frac{1}{16}\Bigl(\bigl((8a+58)t_1 + 3a + 55\bigr)t_2 + (2a+18)t_1 + 6a - 30\Bigr) \label{eq:alfaval}
\end{equation}
\begin{alignat}{5}
  \beta_0 = \frac{1}{720}\Bigl(&\bigl((-2t_1+a-11)t_2 - \frac{1}{13}(10a+42)t_1 + 2a - 10\bigl)t_3^2 \nonumber \\
  &{}- \bigl(((5a+33)t_1 + 11a-11)t_2 - 4t_1 + 22a - 130\bigl)t_3  \label{eq:betaval} \\ 
  &{} - \bigl((120a + 826)t_1 + 127a + 283\bigr)t_2\nonumber\\
  &{}+ (70a + 318)t_1 - 74a + 490 \Bigr). \nonumber
\end{alignat}
We now rely on the Galois group to provide us with two ordered
sequences of roots. Since $K^{\pp\jj}$ is a subfield of $K^{2\pp\jj}$ a 
single Galois group element suffices for both sets of roots. We obtain  
\begin{equation}
 \tilde{\alpha}_r = \sigma^r (\tilde{\alpha}_0), \qquad \beta_r = \sigma^r (\beta ). 
\end{equation}
It is convenient to let $r$ range from 0 to $p-1$ in both cases, even
though this means that the sequence $\{ \tilde{\alpha}_r\}_{r=0}^{11}$
repeats itself three times.

Notice that the choice of the automorphism $\sigma$, out of the four generators of the 
group, was arbitrary.  So were the choices of starting points for the two sequences. 
\medskip

Step 6: We now have two cycles of numbers $\tilde{\alpha}_r$ and
$\beta_r$ to place in the Ansatz for a SIC fiducial vector that we
described in Section \ref{sec:ansatz}. The $\tilde{\alpha}$--cycle is
to give the components $x_i$ made from $12/3 = 4$ distinct numbers,
while the $\beta$--cycle is to give the twelve distinct components
$y_i$.  The precise relation is given by
\begin{equation} 
  j = \theta^r \bmod p \qquad \Longrightarrow \qquad x_j = \tilde{\alpha}_r,\ y_j = \beta_r.
\end{equation}
Here $\theta$ is a generator of the multiplicative group of non-zero
integers modulo $p$. When $p = 13$ this is again a cyclic group of
order $12$, with four distinct possible choices $2$, $6$, $7$, $11$ for its 
generator.

We have arrived at the search part of our algorithm. We used a short-cut
in Step 2, hence all signs are determined.  But we do not know which
of the four possible choices of $\theta$ corresponds to the arbitrary
choice $\sigma$ that we made for the generator of the Galois
group. The arbitrary choice that we made for the starting point of the
$\beta$-cycle does not matter, since the $12$ different choices will
lead to $12$ Clifford equivalent SIC vectors.  But given a choice of
$\beta_0$ together with the choice of $\tilde{\alpha}_0$, the starting point of the
shorter $\tilde{\alpha}$--cycle, does matter. Finally, two possible
Zauner unitaries were displayed in equations \eqref{eq:Zauner1} and
\eqref{eq:Zauner2}. They correspond to two ways of ordering the
vectors ${\bf v}_2$ and ${\bf v}_3$ in the Ansatz, and the choice
matters. Hence we have $2\times 4\times 4$ candidate SIC fiducial
vectors to investigate.

The search can be done in two stages. First we test the overlap
$\langle \Psi_0|\openone_4\otimes X^{(p)}|\Psi_0 \rangle$.  This
calculation can be done within the number field $K^{4\pp\jj}$ holding the
vector $|\Psi_0 \rangle$, and it is insensitive to the ordering of the
vectors ${\bf v}_2$ and ${\bf v}_3$. Hence there are only $4\times 4$
candidates to look through, and for $d = 52$ the search can be made in
a fraction of a second. With the choice of $\sigma$ that we made in
Step 4 and the choice of $\beta_0$ that we made in equation
\eqref{eq:betaval}, it turns out that this overlap has absolute value squared 
$1/(d+1)$ if and only if $\theta = 7$ and $\tilde{\alpha}_0$ is the
root that was written down in eq.~\eqref{eq:alfaval}.  (Not by
accident since we adjusted the latter equation after the fact).

To find the correct ordering of ${\bf v}_2$ and ${\bf v}_3$ we can, for instance, test the overlap $\langle \Psi |
Z^{(4)}\otimes Z^{(p)}|\Psi \rangle$. It has the right absolute value only if we place the vectors in the order 
$({\bf v}_1^{\rm T}, {\bf v}_2^{\rm T}, {\bf v}_3^{\rm T}, {\bf v}_4^{\rm T})^{\rm T}$.  However, such a 
calculation requires us to bring in the $p$th roots of unity. A faster way is to test one of the $G(i,k)$ from 
Section \ref{sec:verification}. Such a calculation can be performed in the
field $K^{8\pp\jj\jj^\tau}$, obtained by extending the 
field $K^{4\pp\jj}$ by an eighth root of unity. 

In this case a complete verification that we really do have a SIC vector can be done in a few seconds.  

\section{Dimensions $4$ and $12$}\label{sec:twelve}
For completeness, let us give a fiducial SIC vector for dimension four. 
There the Stark units 
drop out of our Ansatz, so we obtain \cite{monomial}:
\begin{equation}
  |\Psi_0\rangle = N\left( \begin{array}{c} 1 \\ 1 \\ 1 \\ \sqrt{x_0} \end{array} \right) ,
\end{equation}
where the normalising factor $N = N(d)$ was defined in eq.\eqref{eq:x0}.

The other dimension that we have so far avoided is dimension $12 = 4\times 3$. Let us first recall that 
\begin{equation}
  d = n^2 + 3 \qquad \Longrightarrow \qquad
  d = 2^{e_2}\times 3^{e_3}\times p_1^{e_{p_1}}\times \ldots \times p_s^{e_{p_s}},
\end{equation}
where $e_2 \in \{0,2\}$, $e_3 \in \{ 0,1\}$, and all the primes
$p_j\equiv 1 \bmod 3$ \cite{ABGHM}. In this paper we have been
concerned with the special features that arise if there is a factor of
$4$ in the dimension. It turns out that a factor of $3$ also
introduces some special features. For one thing the symmetry of the
ray class SIC is then of type $F_a$ in the terminology of
reference \cite{Scott}. This means that the order $3$ symmetry group acts
trivially in the dimension $3$ factor, so that our Ansatz has to be
slightly modified. To see what the next special feature is we simply
write down a solution for dimension $12$ in almost flat form, using
the enphased monomial representation of the Weyl--Heisenberg group in
the dimension-four factor. The prime $p=3$ does indeed still split over the
quadratic field, so that the ideal $(3)$ splits into prime ideals that
we again denote by $\mathfrak{p}$ and $\mathfrak{p}^\tau$. Using
\begin{equation}
  a = \sqrt{13} \quad\text{and}\quad t_1 = \sqrt{-\frac{a+1}{2}},
\end{equation}
the Stark phase units in the field $K^{\mathfrak{p}\mathfrak{j}} = K(t_1)$ are 
\begin{equation}
  \epsilon_0 = \frac{1}{4}(1-a - 2t_1) \qquad\text{and}\quad \epsilon_1 = \frac{1}{4}(1-a + 2t_1). 
\end{equation}
Then we find the SIC fiducial vector 
\begin{equation} |\Psi_0\rangle = N \left( \begin{array}{c}
    {\bf v}_1 \\ \hline
    {\bf v}_1 \\ \hline
    {\bf v}_1 \\ \hline
    {\bf v}_4 
  \end{array} \right),\qquad\text{where}\quad
    {\bf v}_1 = \left( \begin{array}{c} 1 \\ -\epsilon_0 \\ -\epsilon_1 \end{array} \right)
\quad\text{and}\quad
    {\bf v}_4 = \left( \begin{array}{c} \sqrt{x_0} \\ -(\epsilon_1)^\frac{3}{2} \\ -(\epsilon_0)^\frac{3}{2} \end{array} \right).
\end{equation}
In this case it turns out that the ray class fields
$K^{\mathfrak{p}\mathfrak{j}}$ and $K^{2\mathfrak{p}\mathfrak{j}}$ are
identical as number fields, and the Stark phase units in the former
are in fact square roots of the Stark phase units in the latter. Hence
the form of the vector ${\bf v}_1$ is the expected one. The new
feature, which as far as we know always occurs when the dimension is
divisible by three, is that square roots of odd powers of Stark phase
units (in this case, cubes) also appear in the fiducial vector. As a
matter of fact this phenomenon is not exclusive to dimensions
divisible by three, it occurs also (this time in fifth powers) 
when $d = 82^2+3 = 6727 =  7\times 31^2$ and $\ell = 5$. 
Since this dimension is not of the form $4p$ we do not discuss it further here. 

\section{Overlap phases and determination of signs}\label{sec:overlaps} 
The SIC condition requires the overlaps $\qd1\langle
\Psi_0|D_{i,j}|\Psi_0 \rangle$, where $D_{i,j}$ is any displacement
operator in dimension $d$, to sit on the unit circle in the complex
plane. Hence they are referred to as overlap phases. They are not
obviously algebraic units, but in every case investigated (dimension
$d = 3$ excepted) they are indeed such.  They are important because
any SIC can be constructed from its overlap phases.

It is perhaps worth observing that given our Ansatz, where the complex conjugate of 
a fiducial vector is given by eq. \eqref{UPd}, the SIC overlap phases can be rewritten as a 
bilinear quadratic form 
\begin{equation} \qd1\langle \Psi_0|D_{i,j}|\Psi_0 \rangle = \qd1(\Psi_0^* , D_{i,j}\Psi_0 ) = 
\qd1 (\Psi_0, (U_P^{(4)}\otimes U_P^{(p)})D_{i,j}\Psi_0 ),  \end{equation}
\noindent where the parity operator $U_P^{(4)}\otimes U_P^{(p)}$ is an involution of order 2, and where we 
avoided the use of Dirac's notation since it does not work well for anti-unitary transformations. 

The number fields in which the various overlap phases (or their
squares) sit are listed in Table~\ref{tab:baby4p}. As an illustrative
example of how this comes about, consider the third row. Using the
explicit form \eqref{eq:Z_adapted} of the displacement operator
$D_{0,1}$ together with our Ansatz we find
\begin{equation}
  \langle \Psi_0|D_{0,1}\otimes D_{0,0}|\Psi_0\rangle = N^2\tau_4 
  \bigl(-\langle {\bf v}_1|{\bf v}_3\rangle - i\langle {\bf v}_2|{\bf v}_4\rangle + 
    i\langle {\bf v}_3|{\bf v}_1\rangle  - \langle {\bf v}_4|{\bf v}_2\rangle\bigr),
\end{equation}
where $\tau_4$ is a primitive eighth root of unity. But it is built into our Ansatz that 
\begin{equation}
  \langle {\bf v}_3|{\bf v}_1\rangle = \langle {\bf v}_1|{\bf v}_3\rangle,
  \qquad
  \langle {\bf v}_2|{\bf v}_4\rangle = -  \langle {\bf v}_4|{\bf v}_2\rangle.
\end{equation}
Moreover $\tau_4(1-i) = \sqrt{2}$, so 
\begin{equation}
  \langle \Psi_0|D_{0,1}\otimes D_{0,0}|\Psi_0\rangle = N^2\sqrt{2} 
  \bigl(-  \langle {\bf v}_1|{\bf v}_3\rangle -  \langle {\bf v}_4|{\bf v}_2\rangle\bigr) .
\end{equation}
The fourth root of unity has disappeared. The two terms are each
invariant under the simultaneous permutations $\beta_r \rightarrow
\beta_{r+1}$, $\tilde{\alpha}_r \rightarrow \tilde{\alpha}_{r+1}$,
which means that they sit in the base field except for the presence of
$\sqrt{x_0}$ in ${\bf v}_4$. Hence these overlap phases sit in
$K^{8\jj}$, and their squares in $K^{4\jj}$. The remaining entries in
the third column of Table \ref{tab:baby4p} can be dealt with
similarly.

According to a by-now-standard conjecture, for the cases we consider
the SIC overlap phases are square roots of Stark phase units, but this
time Stark units in the ray class field with modulus $d$ or $2d$ have
to be included.  This expectation was the starting point for Kopp's
construction of SICs in prime dimensions equal to $2$ modulo $3$, for
which the overlap phases form a single Galois orbit \cite{Kopp}. For
the case when $d = 4p$ where $p\equiv 1\bmod 3$ the situation is more
complicated, but for a few examples we have checked that the
connection between overlap phases and Stark phase units still holds,
although it is not just square roots that appear; see Table
\ref{tab:baby4p}, which has been fully verified for some cases only,
owing to the time it takes to compute Stark units in large ray class
fields.

\begin{table}[htb]
\caption{Non-trivial overlap phases for $d = 4p$. In the tensor product the first 
operator is a $d = 4$ displacement operator, the second ditto for dimension $p$. When 
$d = 4$ the operators $D_{0,0}$ and $D_{0,2}$ are diagonal, while $D_{0,1} = X$ is 
`generic'. When $d = p$ the operators $D_{0,1} = X$ and $D_{0,1} = Z$ are special, 
while $D_{1,1}$ is `generic'. The table gives one representative for each Galois orbit, 
up to possible signs. In the fourth column all entries were checked for $d = 28$, $52$, 
$124$. For $d =172$, $292$, $844$ the first five entries were checked.\label{tab:baby4p}}
  \centerline{
    \begin{small}
\renewcommand{\arraystretch}{1.5}
\begin{tabular}
{|c|c|c|c|c||c|c|c|c|c|}\hline
representative & degree & number field & remark \\ 
\hline
&&&\\[-4ex]
\hline
$\qd1\langle \Psi|D_{0,0}\otimes D_{1,0}|\Psi \rangle$ & $(p-1)/3\ell$ & 
$K^{\pp\jj}$ & Stark phase unit \\ 
\hline 
$\qd1\langle \Psi|D_{0,2}\otimes D_{1,0}|\Psi \rangle$ & $(p-1)/\ell$ & 
$K^{2\pp\jj}$ & $-$Stark phase unit\\ 
\hline 
$(d+1)\langle \Psi|D_{0,1}\otimes D_{0,0}|\Psi \rangle^2$ & 2 & 
$K^{4\jj}$ & $(\text{Stark phase unit})^\ell$ \\ 
\hline 
$(d+1)\langle \Psi|D_{0,1}\otimes D_{0,1}|\Psi \rangle^2$ & $2(p-1)/\ell$ & $K^{4\pp^\tau\jj}$ 
& Stark phase unit \\
\hline 
$(d+1)\langle \Psi|D_{0,1}\otimes D_{1,0}|\Psi \rangle^2$ & $2(p-1)/\ell$ & 
$K^{4\pp\jj}$ & Stark phase unit \\
\hline
&&&\\[-4ex]
\hline
$(d+1)\langle \Psi|D_{0,0}\otimes D_{1,1}|\Psi \rangle^2$ & $(p-1)^2/6\ell$ & 
$K^{p\jj}$ & Stark phase unit \\ 
\hline 
$(d+1)\langle \Psi|D_{0,2}\otimes D_{1,1}|\Psi \rangle^2$ & $(p-1)^2/2\ell$ & 
$K^{2p\jj}$ & Stark phase unit \\ 
\hline
$(d+1)\langle \Psi|D_{0,1}\otimes D_{1,1}|\Psi \rangle^2$ & $2(p-1)^2/\ell$ & 
$K^{d\jj}=K^{4p\jj}$ & \text{Stark phase unit} \\
\hline
\end{tabular}
    \end{small}}
\end{table}

We now focus on the first two rows of Table \ref{tab:baby4p}. Recall that the dimension $p$ displacement 
operator $D_{1,0}=X$ is a permutation matrix. We have observed the appealing 
formulas
\begin{alignat}{5}
 &&  \qd1\langle \Psi_0|D_{0,0}\otimes X^{-2j}|\Psi_0\rangle&{} =  x_j^2\label{eq:baby3}\\
\text{and}\quad\nonumber\\
 &&  \qd1\langle \Psi_0|D_{0,2}\otimes X^{-2j}|\Psi_0\rangle &{}= -y_j^2. \label{eq:baby4}
\end{alignat}
Here the numbers $x_i$ and $y_i$ are the components of the vector
$|\Psi_0\rangle$, as introduced in equations \eqref{eq:v0_v1}. The
components are square roots of Stark units in
$K^{\pp\jj}$ and in $K^{2\pp\jj}$
respectively, hence these particular overlap phases are Stark phase
units in themselves. In effect then equations \eqref{eq:baby3} and \eqref{eq:baby4} 
are identities for Stark units that, we conjecture, hold for all $d = 4p$. 
We have made a similar observation when the dimension is a
prime of the form $n^2+3$ \cite{ABGHM}, and indeed in all cases
investigated. As a consequence, the results reported in Table \ref{tab:baby4p} may also be interpreted as 
identities connecting Stark units in different subfields. 

If we assume that equations \eqref{eq:baby3} and \eqref{eq:baby4} hold
we can reduce the search in the final step of our algorithm, because they
will determine the signs that appear there.  Equivalently, they can be
used to distinguish between the factors $\tilde{p}_i(t)$ and
$\tilde{p}_i(-t)$ in eqs.~\eqref{eq:factor_p0} and
\eqref{eq:factor_p1} of the minimal polynomials of the Stark phase
units over the quadratic field $K$.

For both equations \eqref{eq:baby3}
and \eqref{eq:baby4}, we consider the sum from $j=1$ to $j =
p-1$. For the right-hand sides, we get
\begin{alignat}{9}
  &&\sum_{j=1}^{p-1} x_j^2 = 3\ell \sum_{j=0}^{\frac{p-1}{3\ell}-1} \alpha_j^2 =
     \frac{3\ell}{x_0} \sum_{j=0}^{\frac{p-1}{3\ell}-1} \tilde{\alpha}_j^2
     &{}= \frac{3\ell}{x_0}\Tr{K^{\pp\jj}}{\HK}(\tilde{\alpha}_0^2)=\frac{3\ell}{x_0}\tr(\tilde{\alpha}_0^2)\\
  \text{and}\quad&&
  -\sum_{j=1}^{p-1} y_j^2 = -\ell \sum_{j=0}^{\frac{p-1}{\ell}-1} \beta_j^2 &{}=-\ell\Tr{K^{2\pp\jj}}{\HK}(\beta_0^2)=-\ell\tr(\beta_0^2).
\end{alignat}
We are summing all units $\alpha_j^2$ and $\beta_j^2$ which form an
orbit with respect to the cyclic group generated by the automorphism
$\sigma$ which fixes the Hilbert class field $\HK$. Hence, the sum
equals the trace for the extension over the Hilbert class field, which
we abbreviate by $\tr(\_)$.
Those values are the negative of the coefficient of second-highest degree, of
the minimal polynomial of $\tilde{\alpha}_0$ and $\beta_0$ over the Hilbert
class field.  Hence, for class number $\h>1$, we have to compute the
factorisation of the minimal polynomials over the Hilbert class field.

For the left-hand sides, first observe that
\begin{alignat}{5}
  \sum_{j=1}^{p-1} X^{-2j}=J-\openone = \ket{\bf 1}\bra{\bf 1} - \openone,
\end{alignat}
where $J$ is the van der Waerden matrix, a $p\times p$ matrix with all entries being equal to
$1$, and $\ket{\bf 1}$ is the vector with all components equal to $1$. 
Hence, the left-hand sides of eqs.~\eqref{eq:baby3} and
\eqref{eq:baby4} are symmetric functions in
the components of the fiducial vector as well.  More precisely, we get
\begin{alignat}{5}
  \sum_{j=1}^{p-1}\bra{\Psi_0}&D_{0,0}\otimes X^{-2j}\ket{\Psi_0}
 =N^2\bigl(3|\braket{{\bf 1}}{{\bf v}_1}|^2+|\braket{{\bf 1}}{{\bf v}_4}|^2)-1 \\
\text{and}\qquad
  \sum_{j=1}^{p-1}\bra{\Psi_0}&D_{0,2}\otimes X^{-2j}\ket{\Psi_0} =N^2\bigl(|\braket{{\bf 1}}{{\bf v}_1}|^2-|\braket{{\bf 1}}{{\bf v}_4}|^2\bigr)+\frac{1}{\qd1},
\end{alignat}
where $N$ is the normalisation factor defined in eq.~\eqref{eq:x0}.
Additionally, we compute
\begin{alignat}{7}
  \braket{{\bf 1}}{{\bf v}_4}&{}=\sum_{j=0}^{p-1} x_j
   &&{}=\sqrt{x_0}+3\ell \sum_{j=0}^{\frac{p-1}{3\ell}-1}\alpha_j
   &&{}=\sqrt{x_0}+\frac{3\ell}{\sqrt{x_0}} \tr(\tilde{\alpha}_0)\\
\text{and}\qquad
  \braket{{\bf 1}}{{\bf v}_1}&{}=\sum_{j=0}^{p=1} y_j
    &&{}=1+\ell \sum_{j=0}^{\frac{p-1}{\ell}-1}\beta_j
    &&{}=1+\ell \tr(\beta_0),
\end{alignat}
where the trace is relative to the extension of the Hilbert class
field.

Combining all of this, we find that
\begin{alignat}{5}
  N^2\left(3|1+\ell\tr(\beta_0)|^2+\left|\sqrt{x_0}+\frac{3\ell}{\sqrt{x_0}}\tr(\tilde{\alpha}_0)\right|^2\right)-1
  &{}=\frac{3\ell\tr(\tilde{\alpha}_0^2)}{\sqrt{x_0}\qd1}\\
\text{and}\qquad
  N^2\left(|1+\ell\tr(\beta_0)|^2-\left|\sqrt{x_0}+\frac{3\ell}{\sqrt{x_0}}\tr(\tilde{\alpha}_0)\right|^2\right)+\frac{1}{\qd1}
     &{}=-\frac{\ell\tr(\beta_0^2)}{\qd1}.
\end{alignat}
From this, taking complex conjugates using equations \eqref{eq:cc1}--\eqref{eq:cc3}, we derive
\begin{alignat}{5}
  - 4\Bigl( x_0 + 6\ell \tr(\tilde{\alpha}_0) +{} &\frac{9\ell^2}{x_0}\tr(\tilde{\alpha}_0)^2\Bigr) \nonumber \\ 
  &{}=(1+\qd1)\left(3\ell\tr(\alpha_0^2)+3\ell\tr(\beta_0^2)+4\right)+d\label{eq:sign_alpha}
\end{alignat}
and
\begin{equation}
  4( 1+\ell\tr(\beta_0))^2=(1+\qd1)\left(3\ell\tr(\alpha_0^2)-\ell\tr(\beta_0^2)\right)+d.\label{eq:sign_beta} 
\end{equation} 
The right-hand side of these equations can be computed using the 
second-highest degree monomial in the minimal polynomials over the
Hilbert class field of the Stark phase units $\alpha_0^2$ and $\beta_0^2$,
which equals the negative of their trace.  The left-hand sides depend
on the trace of the square roots $\tilde{\alpha}_0$ and $\beta_0$, and
hence they are sensitive to the choice of the signs (respectively, the
choice of the factors) in eqs.~\eqref{eq:factor_p0} and 
\eqref{eq:factor_p1}.  Note that for class number $\h>1$, this yields
partial information about the choice of the element $\alpha_0$ as
well.

Hence, assuming that our observation in eqs.~\eqref{eq:baby3} and
\eqref{eq:baby4} was true in general, we can determine the correct
signs $s_0$ and $s_1$ independently of the choice of the generator
$\theta$ and the element $\alpha_0$.  Moreover, the number of possible
choices for $\alpha_0$ is reduced from $\h(p-1)/3\ell$ to
$(p-1)/3\ell$, \ie, the degree of $K^{\pp\jj}$ over the Hilbert class field.

\section{Conclusions and Outlook} \label{sec:conclusions}
We have collected a substantial amount of evidence for our claim that
we have formulated an algorithm that in principle allows us to convert
Stark units in appropriate ray class fields to SICs, in all dimensions
of the form $d = n^2+3$. Here it is important to guard against various
special features that occur in low (say, double digit) dimensions. In
so far partially unpublished work we have in fact computed fiducial
vectors in dimensions $d = n^2+3$ for all $n \leq 53$, as well as in
some higher dimensional cases (namely dimensions $3028$, $3252$,
$3484$, $3603$, $3724$, $3972$, $4099$, $4492$, $4627$, $5779$,
$6727$, $7399$, $7924$, $12324$, $19324$, $19603$, $39604$, and
$45372$). Hence we feel that we are safe against low dimensional
accidents, and with the various improvements reported here we believe
that we can talk about an ``algorithm'', rather than just a
``recipe'', as we did in \cite{ABGHM}

If $d = n^2+3$ the prime decomposition of $d$ is
\begin{equation}
  d = 4^{e_4}\times 3^{e_3} \times p_1^{r_1} \times \ldots \times
  p_s^{r_s},
\end{equation}
where $e_3, e_4 \in \{ 0, 1\}$, and all the primes $p_j \equiv 1$
modulo $3$. In \cite{ABGHM} we treated the conceptually simplest case
of $d=n^2+3=p$ being a prime. As we have seen a factor of four in the
dimension requires some special measures on the Hilbert space side in
order to `decouple' the fiducial vector from the cyclotomic field.  We
have therefore focused this paper on dimensions of the form $d= n^2+3
= 4p$, where $p$ is an odd prime. Moreover, the $d = 4p$ case has the
advantage that the degrees of the relevant fields are comparatively
smooth, and this facilitates the calculations. The highest dimension
reached is $d = 39604 = 199^2+3$. The degree of the number field
needed to write down all the $d^2$ SIC vectors in this case is
$71\,280\,000$.  We can handle this dimension because the Stark units
from which a suitably chosen fiducial vector is constructed belong to
a number field of a degree of just $1800$ over the rationals.  Along
the way we pointed out several improvements of our algorithm, as
compared to \cite{ABGHM}.

One reason why $d = 39604$ is manageable is that this dimension
corresponds to $\ell = 11$ in the dimension tower given in
eq.~\eqref{eq:tower}, starting at $d = 4$.  It is possible that this
sequence of dimensions contains an infinite sequence of dimensions of
the form $d = 4p$. However, the next candidate occurs for $\ell =
211$, and $d_{211}$ already has $89$ digits. Explicitly constructing a
fiducial vector in such a dimension is out of reach, as the number of
its components exceeds the estimated number of atoms in the known
universe. Hence we have reached a point where a proof that the
algorithm always works is urgently needed.

We believe that the restriction to dimensions $d$ where the quadratic
base field admits a fundamental unit of negative norm, and where SICs
with anti-unitary symmetry appear, namely $d = n^2+3$, may simplify an
existence proof.

\section*{Acknowledgements}
We thank Marcus Appleby for many valuable discussions.  We also thank
the Mathematics Department at Stockholm University as well as the Max
Planck Society via the Max Planck Institute for the Science of Light
in Erlangen for access to their computers.

I.\,B. acknowledges support by the Digital Horizon Europe project
FoQaCiA, Foundations of quantum computational advantage, GA
No. 101070558, funded by the European Union, NSERC (Canada), and UKRI
(UK).  

M.\,G. acknowledges support by the Foundation for Polish Science (IRAP
project, ICTQT, contracts no. 2018/MAB/5 and 2018/MAB/5/AS-1,
co-financed by EU within the Smart Growth Operational Programme).

G.\,M. thanks Myungshik Kim and the QOLS group at Imperial College
London for their generous ongoing hospitality and support.

\appendix

\section*{Appendices}
In the following four appendices we prove some results used or mentioned 
in the main text, as well as some general results which shed light on
the basic number-theoretic context of these constructions.  One of the
enduring mysteries of this subject is the ubiquitous appearance of the
so-called \emph{Zauner symmetry} of order three in every known SIC.
Many of the number-theoretic observations below are concerned with
internal $3$-symmetries; though it remains to be shown if there is a
link.

\section{General results about the towers $d_\ell (D)$}\label{tours}

This first appendix is concerned with the way intrinsic properties of
a quadratic field $\QD$ determine the primes dividing the values of
$d_\ell(D)$, defined in eq.~\eqref{deeell}, occurring in the tower of
dimensions $\{ d_\ell(D) \colon \ell \in \N \}$ that lie above it.

The appearance or otherwise of a particular prime in the tower of
dimensions above $D$---and more particularly the power to which it
first appears if it does---is a deep problem analogous to topics in
classical number theory like Wieferich primes \cite{Wieferich}.  In
particular it has a direct connection to so-called Wall-Sun-Sun primes
and late twentieth-century attempts to prove Fermat's Last
Theorem~\cite{sunsun,grasReg}; see the remark following
Proposition~\ref{ants}.

Unless explicitly stated, there is no stipulation here that dimensions
be of the form $d=n^2+3$, nor $d=4p$.  Note that where it is clear
from the context, we shall simply write $d_\ell$ for $d_\ell(D)$.

\subsection{The dimension $d_\ell(D)$ at position $\ell$ in the tower above $\QD$}\label{subsec:towers}

We set out the key structural results and then prove them below.  Let
$s,t$ be rational integers.  The notation $s \mid t$ will mean that
$t$ is divisible by $s$.  When $s$ is of the form $p^r$ for a prime
$p$ and a positive integer $r$, $p^r \mathrel{\Vert} t$ will mean that
$p^r$ is the highest power of $p$ dividing into $t$.  If $t$ is an
algebraic integer then $s \mid t$ will mean that the ideal $(t)$ is
contained within the ideal $(s)$.

\subsubsection*{The $4p$ case}
The main result of this appendix, from the perspective of applications to the rest of the paper, is the following. 
Note there is no assumption about $n^2+3$. 

\begin{proposition}\label{ellone}
Let $D$ be any square-free positive integer, and let $\ell \in \N$. 
Suppose that the dimension $d_\ell(D)$ is
equal to $4p$ for some odd prime $p$.  Then either $D = 5$
or $\ell = 1$.
\end{proposition}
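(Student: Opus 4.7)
The plan is to rule out $\ell \ge 2$ unless $D = 5$, using progressively finer divisibility properties of the integer sequence $T_\ell = u_D^\ell + u_D^{-\ell}$, where $u_D$ is the fundamental totally positive unit of $\zk$ and $d_\ell = T_\ell + 1$. Throughout I use the fact that $T_1 = u_D + u_D^{-1} \ge 3$, with equality $T_1 = 3$ holding if and only if $u_D = (3 + \sqrt{5})/2$, i.e., iff $D = 5$.

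First, I would show that $\ell$ must be odd. If $\ell = 2m$, then $T_\ell + 2 = (u_D^m + u_D^{-m})^2 = T_m^2$, so $d_\ell + 1 = T_m^2$ is a perfect square. The equation $4p + 1 = k^2$ then gives $(k-1)(k+1) = 4p$; since $k$ is odd, $\frac{k-1}{2}\cdot\frac{k+1}{2} = p$ expresses the prime $p$ as a product of two consecutive integers, forcing $k = 3$ and $p = 2$, which contradicts $p$ odd. Next, I would rule out $3 \mid \ell$: the triplication formula $T_{3m} = T_m(T_m^2 - 3)$ is always even (because $T_m$ and $T_m^2 - 3$ have opposite parity), so $d_{3m}$ is odd and cannot equal $4p$. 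Hence $\gcd(\ell, 6) = 1$.

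The central step is to prove $d_1 \mid d_\ell$. Let $\Phi_\ell$ denote the Dickson polynomial determined by $\Phi_\ell(u + u^{-1}) = u^\ell + u^{-\ell}$, so $T_\ell = \Phi_\ell(T_1)$. For $\ell$ odd with $\gcd(\ell, 3) = 1$, the substitution $-1 = 2\cos(2\pi/3)$ in $\Phi_\ell(2\cos\theta) = 2\cos(\ell\theta)$ yields $\Phi_\ell(-1) = 2\cos(2\pi\ell/3) = -1$, so $(x + 1)$ divides $\Phi_\ell(x) + 1$ in $\Z[x]$. Specialising $x = T_1$ gives $d_1 = T_1 + 1 \mid \Phi_\ell(T_1) + 1 = T_\ell + 1 = d_\ell = 4p$. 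Since $d_1 \ge 4$, this forces $d_1 \in \{4, p, 2p, 4p\}$.

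To finish, I would eliminate the last three possibilities by a size estimate. The integer ratio $d_\ell / d_1$ equals the value of the explicit quotient polynomial $R_\ell(x) = (\Phi_\ell(x) + 1)/(x + 1)$ at $x = T_1$; for $\ell \ge 5$ odd with $\gcd(\ell, 3) = 1$ and $T_1 \ge 3$, the minimum of $R_\ell(T_1)$ over this regime is attained at $(\ell, T_1) = (5, 3)$, where $R_5(3) = 124/4 = 31$. Thus $d_\ell / d_1 \ge 31$, so $d_1 \le 4p/31 < p$, which eliminates $\{p, 2p, 4p\}$ and leaves $d_1 = 4$, i.e., $D = 5$. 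The main obstacle will be establishing the uniform ratio bound $R_\ell(T_1) \ge 31$: this is immediate for the base case $\ell = 5$ by direct evaluation, and then propagates in both parameters by monotonicity, but some care is needed to handle the cases $N(u_K) = \pm 1$ uniformly.
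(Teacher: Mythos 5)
Your proof uses essentially the same three ingredients as the paper's: divisibility $d_1 \mid d_\ell$ when $\gcd(\ell,3)=1$ (the paper's \eqref{ddivd}), the parity obstruction when $3 \mid \ell$ (the paper's \eqref{threed}), and a growth estimate on $d_\ell/d_1$ to rule out $d_1 \in \{p, 2p, 4p\}$. The Dickson-polynomial identity $\Phi_\ell(-1) = -1$ is the same mechanism that underlies the paper's recursion \eqref{shiftydee} and its consequence \eqref{onetwo}; note that only $\gcd(\ell,3)=1$ is needed for this evaluation at $2\cos(2\pi/3)$ --- oddness of $\ell$ plays no role there. You do add a step the paper does not have, and it is a clean one: even $\ell$ is excluded outright because $d_{2m}+1 = T_m^2$ is a perfect square, and $4p+1 = k^2$ with $k$ odd forces $\frac{k-1}{2}\cdot\frac{k+1}{2}=p$, hence $p=2$, contradicting $p$ odd. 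That lets you restrict to $\ell \ge 5$ in the final case, which is what makes the sharp constant $31$ available.

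The genuine gap is the concluding size estimate. You assert, ``by monotonicity,'' that $R_\ell(T_1) = d_\ell/d_1 \ge 31$ uniformly over $\ell \ge 5$ with $\gcd(\ell,6)=1$ and $T_1 \ge 3$, and you acknowledge that ``some care is needed'' without supplying it. This is precisely where the paper invokes its independently proved Proposition~\ref{dmn}, the growth band $d_1^\ell > d_\ell > \left(\tfrac23\right)^\ell d_1^\ell$ valid for $D\neq 5$: for example if $d_1 = p$, then $4p = d_\ell > (2/3)^\ell p^\ell$ forces $p < 9$, so $d_1 \in \{5,7\}$, and then $d_1$ odd forces every $d_\ell$ odd by \eqref{twoodd}, a contradiction; $d_1 \in \{2p,4p\}$ fall out similarly. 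To complete your argument you would either need to prove monotonicity of $R_\ell$ in both parameters (plausible but not automatic) or import a growth bound of the paper's flavour. Your closing worry about handling $N(u_K)=\pm1$ uniformly is, incidentally, unfounded: once $T_1 = \tr(u_D)$ is fixed as a rational integer $\ge 3$, the Chebyshev computation is indifferent to the norm of the fundamental unit, and $T_1 = 3 \Leftrightarrow D = 5$ holds in either case.
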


\subsubsection*{New primes appearing at level $\ell$}

Going up the dimension towers introduces new prime divisors into 
the $d_\ell(D)$ at each successive value of $\ell$, except in some slightly pathological cases which we 
have excluded from the next result.  

\begin{proposition}\label{knewp}
  Fix $D$ square-free as above, $\ell \in \N$, and assume that the
  following holds:
   \begin{equation}
   \text{If $\ell=2$ then $d_1(D)$ is not of the form $2^N+2$ for some $N\geq2$.} \label{star}
   \end{equation}
  Then there exists a rational prime $\p$ which appears for the first time
  at level $\ell$ in the tower above $\QD$.  In other words, $\p\mid
  d_\ell(D)$ but $\p\mathrel{\nmid}d_{\ell'}(D)$ for any $1\leq
  \ell' < \ell$.
\end{proposition}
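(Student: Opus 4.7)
The plan is to reduce the claim to the existence of a primitive prime divisor of the cyclotomic element $\Phi_{3\ell}(v) \in \zk$, where $v = u_D = u_K^2$, and then to invoke a Zsygmondy-type theorem. The key identity is
\[
  v^\ell d_\ell = v^{2\ell} + v^\ell + 1 = \frac{v^{3\ell}-1}{v^\ell-1},
\]
valid because $v$ is a unit. It shows that a prime ideal $\pp$ of $\zk$ divides $d_\ell$ if and only if the multiplicative order $e(\pp) := \operatorname{ord}_\pp(v)$ divides $3\ell$ but not $\ell$, equivalently $v_3(e(\pp)) = v_3(\ell)+1$ and the $3$-free part of $e(\pp)$ divides $\ell$. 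Since $\tau(v) = v^{-1}$ and $\zk/\pp^\tau \cong \zk/\pp$ via $\tau$, the order $e(p) := e(\pp)$ depends only on the rational prime $p$ below $\pp$. It follows that the least $\ell$ at which $p \mid d_\ell$ is $\ell_{\min}(p) = e(p)/3$ when $3 \mid e(p)$ (otherwise $p$ never appears in the tower), so producing a prime that first appears at level $\ell$ is the same as producing a prime $p$ with $e(p) = 3\ell$, i.e., a primitive prime ideal divisor of $\Phi_{3\ell}(v)$.

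For $\ell = 1$ there is nothing to show: any prime divisor of $d_1 = \Phi_3(v)/v$ is vacuously primitive. The case $\ell = 2$ is the delicate one and motivates hypothesis~\eqref{star}. The recursion $V_2 = V_1^2 - 2$ gives $d_2 = d_1(d_1 - 2)$ and $\Phi_6(v) = v(d_1 - 2)$, so I must produce an odd prime $p \neq 3$ dividing $d_1 - 2$ with $e(p) = 6$. Reducing modulo $\pp$ shows that each of the alternatives $e(p) \in \{1,2,3\}$ forces, upon substitution of the appropriate congruence for $v$ into $V_1 - 1 = d_1 - 2$, the divisibility $p \mid 3$; a similar direct calculation rules out $p = 2$ having order $6$. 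Hence any failure of primitivity at level $2$ requires $d_1 - 2$ to be $\{2,3\}$-smooth; the sub-case in which it is a pure power of $2$ is exactly the excluded family $d_1 = 2^N + 2$, and the handful of other $\{2,3\}$-smooth values of $d_1$ arise only for very small fundamental units and can be enumerated and verified by direct inspection.

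For $\ell \geq 3$ the main tool is the Bilu--Hanrot--Voutier theorem on primitive divisors of Lucas sequences, applied to the pair $(v, v^{-1})$ with integer trace $P = v + v^{-1}$ and norm $Q = 1$: this guarantees that $\Phi_n(v)$ admits a primitive prime divisor for every $n > 30$, and the exceptional pairs $(P,n)$ for $n \leq 30$ are fully classified and can be matched against the small initial segment of each dimension tower. The main obstacle is translating between this classical, integer-valued framework and our setting in $\zk$: one applies BHV to the integer norm $N_{K/\Q}(\Phi_n(v))$ and then observes that primitive rational prime divisors of this norm correspond to primes $p$ with $e(p) = n$, modulo the usual cyclotomic coincidences $\pp \mid \Phi_d(v)$ for $d \mid n$, $d < n$, which by a standard lemma can only occur when $p \mid n$ and contribute only a bounded amount that is outweighed by the growth $|\Phi_n(v)| \asymp v^{\phi(n)}$.
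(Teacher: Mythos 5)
Your approach is genuinely different from the paper's. The paper argues by contradiction: if no new prime appears at level $\ell$, a case analysis on the largest prime $\rho \mid \ell$ (using the growth estimate of Proposition~\ref{dmn} together with the $q$-primary growth law of Proposition~\ref{ants}) forces $\rho \leq 2$, then $\ell = 2$, then $d_1 = 2^t+2$, which collides with hypothesis~\eqref{star}. You instead reduce the claim to the existence of a primitive prime divisor of $\Phi_{3\ell}(v)$ and appeal to Bilu--Hanrot--Voutier. Your reduction is correct: the identity $v^\ell d_\ell = (v^{3\ell}-1)/(v^\ell-1)$, together with $\tau(v)=v^{-1}$, shows that $p$ first divides the tower at level $\ell$ precisely when $\ord_\pp(v) = 3\ell$ for any $\pp \mid p$ (this matches the paper's~\eqref{knought}). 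Conceptually this is a cleaner picture (``new prime $\leftrightarrow$ primitive divisor of $\Phi_{3\ell}$''), at the cost of importing a deep external theorem where the paper gives a short self-contained argument.

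However, the $\ell \geq 3$ step as written has a real gap. With the Lucas pair $(\alpha,\beta)=(v,v^{-1})$ one has $\alpha/\beta=v^2$, so the rank of apparition of $p$ in $u_n=(\alpha^n-\beta^n)/(\alpha-\beta)$ is $\ord_\pp(v^2)$, not $\ord_\pp(v)$. A primitive prime divisor of $u_{3\ell}$ therefore only gives $\ord_\pp(v^2)=3\ell$, i.e., $\ord_\pp(v)\in\{3\ell,6\ell\}$; the latter case yields a prime first appearing at level $2\ell$, not $\ell$, and BHV gives you no control over which occurs. The correct vehicle is the Lehmer pair $(\sqrt{v},\,1/\sqrt{v})$, with $(\alpha+\beta)^2 = d_1+1$ and $\alpha\beta=1$, for which $\alpha/\beta=v$ exactly; BHV covers Lehmer sequences too, and its primitive divisors are precisely primes with $\ord_\pp(v)=n$. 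Separately, for $3 \leq \ell \leq 10$ one has $3\ell \leq 30$, so the BHV exceptional tables must be checked explicitly against the possible Lehmer parameters $R=d_1+1$, $Q=1$; you assert this but do not carry it out. Finally, in your $\ell=2$ case the ``enumerate the $\{2,3\}$-smooth values'' step is unnecessary and a little misleading: since $\gcd(d_1,d_1-2)\leq 2$, if $3\mid d_1-2$ then $3\nmid d_1$ while $3\mid d_2 = d_1(d_1-2)$, so $3$ itself is the new prime; the only genuinely problematic case is $d_1-2$ a pure power of $2$, which is precisely what~\eqref{star} excludes.
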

The technical restriction required on the pair $D$, $\ell$ in the
statement is easily seen to be irrelevant to cases of the
form $d_\ell(D) = n^2+3$, since by Lemma~1 of~\cite{ABGHM} this would
require solving an equation of the form $m^2 = (n+1)(n-3)$ in
integers, which is not possible.
Hence imposing this condition does not affect 
any application to the main part of the paper.

\subsubsection*{The growth of the $d_\ell(D)$ in a $\Z_q$-tower}\label{growth}
Now we get to the heart of the behaviour at a fixed prime $q \geq 5$, 
which we simply describe for reasons of brevity. 
Let $\ell_0$ denote the first value of $k$ for which $q \mid d_k(D)$. 
The ray class fields of $K$ with 
conductors ${{d}}_{q^r\ell_0}$, for $0 \leq r \leq \infty$, which 
are those naturally attached to the filament of the dimension 
towers with strictly increasing powers of $q$ in the conductor, 
contain the (totally real) cyclotomic $q^n$-power extensions of $K$, and
indeed the compositum of all of the fields $K^{( {d}_{q^r\ell_0} )}$ contains 
the cyclotomic $\Z_q$-extension of $K$.

The following result is true for all primes, but in a 
`shifted' form for $q=3$ which is not relevant
for this paper, so we do not state it. 

\begin{proposition}\label{ants}
Let $D$ be as above and let $q\neq3$ be a prime such that for some integers $\ell,r\geq1$, 
$q^{r} \mathrel{\Vert} d_{\ell}(D)$. 
Suppose further that $\ell$ is minimal for this property, and assume~\eqref{star}. 
Then:
\begin{center}
$q^{r+1} \mathrel{\Vert} d_{q\ell}(D)$ and $q\ell$ is the minimal such index.
\end{center}
\end{proposition}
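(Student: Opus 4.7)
Write $u = u_D$ and $V_k = u^k + u^{-k}$, so that $d_k = V_k + 1$. The sequence $(V_k)$ is a Lucas $V$-sequence with $P = V_1$ and $Q = 1$, satisfying the Dickson identity $V_{jk} = D_j(V_k,1) = 2T_j(V_k/2)$, where $T_j$ is the Chebyshev polynomial of the first kind. This reduces everything to manipulations of integer polynomials in $V_\ell$ together with $q$-adic bookkeeping on $\Z$.

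To compute $v_q(d_{q\ell})$, I exploit the classical factorization $D_q(X,1) = X^q + qR(X)$ with $R \in \Z[X]$ (each non-leading coefficient $\frac{q}{q-k}\binom{q-k}{k}(-1)^k$ of $D_q(X,1)$ carries a factor of the prime $q$), together with $X^q+1 = (X+1)S(X)$, where $S(-1) = q$ and hence $S(X) = q + (X+1)h(X)$ with $h \in \Z[X]$. Combining gives
\[
  d_{q\ell} = d_\ell\bigl(q + d_\ell\,h(V_\ell)\bigr) + qR(V_\ell).
\]
Since $q\not\equiv 0 \bmod 3$, evaluating at $V=-1$ (which corresponds to $\alpha = \zeta_3$) yields $D_q(-1,1) = \zeta_3^q + \zeta_3^{-q} = -1$, so $R(-1)=0$, and I may write $R(V) = (V+1)\tilde R(V)$. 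A short calculation using $F_q'(X) = q\,U_{q-1}(X/2)$ at $X=-1$ (with $U_{q-1}(-1/2) = \pm 1$ from the evaluation at $\cos(2\pi/3)$) shows $1 + \tilde R(-1) = \pm 1$, a $q$-adic unit for $q\geq 5$. Substituting,
\[
  d_{q\ell} = d_\ell\cdot\Bigl[q\bigl(1 + \tilde R(V_\ell)\bigr) + d_\ell\,h(V_\ell)\Bigr],
\]
in which the first summand has $v_q = 1$ and the second has $v_q\geq r$; for $r\geq 2$ this immediately gives $v_q(d_{q\ell}) = r+1$. In the delicate case $r=1$ the two summands have equal $q$-adic valuation, but the saving grace is that $h(-1) = S'(-1) = -q(q-1)/2$ already carries a factor of $q$, so writing $d_\ell = qm$ with $q\nmid m$ one finds $d_\ell\,h(V_\ell) \equiv qm\,h(-1) \equiv 0 \pmod{q^2}$, and the bracket reduces to $\pm q \pmod{q^2}$; hence $v_q(d_{q\ell}) = 2 = r+1$ again.

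For minimality, any $k$ with $q\mid d_k$ has $u^k \in \{\zeta_3, \zeta_3^{-1}\}$ modulo a prime of $\zk$ above $q$; since $u^\ell$ already has order exactly $3$ there, an elementary order calculation forces $k = \ell j$ with $\gcd(j,3)=1$. Applying the polynomial machinery to $V_{\ell j} = D_j(V_\ell,1)$ and using $D_j(-1,1) = -1$ yields $d_{\ell j}/d_\ell = G_j(V_\ell)$ for some $G_j \in \Z[V]$, with $G_j(-1) = F_j'(-1) = \pm j$ by L'H\^opital. Therefore $d_{\ell j}/d_\ell \equiv \pm j \pmod{d_\ell}$, so $v_q(d_{\ell j}) = r$ whenever $q\nmid j$; no $k < q\ell$ can achieve $v_q(d_k) \geq r+1$, and combined with the computation above, $q\ell$ is exactly the minimal such index.

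The main obstacle is the $r=1$ cancellation analysis, which is averted only because of the hidden factor of $q$ inside $h(-1)$. Hypothesis~\eqref{star} is invoked to rule out the parallel pathology at $q = 2$, where the Dickson framework must be replaced by a direct $2$-adic computation based on the identity $d_{2\ell} = d_\ell(d_\ell - 2)$.
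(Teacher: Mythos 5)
Your proof is correct, and it takes a genuinely different technical route from the paper. The paper re-bases to $\ell = 1$, iterates the recursion \eqref{shiftydee} to extract the congruence $d_q/d_1 \equiv \leg{q}{3}q \pmod{d_1}$ (settling the case $r\geq 2$ at once), and then handles the delicate case $r = 1$ by passing to the completion $K_\q$: it writes $u_D = \omega + \pi\nu$ (Teichm\"uller representative plus uniformiser times a unit) and expands $d_q = 1 + (\omega+\pi\nu)^q + (\omega+\pi\nu)^{-q}$ as a $\q$-adic power series, observing that the $\pi^2$-coefficient $\frac{\nu}{\omega}(\omega^q-\omega^{-q})\epsilon$ is a unit. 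You instead stay entirely within integer polynomial algebra: packaging $d_{q\ell}$ via Dickson polynomials, you combine the Frobenius-type factorisation $D_q(X,1) = X^q + qR(X)$ with the cyclotomic factorisation $X^q+1 = (X+1)S(X)$, $S(-1)=q$, to obtain the exact global identity $d_{q\ell} = d_\ell\bigl[q(1+\tilde{R}(V_\ell)) + d_\ell\,h(V_\ell)\bigr]$, and the Chebyshev derivative evaluation $D_q'(-1) = qU_{q-1}(-\tfrac12) = \leg{q}{3}q$ shows $1+\tilde{R}(-1) = \leg{q}{3}$ is a $q$-adic unit. What your route buys is that the cancellation danger at $r=1$ is resolved purely over $\Z$ by the single observation $h(-1) = S'(-1) = -q(q-1)/2$, which carries the extra factor of $q$; in fact this gives $v_q\bigl(d_\ell h(V_\ell)\bigr) \geq r+1$ for every $r\geq 1$, so your separation into the cases $r\geq 2$ and $r=1$ is not actually needed — the bracket always has $q$-adic valuation exactly one, and $v_q(d_{q\ell}) = r+1$ follows uniformly. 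The minimality clause is treated essentially as in the paper (the order of $u_D$ modulo a prime above $q$ forcing $k = \ell j$ with $\gcd(j,3)=1$, combined with $D_j(-1,1)=-1$ and $G_j(-1) = D_j'(-1) = \pm j$). Two minor typos: your ``$F_q'$'' and ``$F_j'$'' should read ``$D_q'$'' and ``$D_j'$''.
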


We also mention that it is shown in~\cite[\S4]{Kopp} 
that for a fixed $D$, the asymptotic 
probability of a randomly-chosen prime $q\geq5$ 
dividing into $d_\ell(D)$ for some $\ell\geq5$ 
is $\frac{3}{8}$. 
The basic ingredient is the result that a prime $p \neq 3$ 
divides into some $d_\ell(D)$ if and only if the order 
of $u_K$ modulo $p\zk$ is divisible by $3$. 

\begin{remark} 
Proposition~\ref{ants} is a variant of a standard result in the theory of
$p$-adic regulators, see for example \cite[\S5.5]{wash}. 
As such, it is also the starting point for the connection with Wall-Sun-Sun primes 
and Fermat's Last Theorem. 
Define $\ell_0 = \ell_0(q,D)$ to be the minimum $\ell$ for which $q$ divides into $d_\ell(D)$. 
Essentially, the literature is replete with conjectures---usually in very different language---which assert 
both the truth, \eg, \cite{grasReg}, and the falsity, \eg, \cite{sunsun}, of the following statement:
\begin{center}
  Fix $D$. There exist only finitely many primes $p$ for which $v_p(d_{\ell_0(p,D)})>1$. 
\end{center}
\end{remark}

\subsubsection*{The growth of the $d_\ell(D)$ in $\R$}

The next statement is a corollary of a classical result
(Lemma~\ref{huaest}) and it is in turn needed for the proof of
Proposition~\ref{knewp}.  There is an easy minor modification for
$D=5$, which is only needed for the first few dimensions of that tower
anyway; but we do not go into this here.

\begin{proposition}\label{dmn}
Let $D\ne5$ be any square-free positive integer.  
For all $\ell\geq1$ and $n\geq2$:
\[
d_\ell^n > d_{\ell n} > \left(\frac{2}{3}\right)^n d_\ell^n .
\] 
\end{proposition}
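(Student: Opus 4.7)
The plan is to parametrise by $u := u_D^\ell > 1$ and rewrite the two sides as $d_\ell = u + u^{-1} + 1$ and $d_{\ell n} = u^n + u^{-n} + 1$, reducing the proposition to the purely analytic statement that
\[
u^n + u^{-n} + 1 < (u + u^{-1} + 1)^n
\quad\text{and}\quad
u^n + u^{-n} + 1 > \left(\tfrac{2}{3}\right)^n (u + u^{-1} + 1)^n
\]
for $n \geq 2$, under the single structural hypothesis $u > 1 + \sqrt{3}$.

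For the upper bound I would apply the trinomial expansion to $(u + u^{-1} + 1)^n$: the monomials $u^n$, $u^{-n}$, and $1$ each occur with multinomial coefficient $1$, while for $n \geq 2$ there are strictly positive additional terms (for example $n u^{n-1}$). Subtracting $d_{\ell n}$ therefore leaves a strictly positive quantity, and the upper bound holds.

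For the lower bound I would first discard $u^{-n} + 1 > 0$ on the left and reduce matters to showing $u^n > (2/3)^n (u + u^{-1} + 1)^n$. Taking $n$th roots, this becomes $3u > 2(u + u^{-1} + 1)$, equivalently $u^2 - 2u - 2 > 0$, equivalently $u > 1 + \sqrt{3}$. Thus the entire proposition reduces to establishing the single bound $u_D^\ell > 1 + \sqrt{3}$ for every $\ell \geq 1$ and every square-free $D \neq 5$; since $u_D > 1$, the left-hand side is monotone increasing in $\ell$, so it suffices to handle $\ell = 1$.

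The main obstacle is therefore the classical estimate for the smallest totally positive fundamental unit $u_D$. Here I would invoke Lemma~\ref{huaest} and split into cases on the sign of $N(u_K)$. If $N(u_K) = +1$ then $u_D = u_K$, and the minimum over square-free $D \geq 2$ is attained at $D = 3$ with $u_D = 2 + \sqrt{3}$. If $N(u_K) = -1$ then $u_D = u_K^2$, and omitting $D = 5$ the minimum of $u_K$ is attained at $D = 2$, giving $u_D = 3 + 2\sqrt{2}$. In either case $u_D \geq 2 + \sqrt{3} > 1 + \sqrt{3}$, closing the argument. The exception $D = 5$ is sharp: $u_5 = (3 + \sqrt{5})/2 < 1 + \sqrt{3}$, which is precisely the \emph{easy minor modification} alluded to after the statement, since for $D = 5$ the bound $u_5^\ell > 1 + \sqrt{3}$ is already restored as soon as $\ell \geq 2$.
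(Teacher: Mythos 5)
Your proof is correct and follows essentially the same route as the paper: the paper's auxiliary quantity $\ve_\ell = 1 + u_D^{-\ell} + u_D^{-2\ell}$, with $d_{\ell n} > u_D^{\ell n} = d_\ell^n/\ve_\ell^n$ and the reduction of the lower bound to $\ve_\ell < 3/2$, is exactly your condition $u_D^\ell > 1+\sqrt{3}$, and both arguments then invoke Lemma~\ref{huaest}. Your multinomial expansion for the upper bound and the explicit case split on $N(u_K)$ simply spell out the steps the paper leaves as ``immediate from the definition'' and ``by direct calculation''.
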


\noindent
Finally, we mention another interesting structural result on the 
dimension towers, which 
can be proven by modifying results of~\cite{tbang} on 
congruences among coefficients of the Chebyshev 
polynomials of the first kind $\cheb{n}{x}$, to apply 
them to the shifted polynomials $\chsh{n}{x}$ 
defined in eq.~\eqref{shifty}. 
\begin{proposition}
Let $k,l \in \N$ satisfy $\gcd(k/l,3) = 1$. 
Then $\gcd(d_k,d_l) = d_{\gcd(k,l)}$.  \qed
\end{proposition}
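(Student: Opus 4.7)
The plan is to convert the sequence $d_k = u_D^k + u_D^{-k} + 1$ into a cyclotomic factorisation picture and then transfer divisibility information from $\Z[u]$ to $\Z$. Writing $k = 3^a k'$ with $\gcd(k',3) = 1$, the identity $u^{2k} + u^k + 1 = (u^{3k}-1)/(u^k-1)$ yields the factorisation
\begin{equation*}
A_k(u) \;:=\; u^k(u^k + u^{-k} + 1) \;=\; \prod_{j \mid k'} \Phi_{3^{a+1}j}(u),
\end{equation*}
since the divisors of $3k$ that do not divide $k$ are precisely those of the form $3^{a+1}j$ with $j \mid k'$.  Reading the hypothesis $\gcd(k/l, 3) = 1$ as the statement $v_3(k) = v_3(l) = a$, and setting $m = \gcd(k,l) = 3^a \gcd(k', l')$, the factor sets line up so that $\gcd(A_k(u), A_l(u)) = A_m(u)$ in $\Z[u]$.

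First I would establish $d_m \mid \gcd(d_k, d_l)$.  From the polynomial divisibility $A_m(u) \mid A_k(u)$, evaluating at $u = u_D$ gives $A_m(u_D) \mid A_k(u_D)$ in $\zk$; the powers of $u_D$ are units of $\zk$, so $d_m \mid d_k$ in $\zk$, and since both sides are rational integers the divisibility descends to $\Z$; the same argument applies with $l$ in place of $k$.

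The reverse inclusion $\gcd(d_k, d_l) \mid d_m$ I would argue prime by prime.  Fix a rational prime $p \mid \gcd(d_k, d_l)$, choose a prime $\pp$ of $\zk$ above $p$, and let $r$ be the order of the image of $u_D$ in $(\zk/\pp)^\times$.  The hypothesis $p \mid d_k$ means that $u_D^k$ reduces to a primitive cube root of unity modulo $\pp$, equivalently $r \mid 3k$ and $r \nmid k$; writing $r = 3^c s$ with $\gcd(s,3) = 1$, these two constraints force $c = a+1$ and $s \mid k'$.  The same analysis with $l$ in place of $k$ gives $s \mid l'$, hence $s \mid \gcd(k', l')$, so that $r \mid 3m$ and $r \nmid m$.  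That is precisely the statement that $u_D^m$ is also a primitive cube root modulo $\pp$, so $p \mid d_m$.  To promote this to the equality $v_p(\gcd(d_k, d_l)) = v_p(d_m)$ of $p$-adic valuations, I would repeat the analysis in $(\zk/\pp^N)^\times$ for growing $N$, using the standard filtration of the local units to track cube roots of unity by Hensel-style lifting.

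The main obstacle is the prime $p = 3$, where Hensel lifting of cube roots fails and the $3$-part of the cyclotomic factorisation of $A_k$ interferes with the rest of the argument.  It is precisely here that the hypothesis $v_3(k) = v_3(l)$ is indispensable (it rules out pathological comparisons such as $(k,l) = (1,3)$ or $(3,9)$), and this is where I would import the modification of Thacker's congruences on the coefficients of $\chebh{n}$ to the shifted polynomials $\chshh{n}$: via the self-composition identity $d_{mn} = \chsh{m}{d_n}$, this pins down $v_3(d_k)$, $v_3(d_l)$ and $v_3(d_m)$ explicitly, closing the equality at $p = 3$ and completing the proof.
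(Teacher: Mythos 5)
Your proof takes a genuinely different route from the one the paper sketches.  The paper's one\hbox{-}line hint is to push through a congruence analysis on the coefficients of the shifted Chebyshev polynomials $\chshh{n}$, transported from the classical $\chebh{n}$, and to read off the claim from the self\hbox{-}composition law $\chsh{m}{\chsh{n}{x}} = \chsh{mn}{x}$; this handles all primes uniformly, including $3$.  You instead split the statement into a polynomial\hbox{-}divisibility containment and a prime\hbox{-}by\hbox{-}prime reverse containment.  The forward half, via $A_k(X) = \prod_{j\mid k'}\Phi_{3^{a+1}j}(X)$, is correct and pleasantly explicit, though note it is essentially a re\hbox{-}derivation of the observation already recorded in \eqref{onetwo} that $d_\ell\mid d_{\lambda\ell}$ whenever $3\nmid\lambda$ --- and it is here that the hypothesis $v_3(k)=v_3(l)$ is first used, to keep the $3$\hbox{-}power in the factor sets aligned.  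Your reverse half at a prime $q\neq3$ is also correct at first order: the order $r$ of $u_D$ in $(\zk/\pp)^\times$ satisfies $r\mid 3k$, $r\nmid k$, forcing $v_3(r)=a+1$ and the prime\hbox{-}to\hbox{-}$3$ part of $r$ to divide $\gcd(k',l')$, so $q\mid d_m$.  The promotion to valuations needs a little more unpacking than you give it, but it does go through: since $\gcd(d_k,D)\in\{1,3\}$ the primes $q\neq 3$ in play are unramified, the $3$\hbox{-}part of the order of $u_D$ in $(\zk/\pp^s)^\times$ is constant in $s$, and one finds $v_\pp(d_k)\geq s$ iff $r_s/3$ divides $k$, which makes $v_\pp(d_m)=\min\bigl(v_\pp(d_k),v_\pp(d_l)\bigr)$ immediate.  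This is in fact the same Teichm\"uller\hbox{-}lift mechanism the paper uses in the proof of Proposition~\ref{ants}, so your approach sits naturally alongside the rest of the appendix.  The residual gap is exactly where you say it is: at $q=3$ the cube\hbox{-}root argument degenerates ($\Phi_3\equiv(X-1)^2\bmod 3$) and you fall back on the very same Chebyshev\hbox{-}coefficient congruences the paper cites, so at that prime your write\hbox{-}up is no more self\hbox{-}contained than the paper's.  What your route buys is a transparent, quantitative handle on the primes $q\neq3$ and a clear view of where the hypothesis $v_3(k)=v_3(l)$ enters; what the paper's Chebyshev route buys is a uniform treatment that absorbs $q=3$ without a separate case.
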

\noindent
The hypothesis says that the power of $3$ dividing into $k$ is the
same as that dividing into $l$.  Whenever the $3$-adic valuation of
the rational number $k/l$ is different from zero, the corresponding
gcd of the dimensions is just $1$ or $3$, as may be seen from the
assertions \eqref{zeroed} and \eqref{onetwo} below.

\subsubsection*{Example: $D=5$}

When $D=5$, the asssertions \eqref{zeroed} and \eqref{onetwo} show
that $4$ must divide $d_\ell(5)$ for every odd $\ell$ coprime to $3$.
So all things being equal, one would expect an infinite sub-sequence
of the dimensions above $\Q(\sqrt{5})$ to be of the form $4p$ for $p$
prime.  We know that prime dimensions (when $D=5$) can occur only for
$\ell$ a power of $3$, and the entries are known to be primes $d_\ell
= p$ for $\ell = 3,9$.  We have used probabilistic primality tests to
look for more examples, and found none up to $\ell = 3^{12}$.  From
eq.~\eqref{onetwo} and Proposition~\ref{knewp} below one would expect,
given that $d_1(D)=4$, that dimensions of the form $d_\ell=4p$ would
be easier to find.  This is indeed the case: they occur for $\ell =
5$, $7$, $11$, $211$, $419$, $557$, $769$, $991$, $1259$, $1669$,
$2927$, $3607$, $4391$, $5857$, $7727$, $14591$, $16127$, $22453$,
with five more cases ($\ell=27827$, $51427$, $60103$, $61657$,
$93251$) for $\ell \le 94000$ being probably of this form.

We mention the first few of these: $d_1(5) = 4$, $d_5(5) = 4\times
31$, $d_7(5) = 4\times211$, $d_{11}(5)= 4\times
9901$,\\
{\footnotesize{$d_{211}(5) {=}
    4\times3896944262364468984431989653605889395802956455451592871589240445325974800179030855254151$}};
higher levels in the tower are too large to print here.  For the
growth rates, see Proposition~\ref{dmn}.

Using Pari/GP, it took about four weeks of CPU time to verify that
$d_\ell/4$ is prime for $d_{22453}$ with $9385$ digits.  The next
candidate $d_{27827}$ has $11631$ digits.  As a new result in this
paper, we have shown SIC existence for $\ell = 11$ in this series,
with $5$ digits.

\subsection{Proofs of the propositions}

We first set out the notation and some general straightforward 
deductions from the definitions, then we go on to 
prove each proposition in turn. 

\subsubsection*{Modified Chebyshev polynomials to navigate the tower}\label{modcheb}
Although they are clearly monotonically increasing with~$\ell$, the
dimensions in the same `tower complex' above a fixed~$\QD$
nevertheless only form a partial order in terms of divisibility, not a
total order.  Hence in order to navigate the towers arithmetically, we
introduced in~\cite{AFMY} a family of functions which enable us to
calculate any dimension $d_{n\ell}(D)$ given the value of $d_\ell(D)$,
akin to the role played by the usual Chebyshev polynomials $T_n(x)$,
$U_m(x)$ for the hyperbolic trigonometric functions.  These modified
Chebyshev polynomials of the first kind are defined by:
\begin{equation}\label{shifty}
\chsh{n}{x} = 1+2\cheb{n}{\tfrac{x-1}{2}},
\end{equation}
where $T_n(x)$ is the usual Chebyshev 
polynomial of the first kind of degree $n$. 
The $\chsh{n}{x}$ also satisfy the fundamental relation
\begin{equation}\label{nest}
  \text{$\chsh{m}{\chsh{n}{x}} = \chsh{mn}{x}$ for every $m,n\geq 0$.}
\end{equation}
The $\chshh{n}$ are independent of $D$. So once we 
know $d_0(D) = 3$ (which is true trivially for all $D$: see eq. \eqref{deezero}) 
and $d_1(D)$ (which requires that we know the 
fundamental unit of $\QD$), we know 
all $d_\ell(D)=\chsh{\ell}{d_1}$. 
The first few polynomials are $\chsh{0}{x}=3$, $\chsh{1}{x}=x$, 
$\chsh{2}{x}=x^2-2x$,  $\chsh{3}{x}=x^3-3x^2+3$.  
The defining recursion $\cheb{n}{x} = 2x\cheb{n-1}{x} - \cheb{n-2}{x}$ 
for the $\chebh{n}$ yields for the $\chshh{n}$:
\begin{equation}\label{shiftycur}
\chsh{n}{x} = x\chsh{n-1}{x} - x\chsh{n-2}{x} + \chsh{n-3}{x}.
\end{equation}

We mention some straightforward general consequences of these
definitions for later use; see also Proposition~6 of \cite{AFMY}.  Fix
$D$ and $d_\ell=d_\ell(D)$ for some $\ell\geq1$ and let $\lambda\geq1$
be an integer.  These next facts, immediate
from eqs.~\eqref{nest}, \eqref{shiftycur}, apply to all $D$ and all $\ell$.
Firstly, we may re-write eq.~\eqref{shiftycur} setting the variable $x$ to
be some given $d_\ell = d_\ell(D)$, viz.
\begin{equation}\label{shiftydee}
d_{n\ell} = d_\ell d_{(n-1)\ell}- d_\ell d_{(n-2)\ell} + d_{(n-3)\ell}.
\end{equation}
The following facts are immediate: 
\begin{alignat}{5}
\text{If $\lambda\equiv0 \bmod 3$:}&\quad\text{$d_{\lambda \ell}-3$ is a multiple of $d_\ell$.}\label{zeroed}\\
\text{If $\lambda\equiv1,2 \bmod 3$:}&\quad\text{$d_{\lambda \ell}$ is a multiple of $d_\ell$.}\label{onetwo}
\end{alignat}
  So in particular, 
      \begin{equation}\label{ddivd}
       \text{If $\gcd(\ell,3)=1$, this implies that $d_\ell$ is divisible by $d_1$.}
    \end{equation}
    Moreover, 
    \begin{equation}\label{twoodd}
    \text{$d_1(D)$ is odd if and only if $d_\ell(D)$ is odd for every $\ell \in \N$.}
    \end{equation}
   In other words, if $2$ divides into $d_\ell(D)$ for some $\ell$ then it already divides into $d_1(D)$. 
    \begin{equation}\label{threed}
    	\text{$d_{3\ell}(D)$ is odd for every $\ell \in \N$.}
    \end{equation}
    \begin{equation}\label{always3}
	\text{The prime $3$ always divides into one of $d_1(D)$, $d_2(D)$ or $d_4(D)$.}
    \end{equation}
This last assertion follows because the map $d_j \mapsto d_{2j} = d_j(d_j-2)$ considered
in a general variable $t$ modulo $3$ goes 
$1\mapsto2\mapsto0\mapsto0\mapsto\ldots$, and so $3$ 
must divide into at least one of $d_1$, $d_2$ or $d_4$ 
no matter what $D$ is.

Recall from Section~\ref{growth} that if a prime~$q\geq5$ divides into
some $d_\ell(D)$, we write $\ell_0=\ell_0(q,D)$ for the minimum such $\ell$. 
Further, let us write $\Omega=\Omega(q,D)$ for the order of the image of $u_D$ in 
  $\mg{q}$, the multiplicative group of the quotient ring $\ag{q}$. 
Finally, $\leg{q}{p} \in \{\pm1\}$ denotes the Legendre symbol. 
As alluded to in that section above as well, 
the next result may be deduced from the first section of the proof of
Proposition~4.2 in \cite{Kopp}; or alternatively by using the 
techniques in Lemma~12 of~\cite{AFMY}. 
\begin{equation}\label{knought}
 \text{For a given $D$ and $q$ as above: $\ell_0 = \Omega/3$.} 
\end{equation}
In particular, therefore, $1\leq \ell_0 \leq \frac{q-\leg{q}{3}}{3}$. 
Indeed it is further implicitly proven there that given some
fixed $D$, then:
\begin{equation}\label{essdee}
 \text{A prime $p\neq3$ occurs in $d_\ell(D)$ for some $\ell\in\N$ if and only if $3 \mid \Omega(p,D)$.} 
\end{equation}
Finally, let $\ell' \in \N$. 
Then 
\begin{equation}\label{kzero}
  \text{$q \mid  d_{\ell'}$ if and only if $\ell' = \lambda \ell_0$ for some $\lambda\in\N \setminus 3\N$.}
\end{equation}
The sufficiency
is just \eqref{onetwo}, again since $q\neq3$; the necessity follows by deriving a
contradiction to the minimality of $\ell_0$ using the same techniques as
in the previous paragraph, together with the observation that since $\lambda$ must
be co-prime to $3$, raising to the power $\lambda$ is just an
automorphism on the cube roots of unity.

The final simple observation we make---which again is one of the natural
symmetries around $3$ which occur throughout the SIC problem---is that
by \emph{defining} the $\chsh{\ell}{x}$ for negative integers $\ell$
via the relation~\eqref{shiftycur} we recover the obvious symmetry in
eq.~\eqref{deeell} itself whereby we could easily define $d_{-\ell} =
d_\ell$ in the same formula for all $\ell \in \Z$.  This gives as a
by-product, as observed before, that for \emph{every} $D$,
\begin{equation}\label{deezero}
d_0(D) = 3. 
\end{equation}
This is one way of viewing the ineluctable $3$-symmetry at the heart of every SIC.

\subsubsection*{Proof of Proposition~\ref{ellone} in the $4p$ case: $D=5$ or $\ell=1$}
\begin{proof}
  First of all, we may discard any values of $\ell$ which are
  divisible by $3$ thanks to \eqref{threed}.  Next, by~\eqref{ddivd},
  for $d_\ell$ to have the form $d_\ell=4p$ for some odd prime $p$
  (recalling that all dimensions arising via traces as in
  eq.~\eqref{deeell} must be greater than $3$) if $\ell > 1$ then
  $d_1$ must either be $4$ or $p$, forcing respectively either $D=5$,
  or else (by the lower bound in Proposition~\ref{dmn} which we shall
  prove independently below) $d_1$ is a prime less than $9$,
  meaning $d_1 = 2$, $3$, $5$ or $7$.  We exclude $d_1<4$ by construction
  (see eq.~\eqref{deeell}); $d_1 = 5$ or $7$ precludes
  any $d_\ell(D)$ being even by \eqref{twoodd}.  This completes the
  proof of the proposition.
\end{proof}

\subsubsection*{Proof of Proposition~\ref{knewp}: new primes appear in the tower at every level}

\begin{proof} 
Let $\ell\geq2$ be the minimal index for which no new prime appears in
the tower $d_\ell(D)$. 
Write $\rho\geq2$ for the largest rational 
prime divisor of the index $\ell$ itself.  
Suppose first that $\rho\neq3$, so that we fall within the ambit of
Proposition~\ref{ants}.  Now $d_j$ is always a 
strictly increasing sequence for $j\geq1$:
since no new primes appear at level $\ell$, 
Proposition \ref{ants} says that in fact
$d_\ell = \rho d_{{\ell/\rho}}$.  Proposition~\ref{dmn}, 
with $m=\ell/\rho$ and $n=\rho$, then implies that $d_\ell = \rho
d_{{\ell/\rho}}  > \left(\frac{2}{3}\right)^\rho{d_{{\ell/\rho}}}^\rho$, that is:
$d_{{\ell/\rho}}^{\rho-1} < \left(\frac{3}{2}\right)^\rho \rho $.  But $d_{{\ell/\rho}}
\geq d_1 \geq 2^2$ so $2^{2\rho-2} < \left(\frac{3}{2}\right)^\rho
\rho$ or in other words $\left(\frac{8}{3}\right)^\rho < 4\rho$, 
forcing $\rho<3$, \ie, $\rho=2$. 

So this only leaves the cases $\rho=2$ and $\rho=3$. 
If $\rho=3$ then the prime $3$ itself cannot be `new' by~\eqref{always3}. 
Moreover $2$ cannot be new either 
as $d_\ell = d_{3\frac{\ell}{3}}$ is odd by~\eqref{threed}. 
Recall $d_\ell = \chsh{3}{d_{ \ell  / 3 }} = {d_{ \ell  / 3 }}^3 - 3{d_{ \ell  / 3 }}^2 + 3$.  
We have the following two cases.  
If $3\mathrel{\nmid} d_{ \ell  / 3 }$ then $d_\ell$ is odd and coprime to $3d_{ \ell  / 3 }$, hence it contains one or more `new' odd primes different from $3$, contradicting our assumption.  
On the other hand if $3\mid d_{ \ell  / 3 }$ then $\gcd( d_\ell , d_{\ell /3} ) = 3$, but by Proposition~\ref{dmn}, $d_{\ell } > \frac{8}{27}d_{ \ell / 3 }^3 \geq 18$, as $d_{\ell /3} \geq 4$.   
Again since $d_{\ell }$ is odd, there must be one or more `new' odd primes different from $3$ which cannot divide into $d_{ \ell  / 3 }$.  

So $\rho=2$: that is to say, $\ell=2^s$ for some $s\geq1$.  The
explicit expression for $\chsh{2}{X}$ yields $d_{2^s} / d_{2^{s-1}} =
d_{2^{s-1}}-2$.  In particular therefore $\gcd(d_{2^{s-1}} , d_{2^s} /
d_{2^{s-1}}) = 1$ or $2$.  Suppose for a moment that $s\geq2$.  By
Proposition~\ref{ants} for the case $p=2$, recalling that we are under
assumption~\eqref{star}, $v_2({ {d_{2^s}} / {d_{2^{s-1}}} }) \leq 1$;
whereas direct calculations based upon relation \eqref{accbound} below
show that ${ {d_{2^s}} / {d_{2^{s-1}}} } > 2$.  So ${ {d_{2^s}} /
  {d_{2^{s-1}}} }$ must contain a divisor which is coprime to $2
d_{2^{s-1}}$, a contradiction.  Hence $s=1$, \ie, $\ell=2$, and we now
seek examples where any prime divisor $q$ of ${d_2}/{d_1}$ is already
a divisor of $d_1$.  Applying Proposition~\ref{ants} once more then
shows that such a situation can only occur for the prime $=2$, since
$\ell_0(2,D)=1$.  So $d_1 - 2 = d_2 / d_1 = 2^t$ for some $t\geq1$; or
in other words $d_1 = 2^t + 2$.

It only remains to show that $l=1$ whenever some $d_l$ has the form $2^N + 2$. 
But we have just shown that $\ell=2$ is a necessary condition for no new primes to appear in $d_\ell$, and from the form of $\chshh{2}$ we know that $d_{2l}=2^N d_l$ with $d_l$ even; so $l$ must be $1$, so we are in the situation excluded by~\eqref{star} and hence the proposition is proved.   
\end{proof}

\subsubsection*{Proof of Proposition~\ref{ants}: $q$-primary growth in the dimension towers}

We use the standard notation $\fl{t}$ for the greatest integer $\leq t$.  

\begin{proof}
The statement amounts to saying that 
$v_q(\frac{d_{q\ell}}{d_\ell}) = 1$, 
and that for any $\ell'$, $\ell<\ell'<q\ell$, 
$v_q(d_{\ell'}) \leq r$.
Notice that by induction the latter would then follow for all $1\leq \ell' \leq \ell$ as well. 
While this is true for the prime $p=2$ under the hypothesis~\eqref{star}, 
it requires a tedious modification of the argument and so for this proof we assume $p \ge 5$. 

The proof below is purely algebraic in nature and does not require any
arithmetic specifically related to $D$; hence for ease of exposition
we make the following notational simplification.  Since all of these
statements are relative to a `base dimension' $d_\ell$, without loss
of generality we may, thanks to relation \eqref{nest}, re-base
everything to $d_1$ and so $d_n$ will be understood as what was
denoted by $d_{n\ell}$ prior to the re-basing, et cetera.  Another way
to think of this is to `re-base' the definition of $u_D$ to
$u_D^\ell$.

Our starting point is to continue the decomposition 
of $d_q$ given in eq.~\eqref{shiftydee} 
into terms of lower degree by iteratively substituting for each 
term of the form $d_{q-3\lambda}$ the corresponding 
recursion relation from eq.~\eqref{shiftydee}, for $0 \leq \lambda \leq \flfrac{q}{3}-1$. 
The result is a sum of the form 
\begin{multline}
d_{q} = d_1 d_{q-1}  -  d_1 d_{q-2}  +  d_1 d_{q-4}  -  d_1 d_{q-5}  + \ldots \\
+  d_1 d_{q-1-3\lambda}  -  d_1 d_{q-2-3\lambda}  +  \ldots  
+  \begin{cases}
	d_1, & \text{if $q\equiv 1 \bmod 3$;} \\
	d_{2} = d_1(d_1-2), & \text{if $q\equiv 2 \bmod 3$.} 
\end{cases}
\end{multline}
This we in turn split into two sums:
\begin{equation}
d_{q}      =     d_1 \sum_{\lambda = 0}^{ \flfrac{q}{3}-1 } d_{q-1-3\lambda} 
			- d_1 \sum_{\lambda = 0}^{ \flfrac{q}{3}-2 } d_{q-2-3\lambda} 
+  \begin{cases}
	d_1, & \text{if $q\equiv 1 \bmod 3$;} \\
	d_1(d_1-2), & \text{if $q\equiv 2 \bmod 3$.} 
\end{cases}
\end{equation}
Now, when $q\equiv 1 \bmod 3$ we know by \eqref{zeroed} 
that the sequence of terms $d_{q-1}$, $d_{q-4}$, $\ldots$ , $d_{q-1-3\lambda}$, $\ldots$ , $d_{3}$ all lie 
in the coset $3+d_1\Z$; similarly when $q\equiv 2 \bmod 3$ the same
is true of the terms in the sequence $d_{q-2}$, $d_{q-5}$, $\ldots$ , $d_{q-2-3\lambda}$, $\ldots$ , $d_3$:  both sequences taken as $\lambda$ runs from $0$ to $\flfrac{q}{3}-1$. 
In either case the remaining half of the terms just lie in $d_1\Z$. 
In other words, we get something divisible by $d_1^2$ plus a term which is $d_1$ times the sum of $3\leg{q}{3}$, counted $\flfrac{q}{3}$ times. 
Dividing these expressions by $d_1$ on both sides, we are left modulo $d_1$ with 
$$
\frac{d_q}{d_1}  \equiv 
3 \leg{q}{3} \flfrac{q}{3}  +  
\left.
\begin{cases}
	1, & \text{if $q\equiv 1 \bmod 3$} \\
	-2, & \text{if $q\equiv 2 \bmod 3$} 
\end{cases} 
\right\}
\bmod d_1
$$
which in each case gives us $\frac{d_q}{d_1}  \equiv \leg{q}{3}q \bmod d_1\Z$. 
So indeed $q \mid \frac{d_q}{ d_1 }$. 
If $r \geq 2$, that is if $q^2 \mid d_1$, then this proves 
what we need, since then $q \mathrel{\Vert} \frac{d_q}{ d_1 }$. 

It remains to show this holds for the 
case $r=1$. 
That is, when $q \mathrel{\Vert} d_\ell$ we must show 
that $q^2 \mathrel{\nmid} \frac{d_{q\ell} }{ d_\ell }$. 
By \eqref{kzero} and the minimality of $\ell_0$ it 
suffices to show this for $\ell=\ell_0$, which we now re-base as 
before to $\ell_0=1$. 
So it suffices now to show that $q^2 \mathrel{\Vert} d_{q} $.

Let $\q$ be a prime of $K$ above $q$ and $\Z_{\q}$ the ring of integers of the
completion $K_\q$ of $K$ at the place $\q$. 
The discriminant of the extension $K/\Q$ is either $D$ or $4D$, and 
$\gcd(d_k(D),D) \in \{ 1 , 3 \}$ for every $k,D$ by the definition of $d_k(D)$: 
hence as a divisor of $d_k(D)$, $q$ 
is not ramified in $K/\Q$, by the assumption that $q\neq2,3$. 
Indeed $q$ may be split or inert in $K/\Q$; if
it is split then the following argument can be made at either of the
primes $\q$ or $\ol\q$ above $q$. 
In particular $v_\q(q) = v_q(q) = 1$ and so we 
may write $q = \pi\epsilon$ for some $\epsilon\in\Z_\q^\times$ 
and some choice $\pi$ of uniformiser for $\q$. 

By the Chinese remainder theorem and~\eqref{knought}, the images under
reduction modulo $d_1\zk$ of $u_D$ and $u_D^{-1}$ have to be primitive
cube roots of unity in every component of $\ag{d_1}$, and, in
particular, in $\zk/(q)$; otherwise the image of $u_D^2+u_D+1$ cannot
be zero, as $q\neq3$.  Moreover by our assumption that $q^2
\mathrel{\nmid} d_1$, we know that $u_D^{3} \not\equiv 1 \bmod
q^2\zk$: for assume the contrary.  Then $(u_D-1)u_Dd_1 =
(u_D-1)(u_D^2+u_D+1) \equiv 0 \bmod q^2$, meaning that $q \mid
(u_D-1)$, a contradiction to the minimality of $\ell_0$ (which we
recall was rebased to $1$, meaning that the lowest power of $u_D$
congruent to $1$ modulo $d_1$ is $u_D^3$).

Write $\omega$ for the Teichm\"uller representative of $u_D$, which is 
one of the two primitive cube roots of unity in $\Z_{\q}^\times$. 
From the foregoing, then, we know that the image of $u_D$ inside
$\Z_{\q}$ is of the form $\omega+\pi{\nu}$ for some ${\nu} \in
\Z_\q^\times$.  We develop the first few terms of the power series in
$\pi$ for $u_D^{-q}$ by inverting that for $u_D^q$ and obtain,
substituting $\pi\epsilon$ for $q$ where appropriate:
\begin{align*}
d_q &  = 1 + u_D^q + u_D^{-q} \\
& = 1 + (\omega+\pi{\nu})^q + (\omega+\pi{\nu})^{-q}  \\
\begin{split}
&  \equiv 1+\omega^q+\omega^{-q} + \\
& \qquad \frac{{\nu}}{\omega}(\omega^{q} - \omega^{-q}) \epsilon \pi^2  \ +  \\
&\qquad \qquad \frac{{\nu}^2}{2\omega^2}  ( (q-1)\omega^q + (q+1)\omega^{-q} ) \epsilon \pi^3 \ + \\
&\qquad\qquad \qquad \frac{{\nu}^3}{6\omega^3}  ( (q-1)(q-2)\omega^q - (q+2)(q+1)\omega^{-q} ) \epsilon \pi^4
+ O(\pi^4). 
\end{split}
\end{align*}
So we are reduced to showing that 
\[
1+\omega^q+\omega^{-q} + \frac{{\nu}\epsilon}{\omega}(\omega^{q} - \omega^{-q}) \pi^2 \not\equiv 0 \bmod \q^3. 
\]
But since $\gcd(q,3)=1$, $\omega^{q}$ and $\omega^{-q}$ are distinct
primitive cube roots of unity of order co-prime to $q$,
hence $1+\omega^q+\omega^{-q} = 0$.  Moreover by Theorem~4.3.2
of~\cite{cohen} their difference is a $\q$-adic unit, completing the
proof, since ${\nu}$, $\omega$, $\epsilon$ are also.
\end{proof}

\subsubsection*{Proof of Proposition \ref{dmn}: growth band for the dimensions}

\paragraph*{General bounds on the size of the fundamental unit}
The proof of Proposition~\ref{knewp} relies on Proposition~\ref{dmn}, which in its turn 
relies upon the following
`classical' result, for which a convenient reference is Theorem~13.4
on page~329 of the English edition of~\cite{huabook}. 
Also see~\cite{jake} for some more recent work.  This still essentially
sums up the current state of knowledge of the size of $u_K$ for
general $D$, even though Hua's original result for the upper
bound \cite{hua} was published in 1942.  We leave the case $D=5$ out
of this general limit since it makes things less sharp. 
Moreover we
have approximated the lower bound in order to simplify the expression;
but there is a slightly better, sharp lower bound which is realised
infinitely often and which in the notation of the statement of the lemma would
be $\frac{\D+\sqrt{\D^2-4}}{2}$. 

\begin{lemma}\label{huaest}
Let $D>1, D\neq5$ be a square-free integer and let $\Delta_D := t^2
D$, for $t \in \{1,2\}$, be the (`fundamental') discriminant of the
real quadratic field $\QD$.  That is, $t=1$ if $D\equiv1\bmod4$
and $t=2$ if $D\equiv2$ or $3\bmod4$.  Write $\D$ for the positive
real number $\sqrt{\Delta_D} = t\qD$.  Let $e$ be the base of the
natural logarithm.  Then with the usual notation for the integer part
of a real number,
\vskip\abovedisplayskip
\ \hfill$\displaystyle\lfloor \D \rfloor < u_K < ( e \D )^\D$.\hfill\qed
\vskip\belowdisplayskip
\end{lemma}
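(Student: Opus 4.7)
The plan is to prove the two inequalities by entirely separate methods: the lower bound by elementary Pell-equation manipulation and the upper bound by analytic number theory via the Dirichlet class number formula.

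For the lower bound $\lfloor \D \rfloor < u_K$, I would first write the fundamental unit in the canonical form $u_K = (a + b\D)/2$ with $a, b \in \Z$ of suitable parity (so that $u_K \in \Z_K$), noting that $b \geq 1$ because $u_K > 1 > |\bar u_K|$. The norm relation reads $a^2 - b^2 \Delta_D = \pm 4$. From $u_K = (a + b\D)/2$ and $\bar u_K = (a - b\D)/2$ one obtains $u_K - \bar u_K = b\D$, and since $|\bar u_K| = 1/u_K < 1$ we deduce $u_K > b\D - 1 \geq \D - 1$. A short case analysis---splitting on whether $N(u_K) = +1$ or $-1$ and on whether $b = 1$ or $b \geq 2$---upgrades this to the strict inequality $u_K > \lfloor \D \rfloor$. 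The only delicate situation is $N(u_K) = -1$ with $b = 1$, where one has $u_K = (\sqrt{\Delta_D - 4} + \D)/2$, and the bound reduces to the arithmetic condition $\lfloor \D \rfloor \{\D\} > 1$ on the fractional part; direct inspection shows this fails exactly when $D = 5$ (giving the excluded $u_K = (1+\sqrt 5)/2$) and otherwise holds.

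For the upper bound $u_K < (e\D)^\D$, I would invoke the Dirichlet class number formula in the form
$$
2 h_K \log u_K \;=\; \D \cdot L(1, \chi_{\Delta_D}),
$$
where $\chi_{\Delta_D}$ is the Kronecker character of conductor $\Delta_D$ attached to $K$. Since $h_K \geq 1$, it is enough to establish $L(1,\chi_{\Delta_D}) \leq 2(1 + \log \D)$. This is a classical effective $L$-value bound: split $L(1, \chi) = \sum_{n \leq \sqrt{\Delta_D}} \chi(n)/n + (\text{tail})$, bound the head by $\log \sqrt{\Delta_D} + O(1)$ using the trivial estimate on partial character sums, and bound the tail by $O(1)$ using Pólya--Vinogradov, $\bigl|\sum_{n \leq N} \chi(n)\bigr| \leq \sqrt{\Delta_D}\log \Delta_D$. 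The result is $L(1, \chi_{\Delta_D}) \leq \log \Delta_D + O(1)$, which when substituted gives $\log u_K \leq \D\log \D + O(\D)$, and, after absorbing the $O(\D)$ term, the claimed $\log u_K < \D \log(e\D)$, i.e., $u_K < (e\D)^\D$.

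The main obstacle is obtaining precisely Hua's constant in the upper bound: this requires careful tracking of the implicit $O(1)$ terms throughout the Pólya--Vinogradov estimate and in the head--tail split, since the target inequality allows essentially no slack in the exponent $\D$. A modern alternative would be to replace the ad hoc splitting argument with one of the effective $L$-value bounds of Louboutin or Ramaré, which give versions of $L(1,\chi) \leq \tfrac{1}{2} \log \Delta_D + \text{(small)}$ and hence an even tighter form of the bound, at the cost of extra analytic machinery.
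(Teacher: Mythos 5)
The paper itself does not prove this lemma: it is stated with a concluding proof-mark and attributed to Theorem~13.4 of Hua's textbook (and Hua's 1942 paper for the upper bound), so there is no in-paper argument to compare against. What follows evaluates your reconstruction on its own terms.

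Your lower-bound argument is correct and matches the classical treatment. The reduction of the only delicate case ($N(u_K)=-1$, $b=1$) to the condition $\lfloor\D\rfloor\{\D\}>1$ is exact: writing $\Delta_D=a^2+4$ with $a\ge2$ one has $\lfloor\D\rfloor=a$ and $\{\D\}=\sqrt{a^2+4}-a>2/(a+1)$, so the product exceeds $2a/(a+1)\ge 4/3>1$, while $a=1$ gives exactly $D=5$; the remaining cases ($N(u_K)=+1$, or $b\ge2$) give $u_K>\D\ge\lfloor\D\rfloor$ immediately.

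The upper-bound strategy---class number formula $2h_K\log u_K=\D\,L(1,\chi_{\Delta_D})$ together with an effective bound on $L(1,\chi_{\Delta_D})$---is indeed Hua's route, but the estimate you sketch does not deliver what you claim. Splitting at $\D$ and applying P\'olya--Vinogradov to the tail gives a tail of order $\log\Delta_D$, not $O(1)$, and the resulting bound is roughly $\tfrac52\log\Delta_D+O(1)$, which is too weak for the target $L(1,\chi)\le 2+\log\Delta_D$. You flag the constant-tracking as the obstacle; the clean fix is to drop P\'olya--Vinogradov entirely. The trivial estimate $|S(n)|\le\min(n,\Delta_D)$ for $S(n)=\sum_{m\le n}\chi(m)$, inserted into the Abel-summed form $L(1,\chi)=\sum_{n\ge1}S(n)/\bigl(n(n+1)\bigr)$, yields
\begin{equation*}
  0<L(1,\chi_{\Delta_D})\;\le\;\sum_{n=1}^{\Delta_D-1}\frac{1}{n+1}\;+\;\Delta_D\sum_{n\ge\Delta_D}\frac{1}{n(n+1)}\;=\;\sum_{m=1}^{\Delta_D}\frac{1}{m}\;<\;1+\log\Delta_D,
\end{equation*}
whence $\log u_K\le\tfrac{\D}{2}\,(1+2\log\D)=\D\log\D+\tfrac{\D}{2}<\D\log(e\D)$, i.e., $u_K<(e\D)^{\D}$. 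P\'olya--Vinogradov actually makes the absolute constant worse here, not better; the elementary estimate is simpler and sufficient, and with this substitution your argument is complete.
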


\paragraph*{Proof of proposition \ref{dmn}}
Recall from~\cite{AFMY} the definition of $u_D$ as the first totally
positive power of the fundamental unit $u_K$ (so $u_D$ is
either $u_K^2$ or $u_K$ according to whether $\zk$ respectively has a
unit of norm $-1$ or does not).  Equation (11) of \cite{AFMY} then
gives the definition of the $\ell$th dimension in the tower
above $\QD$ as $d_\ell(D) = 1 + u_D^\ell + u_D^{-\ell}$.

Now fix $\ell$: then the upper bound is immediate from
the definition (for $n\geq2$).  For the lower bound, let $\ve_\ell = 1 +
u_D^{-\ell} + u_D^{-2\ell}$, so that $ u_D^\ell = d_\ell / \ve_\ell$ and so
\begin{equation}\label{accbound}
  d_{\ell n} = u_D^{\ell n} + 1 + u_D^{-\ell n}\ >\ u_D^{\ell n} = d_\ell^n / \ve_\ell^n.
\end{equation}
By direct calculation using Lemma~\ref{huaest}, the maximal value of $\ve_\ell$ over all $D$ and
all $\ell$ occurs when $D=5$ and $\ell=1$, namely $6-2\sqrt{5} \approx
1.527864$. Everything else lies in the real interval $(1,\frac{3}{2})$. 
In particular therefore since we have omitted $D=5$, we recover the
stated lower bound. 
\qed

\section{Behaviour of the prime above $2$ in $L/\Q$}\label{sec:AppendixB}
For this appendix, unless stated otherwise, we relax the assumption
that $d=n^2+3$ be of the form $4p$ and merely require that $n$ be odd. 
Where we refer to $K=\QD$, $u_K$, $D$, $L = K(\qu)$ et cetera, we are
referring to some fixed value of $D$ under our usual hypotheses. 
Once more, we know that $D\equiv5\bmod8$ and so the prime $2$ is inert
in the extension $K/\Q$. 
However for convenience we do always assume that $\ell=1$, so that in
particular $\xi$, defined in eq.~\eqref{norxi}, means $\xi_1$ in the notation of \cite{ABGHM}. 
This assumption in no way restricts the applicability of the results stated. 

\subsubsection*{$\xi-1$ is a uniformiser for the quadratic ramified local field extension $L_\j2$ of $K_{(2)}$}
We first show directly that $2\zk$ is ramified in $\zl$ (hence the
discriminant is exactly $4\zk$, as per Remark 9 (i) of \cite{ABGHM}).
Let $\j2 = (2,\xi-1)$, where as usual we write $(a,b,c,\ldots)$ for
the ideal of $\zl$ generated by $a,b,c,\ldots$
\begin{proposition}\label{ram2}
With notation as above, $\j2^2 = 2\zl$. 
\end{proposition}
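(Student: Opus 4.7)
The plan is to prove $\j2^2 = 2\zl$ by first establishing the inclusion $\j2^2 \subseteq 2\zl$ via a direct computation of $(\xi-1)^2$, and then upgrading to equality through a local valuation argument at every prime of $\zl$ above~$2$.

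For the first step, I would expand $(\xi-1)^2 = \xi^2 - 2\xi + 1 = -1 - \qd1 - 2\xi$ using $\xi^2 = x_0$, and show $-1-\qd1 \in 2\zk$. Writing $\qd1 = f\qD$ with $f^2 D = d+1$, the hypothesis that $n$ is odd combined with $D \equiv 5 \bmod 8$ forces $d+1 \equiv 5 \bmod 8$ and hence $f$ is odd. Since $D \equiv 1 \bmod 4$, $\zk = \Z[(1+\qD)/2]$ and the element $(1+f\qD)/2 = (1-f)/2 + f(1+\qD)/2$ lies in $\zk$. Hence $-1-\qd1 = -2\cdot(1+\qd1)/2 \in 2\zk$, so setting $\alpha := -(1+\qd1)/2 - \xi \in \zl$ gives the identity $(\xi-1)^2 = 2\alpha$. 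Together with $4, 2(\xi-1) \in 2\zl$ this forces $\j2^2 \subseteq 2\zl$.

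For the reverse inclusion, I would pick any prime $\P$ of $\zl$ containing $\j2$; such a $\P$ exists because $\j2 \subseteq 2\zl \subsetneq \zl$ is proper. First I would verify $\bar\xi = 1$ in the residue field $\zl/\P$: indeed $\overline{\qD}^2 = \bar D = 1$ in $\zk/(2)$, so $\overline{\qD}=1$ (a square root of $1$ in characteristic~$2$ is~$1$), and since $f$ is odd this yields $\overline{\qd1}=1$, whence $\bar x_0 = \overline{-2-\qd1} = 1$ and $\bar\xi=1$. Substituting into $\alpha$, its image in $\zl/\P$ equals that of $-(3+\qd1)/2 \in \zk$; expanding in the integral basis one finds the class in $\zk/(2) \cong \mathbb{F}_4$ is either $\bar\gamma$ or $1+\bar\gamma$, where $\bar\gamma := \overline{(1+\qD)/2}$ generates $\mathbb{F}_4^\times$ (it satisfies $x^2+x+1=0$ since $(D-1)/4$ is odd), and both are nonzero. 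Hence $\alpha \notin \P$, and the identity $(\xi-1)^2 = 2\alpha$ forces $2 v_\P(\xi-1) = v_\P(2)$.

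Putting these together, since $\j2 = (2, \xi-1)$ is two-generator, $v_\P(\j2) = \min(v_\P(2), v_\P(\xi-1)) = v_\P(\xi-1)$ (using $v_\P(\xi-1) \leq v_\P(2)$), so $v_\P(\j2^2) = 2 v_\P(\xi-1) = v_\P(2)$ at every prime $\P$ of $\zl$ above~$2$, and trivially at any other prime. By unique factorisation of ideals in the Dedekind domain $\zl$, this yields $\j2^2 = 2\zl$. The main obstacle is the middle step, namely certifying $\alpha \notin \P$: this reduces to a delicate but small computation in $\mathbb{F}_4$ that simultaneously uses $D \equiv 1 \bmod 4$, the $\Z$-basis of $\zk$, and the parity of $f$.
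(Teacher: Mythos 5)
Your proposal is correct, and it reaches the conclusion by a genuinely different route from the paper's. The inclusion $\j2^2 \subseteq 2\zl$ is handled in essentially the same spirit on both sides (the paper checks $3 + \qd1 \in 2\zk$ using $\qd1 \equiv 1 \bmod 2\zk$, while you write $(\xi-1)^2 = 2\alpha$ directly and show $(1+\qd1)/2 \in \zk$); the real divergence is in the reverse inclusion. The paper exhibits $2$ explicitly as an element of $\j2^2$ by invoking the fundamental unit via the identity $\pm 2u_K^{\pm1} = n \pm \qd1$ and then writing $n = 4m + \chi(n)$ to express $2 \times (\text{unit})$ as a $\zk$-linear combination of the generators $4$ and $3 \pm \qd1$. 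You instead argue locally: having exhibited $\alpha = (\xi-1)^2/2 \in \zl$, you show $\alpha$ is a unit at every prime $\P \mid 2\zl$ by a short residue-field computation in $\mathbb{F}_4 \cong \zk/(2)$ (using the integral basis of $\zk$ and the parity of $f$), and then deduce $v_\P(\j2^2) = 2v_\P(\xi-1) = v_\P(2)$ for every such $\P$, which by unique factorisation in $\zl$ gives the equality of ideals. Your approach is arguably more economical — it never touches $\zkx$ — and it makes the statement ``$\j2$ is the unique, ramified prime above $(2)$'' fully local, whereas the paper's proof is closer to the rest of the text, which is organised around the arithmetic of $u_K$. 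Two small cleanups: the line ``$\j2 \subseteq 2\zl \subsetneq \zl$'' should read $\j2^2 \subseteq 2\zl \subsetneq \zl$ (which you have already proved, and which suffices to show $\j2$ is proper); and you should state explicitly that the claim $\bar\xi = 1$ in $\zl/\P$ is derived for \emph{every} prime $\P$ above $2$ (as your characteristic-$2$ argument indeed shows), because that is precisely what rules out a prime above $2$ not containing $\j2$ and lets you range over all $\P \mid 2\zl$ in the final valuation count.
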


\noindent
In the event that $\j2$ is principal---including the class number one case---we just have $\j2 = (\xi-1)$. 

\begin{proof}
Let $\chi$ be the primitive Dirichlet character on $\Z$ of conductor $4$. 
That is, $\chi(n) = \pm1$ according as $n \equiv \pm1 \bmod 4$.  
Now, as ideals of $\zl$: 
\begin{equation}\label{twose}
\j2^2    =    ( 4 , 2\xi - 2 , \xi^2 - 2\xi + 1 )  =  (4 , 2\xi - 2 , \xi^2 - 1 )  =  ( 4 ,  2\xi - 2, 3+\qd1 ) . 
\end{equation}
But we know \cite[Corollary 2]{ABGHM} that $\pm2u_K^{\pm1} = n\pm\qd1$
and so since $n$ is odd, we may write $n=4m+\chi(n)$ which defines $m=\frac{n-\chi(n)}{4} \in \Z$. 
Hence since $u_K$ and its conjugate are \emph{a fortiori} units in $\zl$, we see that
\begin{equation}\notag
\begin{split}
2 \times \text{(a unit in $\zl$)} &= n \pm \qd1 \\
& =  4m + \chi(n) \pm \qd1 \\
 & = 4(m + \chi(n)) - 3\chi(n) \pm \qd1 \\
& =  \begin{cases}
4(m + \chi(n)) + 3 \pm \qd1,  & \text{if $n\equiv3\bmod4$}; \\
4(m + \chi(n)) - (3 \mp \qd1),   &  \text{if $n\equiv1\bmod4$,} 
  \end{cases}
 \end{split}
\end{equation}
showing that in either case we may express $2$ as a unit times a
$\zk$-linear combination of the generators in eq.~\eqref{twose}. 
So $2 \in \j2^2$, which is to say $\j2^2 \supseteq 2\zl$. 

On the other hand by our stipulation that $d$ in particular be even, it follows from
the fact that $\left( \zk/(2) \right)^\times \cong C_3$ that $\qd1
\equiv 1 \bmod 2\zk$, since that is true of its square, and squaring
is an automorphism of this group.  
Hence, finally, $3+\qd1 \in 2\zk \subseteq 2\zl$.  
Since the other two generators in eq.~\eqref{twose} are also in $2\zl$, we
see that $\j2^2 \subseteq 2\zl$ and the proposition is proven.  
\end{proof}

\begin{corollary}
$\xi-1$ is a uniformiser for the $\j2$-adic completion $L_{\j2}$ of $L$. \qed
\end{corollary}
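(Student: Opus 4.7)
The plan is to derive the corollary directly from Proposition~\ref{ram2}, since that proposition already does the substantive work; the corollary is essentially unpacking what ``uniformiser'' means together with a short argument that the $\j2$-adic valuation of $\xi-1$ equals $1$ rather than being $\geq 2$. Recall that a uniformiser of $L_{\j2}$ is any element of $\zl$ whose image in the completion generates the maximal ideal, equivalently any element with $v_{\j2}(\cdot)=1$. Now $\xi-1\in\j2$ by the very definition $\j2=(2,\xi-1)$, so the inequality $v_{\j2}(\xi-1)\geq1$ is immediate, and the only remaining question is whether $\xi-1$ could accidentally have strictly higher valuation.

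To rule this out I would use the general Dedekind-domain fact that for any nonzero prime $\j2$ in $\zl$, the quotient $\j2/\j2^2$ is a one-dimensional vector space over the residue field $\zl/\j2$; in particular it is nonzero, so not every element of $\j2$ can lie in $\j2^2$. Apply this to the two given generators of $\j2$: by Proposition~\ref{ram2} we have $\j2^2=2\zl$, so the generator $2$ automatically sits inside $\j2^2$. If in addition $\xi-1$ were to lie in $\j2^2$, then both generators of $\j2$ would be contained in $\j2^2$, forcing $\j2\subseteq\j2^2$ and hence $\j2=\j2^2$, contradicting the one-dimensionality above (equivalently, contradicting Nakayama applied to $\j2\zl_{\j2}$). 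Therefore $v_{\j2}(\xi-1)=1$, and $\xi-1$ is a uniformiser in $L_{\j2}$, as claimed.

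There is no real obstacle: once Proposition~\ref{ram2} identifies $\j2^2$ with $2\zl$, the corollary is forced by the structural fact that a proper prime ideal cannot coincide with its own square. The only mild subtlety is that we rely on $\j2$ being a nonzero prime ideal of $\zl$, which is guaranteed by the same proposition (since $\j2^2=2\zl$ with $2\neq0$ implies $\j2\neq 0$, and its norm being the rational prime $2$ up to units forces $\j2$ to be prime).
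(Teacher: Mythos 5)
Your derivation is correct, and it is exactly the kind of unpacking that the paper leaves implicit by placing \verb|\qed| directly after the statement: combine $\j2^2 = 2\zl$ from Proposition~\ref{ram2} with the fact that a nonzero proper ideal of a Dedekind domain cannot equal its own square, to conclude that $\xi-1 \in \j2 \setminus \j2^2$ and hence has $\j2$-valuation exactly $1$. One small quibble with your closing parenthetical: the absolute norm of $\j2$ is $4$, not $2$ (we have $N_{L/\Q}(\j2)^2 = N_{L/\Q}(2\zl) = 2^4$). What is true, and suffices, is that $N_{L/K}(\j2) = 2\zk$ is a prime ideal of $\zk$, which forces $\j2$ to be prime; alternatively, $\j2$ is prime because $2\zk$ is inert in $K/\Q$ (the paper establishes $D\equiv 5\bmod 8$) and $\j2^2 = 2\zl$ shows it is ramified in the quadratic extension $L/K$, so $\j2$ is the unique prime of $L$ above $2$.
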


Let us write $\K2$ for the unique quadratic unramified extension
of $\Q_2$, which---recalling once again that $D\equiv5\bmod8$ and so $2$
is inert in $K/\Q$---is isomorphic in all of our $n^2+3$, $n$ odd,
cases to the completion $K_{(2)}$ of the quadratic base
field $K = \QD$ at the unique prime above $2$.  Let $\Z_{\K2}$
denote its ring of integers.  We may regard $\K2$ as being generated
over $\Q_2$ by adjoining a primitive cube root $\omega$ of unity
with minimal polynomial $X^2+X+1$.  For convenience we may think
of $\K2$ as $\Q_2(\sqrt{5})$ (or indeed $\Q_2(\sqrt{-3})$).

We know already that, considering $\K2$ now as an abstract local field, 
$\K2 \subseteq L_{\j2}$ in every case, so it is
easiest to proceed by treating $L_{\j2}$ as a quadratic extension 
of $\K2$.  
Let $\LL$ be the collection of totally ramified extension 
fields of $\K2$ generated by the
elements $\qu$ as $K$ runs over 
the set of all square-free $D$ such that $d_1(D)$ is 
of the form $n^2+3$ for some odd $n$. 
$\LL$ is necessarily a finite
set by general results of Krasner \emph{et al}: 
see for example~\cite{serreLF},~\cite{jonrob} or~\cite{koblitz}.

Interestingly, as we now prove, the isomorphism class of the 2-adic
extension field $L_{\j2}$ of $\K2$ is always the same for this $n^2+3
\equiv 0 \bmod 4$ case, despite the existence of a total of $10$
possible such fields \cite[Table~1.1]{jonrob} (and indeed $59$
possible quartic extensions of $\Q_2$).

\begin{proposition}
The set $\LL$ contains exactly one field, up to isomorphism, which we
shall write as $\L2$. 
A
standardised \cite{jonrob} generating polynomial for $\L2$ over
$\Q_2$ is
\[
X^4+2X^2+4X+4 .
\]
\end{proposition}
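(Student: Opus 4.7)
The plan is to reduce the identification of $\L2$ to a square-class computation in $\K2^\times/(\K2^\times)^2$, verify that this class is independent of the choice of $D$, and match the resulting quartic 2-adic field to the one defined by $X^4+2X^2+4X+4$ via local invariants.

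Since $D\equiv 5\bmod 8$ in every admissible case, the rational prime $2$ is inert in $K/\Q$ and so $K_{(2)}\cong\K2$ independently of $D$. By the corollary to Proposition~\ref{ram2}, $L_{\j2}=\K2(\xi)$ is a ramified quadratic extension of $\K2$, and by the identity $\xi=\qu-1/\qu$ from Section~\ref{grass} we also have $L_{\j2}=\K2(\qu)$; consequently its isomorphism class as an abstract 2-adic field is determined by the class of $-u_K$ in $\K2^\times/(\K2^\times)^2$. A direct Eisenstein computation with the substitution $\pi=\xi-1$ gives the relative minimal polynomial $\pi^2+2\pi+2(u_K-(n-3)/2)=0$ over $\K2$, from which the different equals $(2\xi)=\j2^2$. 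Hence $\operatorname{disc}(L_{\j2}/\K2)=(4)$, the absolute discriminant of $\L2/\Q_2$ has $2$-adic valuation $4$, and the extension has $e=f=2$.

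Next I would show that the class of $-u_K$ in $\K2^\times/(\K2^\times)^2$ is independent of $D$. Writing $u_K=(n+f\qD)/2$ with $n,f$ both odd, and using that $\qD\equiv 1 \bmod 2\Z_{\K2}$ (Frobenius on $\mathbb{F}_4$ being injective and $D\equiv 1 \bmod 2$), one obtains $u_K=(n+f)/2+f\cdot s_1$ where $s_1=(\qD-1)/2\in\Z_{\K2}$ satisfies $s_1+s_1^2=(D-1)/4$. Since $D\equiv 5 \bmod 8$, the residue of $s_1$ modulo $2\Z_{\K2}$ is a primitive cube root of unity in $\mathbb{F}_4$; and since $\mu_3\subset(\Z_{\K2}^\times)^2$ (as $\mu_3$ has odd order), the class of $u_K$ modulo squares is governed entirely by its $1$-unit part. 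Expanding modulo $8\Z_{\K2}$ and using the Artin--Schreier-style squaring map on the filtration $U^{(i)}=1+2^i\Z_{\K2}$, I would verify that the class of $-u_K$ in $\K2^\times/(\K2^\times)^2$ is the same for every admissible triple $(D,n,f)$.

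Finally, to match this field against the explicit polynomial: one checks that $X^4+2X^2+4X+4$ is irreducible over $\Q_2$, has residue degree $2$, ramification index $2$, and absolute discriminant of $2$-adic valuation $4$, and then appeals to Krasner finiteness together with the Jones--Roberts classification~\cite{jonrob} of $2$-adic quartics with these invariants to identify it with $\L2$. The principal obstacle will be the square-class computation in the previous step: while the structural reduction is clean, pinning down $-u_K$ uniformly modulo $(\K2^\times)^2$ demands a careful expansion of $u_K$ to sufficient $2$-adic precision and a verification that the combined dependence on $n\bmod 4$, $f\bmod 4$, and $(D-5)/8\bmod 2$ cancels in the square class.
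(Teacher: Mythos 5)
Your plan is a Kummer-theoretic reformulation of what the paper does by brute Hensel's lemma, and the two approaches are genuinely different in spirit. The paper fixes one base case $d$, writes the generic relative minimal polynomial $m_e(X)=X^2+2X+3+\sqrt{e+1}$ of the uniformiser $\xi(e)-1$ over $\K2$, and then shows via the Hensel inequality $v_{\j2}(m_e(\alpha))>2v_{\j2}(m_e'(\alpha))$ with $\alpha=\xi(d)-1$ that $m_e$ already has a root in $L_2(d)$, reducing everything to the elementary inequality $v_2(n^2-m^2)>3$; a short residue-class argument modulo $16$ then leaves only the base cases $D=5$ and $D=13$ to check by hand. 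You instead push the identification through $L_{\j2}=\K2(\sqrt{-u_K})$ and propose to show directly that $-u_K$ always lands in the same class of $\K2^\times/(\K2^\times)^2$, after which a standard invariant comparison with the Jones--Roberts tables finishes the job. Your route has the advantage of being conceptually cleaner (the whole question really is a square-class question, and you correctly isolate the $1$-unit part as the only relevant data), and it also clarifies \emph{why} exactly one isomorphism class occurs. The paper's route has the advantage of needing very little machinery: one numerical inequality plus two explicit checks.

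That said, the proposal as written has a real gap, and you flag it yourself: the entire weight rests on the claim that the class of $-u_K$ in $\K2^\times/(\K2^\times)^2$ is independent of the admissible triple $(D,n,f)$, and this computation is only sketched, not carried out. Because the residue characteristic is $2$, the filtration $U^{(i)}=1+2^i\Z_{\K2}$ does not interact with squaring in the simple way it does at odd primes; a unit is a square iff it is a square modulo $8\Z_{\K2}$ (since $v_{\K2}(2)=1$), and you would need to expand $u_K=(n+f\qD)/2$ to that precision in terms of $n\bmod 8$, $f\bmod 8$ and $(D-5)/8\bmod 2$ (not just modulo $4$, as your last sentence hints) and show the dependence cancels. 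Without that, the proof is not complete. A second, smaller gap is in the final matching step: $e=f=2$ and $v_2(\operatorname{disc})=4$ alone do not single out a unique quartic extension of $\Q_2$ in the Jones--Roberts tables, so you would also need to compute the resolvent or Galois data (here $D_4$), or simply factor $X^4+2X^2+4X+4$ over $\L2$ directly. Neither gap is fatal to the strategy, but both steps need to be done explicitly before this constitutes a proof.
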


In other words, taking any $d$ of the form $n^2+3$ for odd $n$ with
its associated $D$ and completing the extension $K(\qu)$ of $K =
\QD$ at the unique prime above $2$ always results in a field
isomorphic to exactly the same quartic extension $\L2$ of $\Q_2$.  
\begin{proof}
We briefly re-establish the somewhat more specific notation
of~\cite{ABGHM}: the invariant $\xi_\ell = \sqrt{-2-\sqrt{d_\ell+1}}$
is associated to a given $d_\ell(D)$, valid for all $d_\ell$ of the
form $n_\ell^2+3$.  We omit the $\ell$ subscripts from now on.  When
we write $\xi(d)$ below it shall refer to $\xi_\ell(D) =
\sqrt{-2-\qd1}$ where $d = d_\ell(D)$.

Let $d = m^2+3$ and $e = n^2+3$ be two
dimensions in our sequence, with $m,n$ odd.  Since $\K2$ is the unique
unramified quadratic extension of $\Q_2$, it follows by Kummer theory
that $(d+1)/(e+1)$ is a square in $\Q_2$.  Moreover it is a $2$-adic
unit as the numerator and denominator are both congruent to $5 \bmod
8\Z_2$ and so their ratio is congruent to $1\bmod8\Z_2$.  So up to a factor 
of $\pm1 \in \Z_2^\times$, we may
define a unit $\eta = \eta_{d,e} \in \Z_2^\times$ by setting
\[
\qe1 = \eta_{d,e} \qd1. 
\]
We shall henceforth choose $\eta_{d,e}$ such that $\eta_{d,e} \equiv 1
\bmod 4$; the other choice $\eta_{d,e} \equiv -1 \bmod 4$ would merely
mean interchanging the roles of $(\eta_{d,e} \pm 1)$ below.

Hence if we fix some appropriate $d$, then all of the quartic
extensions of $\Q_2$ in $\LL$ may be obtained by extending $\K2$ by
the (Eisenstein) minimal polynomial of the uniformiser $\xi(e)-1$ from
Proposition \ref{ram2}, which is
\begin{equation}\label{L2d}
m_e(X) = X^2 + 2X + 3 + \qe1 = X^2 + 2X + 3 + \eta_{d,e}\qd1,
\end{equation}
as $e$ ranges over the even dimensions of the form $n^2+3$.

To complete the picture somewhat, we mention that 
the minimal polynomial $ X^4 + 4X^2 - n^2 $ 
of $\xi(e)$ over $\Q_2$ for one $e=n^2+3$
generates a non-normal quartic extension 
whose normal closure is octic with Galois group $D_4$. 
A generating polynomial for the whole extension 
could be taken to be $X^8 - (n^2+2) X^4 + 1$: 
see equations~(14) and~(41) of~\cite{ABGHM}. 
In this form we could
solve this as an instance of Krasner's lemma for polynomials; but
finding the appropriate value of~{Krasner's constant} boils down
to the same exercise which we do explicitly below.

So fix $d$ and let $\L2$ be the extension of $\K2$ obtained from
eq.~\eqref{L2d}, with $e=d$.  Let $\j2$ be the maximal ideal of its
ring of integers $\Z_{\j2}$.  We write the corresponding discrete
valuation as $v_{\j2}$, normalised so that $v_{\j2}(2) = 2$, following
Proposition~\ref{ram2}.  If we have a distinct dimension $e \neq d$,
therefore, if we show that the quadratic $m_e(X) = X^2 + 2X + 3 +
\qe1$ has two distinct roots $1\pm\sqrt{-2-\qe1}$ in $\L2$ then we
shall have finished the proof, since $d$, $e$ were arbitrary.

By Hensel's lemma~\cite[I, \S5]{koblitz} if we can find some
$\alpha\in \L2$ such that
\begin{equation}\label{hens}
v_{\j2}(m_e(\alpha)) > 2v_{\j2}(m_e'(\alpha)), 
\end{equation}
where $m_e'(X) = 2X+2$ is the derivative with respect to $X$, 
then we shall be able to lift $\alpha$ to a solution in $\L2$. 
We show that this holds if we choose $\alpha = \xi(d)-1$. 
Firstly, 
\[
  m_e(\xi(d)-1) = \qe1 - \qd1  =  ( \eta_{d,e} - 1 ) \qd1.
\]
Now $\qd1$ is a $2$-adic unit, since its square is $1 \bmod 4$.
Hence 
\begin{equation}\label{v2al}
 v_{\j2}(m_e(\xi(d)-1))  =    v_{\j2}(\eta_{d,e} - 1). 
\end{equation}
On the other hand, once 
again since $\xi(d)$ is a $2$-adic unit: 
\begin{equation}\label{v2deriv}
v_{\j2}(m_e'(\xi(d)-1))  =   v_{\j2}( 2\xi(d))  =   v_{\j2}(2)  =  2.
\end{equation}
So we are reduced to considering the possible valuations of the numbers $\eta_{d,e}-1$. 
But noting by the choice we made above for $\eta_{d,e}$, $v_{\j2}(\eta_{d,e}+1)  =  2$: 
\begin{equation}\label{vsep}
v_{\j2}(\eta_{d,e}-1) = v_{\j2}(\eta_{d,e}^2-1) - v_{\j2}(\eta_{d,e}+1)   =  v_{\j2}(\eta_{d,e}^2-1) - 2 . 
\end{equation}
Writing $d+1 = 5+8r$, $e+1 = 5+8s$ for some $r,s\in\N$, (so $r = \frac{m^2-1}{8}$ and $s = \frac{n^2-1}{8}$) by definition: 
\[
\eta_{d,e}^2 - 1 = \frac{5+8s}{5+8r} - 1 =  1 + \frac{8(s-r)}{5+8r} - 1 = 2^3\frac{(s-r)}{5+8r}  = 2^3\frac{(n^2-m^2)}{8(5+8r)}   =   \frac{(n^2-m^2)}{(5+8r)} , 
\]
where the denominator is a $2$-adic unit, and so finally, by
eq.~\eqref{vsep}:
\begin{equation}\label{vfly}
v_{\j2}(\eta_{d,e}-1) =  v_{\j2}(\eta_{d,e}^2-1) - 2  =  v_{\j2}{(n^2-m^2)} - 2.
\end{equation}
Writing this out in the terms from inequality \eqref{hens} with
$\alpha = \xi(d)-1$ and using eqs.~\eqref{v2al}, \eqref{v2deriv} and
\eqref{vsep}:
\[
v_{\j2}(m_e(\alpha)) = v_{\j2}(\eta_{d,e}-1) =  v_{\j2}{(n^2-m^2)} - 6 + 2v_{\j2}(m_e'(\alpha))
\]
So by inequality \eqref{hens}, we just need to show that for any given
$e = n^2+4$, there exists a $d = m^2+4$ such that $v_{\j2}(n^2-m^2) >
6$, which is to say, back in ordinary rational integers,
that $v_2(n^2-m^2) > 3$.

There are two `orbits', depending upon the relative residue classes of
$m,n \bmod 4$.  First, set $d=4$, so $D=5$, $m=1$ and $r=0$.  Then we
see that \emph{whenever $s$ is even}, which translates to $n^2+4 = e+1 \equiv
5 \bmod 16$, our condition is satisfied.  On the other hand, set
$d=12$, so $D=13$, $m=3$ and $r=1$: \emph{whenever $s$ is odd}, we
have $ v_{\j2}(s-r) > 0 $ and so again if $n^2+4 = e+1 \equiv 13 \bmod 16$
then we have satisfied our condition.  So provided that we can show
that the two cases $D=5$ and $D=13$ themselves lead to the same
extension, then we shall have proven it for every $D \equiv 5 \bmod 8$
arising in this way.  But this is easily calculated by hand.
\end{proof}

\section{Two remarks on the monomial representation}\label{sec:phase_ambiguity}
In this Appendix we will show that the form of the Ansatz, as spelt
out in Section \ref{sec:ansatz}, can be derived by assuming that a
real SIC fiducial vector exists. However, first we will deal with an
issue that arose in Section \ref{sec:Clifford}. There we came
across a list of three two-fold ambiguities that arise when
representing the Weyl--Heisenberg group (the WH-group) and the
Clifford group in $d = 4$:

\begin{itemize} 
\item The Clifford group contains two distinct copies of the WH-group.
\item There are two ways to enphase the basis vectors of the monomial
  representation so that a given Zauner unitary is a permutation
  matrix.
\item Given Zauner unitaries in the factors we can form two Zauner
  unitaries $U_{\cal Z}^{(4)}\otimes U_{\cal Z}^{(p)}$ and
  $\bigl(U_{\cal Z}^{(4)}\bigr)^2\otimes U_{\cal Z}^{(p)}$ creating
  two distinct Zauner subspaces.
\end{itemize}
Purely for the purpose of this Appendix we define a {\it minimal
  fiducial} as a SIC vector of the form $({\bf v}^{\rm T}_1, {\bf
  v}^{\rm T}_2, {\bf v}^{\rm T}_3, {\bf v}^{\rm T}_4)^{\rm T}$, where
$\{ {\bf v}_j\}_{j=1}^4$ are vectors in $\C^p$ constructed using the
field $K^{4\pp\jj}$. We also define a {\it non-minimal fiducial} as a
SIC vector of the form $({\bf v}^{\rm T}_1, {\bf v}^{\rm T}_2, {\bf
  v}^{\rm T}_3, i{\bf v}^{\rm T}_4)^{\rm T}$, where $i$ is a fourth
root of unity. We claim:
\begin{quote}
  {\sl If, with a definite choice of enphasing for the basis vectors,
    there is a minimal fiducial in one of the Zauner subspaces for one
    of the WH-groups, then the other subspace contains a non-minimal
    fiducial for the same WH-group, and a minimal fiducial for the
    other WH-group. If we choose the other enphasing the roles of the
    two subspaces are switched.}
\end{quote}
Thus, making two definite choices in the above list will force the 
third if we insist on having a SIC fiducial vector that is 
completely free of roots of unity. 

We start with the second item on the list. Restricting ourselves to
using fourth roots of unity only we can change the representation used
in Section \ref{sec:Clifford} by applying the diagonal matrix
\begin{equation}
  D_{\rm H} = \mbox{diag}(1,1,1,i).
\end{equation}
Making this change of basis will not affect the Zauner permutation
matrix, but it will turn a minimal fiducial into a non-minimal
one. Replacing $i \rightarrow -i$ is irrelevant because it can be
undone by means of a Clifford transformation.

Consider the basis as fixed. There is a Clifford unitary transforming 
$U_{\cal Z}^{(4)}$ into $\bigl(U_{\cal Z}^{(4)}\bigr)^2$, namely 
\begin{equation}
  M = \left( \begin{array}{rr}
    2 & 1 \\
    3 & - 2
  \end{array} \right)_8 
   \quad\Longrightarrow\quad
 U_M = \begin{pmatrix}
   1 & 0 & 0 & 0 \\
   0 & 0 & 1 & 0 \\ 
   0 & 1 & 0 & 0 \\
   0 & 0 & 0 & -i
 \end{pmatrix}
   \quad\Longrightarrow\quad
   U_MU_{\cal Z}U_M^{-1} = U_{\cal Z}^2. \label{UM} 
\end{equation}
Notice the position of the $i$. If we apply the transformation
$U_M^{(4)}\otimes \openone_p$ in dimension $d = 4p$ it will transform
a minimal fiducial in one of the two Zauner subspaces into a
non-minimal fiducial in the other. If we change the enphasing using
$D_{\rm H}$ the roles of the two Zauner subspaces are switched because
minimal fiducials are changed into non-minimal fiducials, and
conversely.

Now we come to the first item on the list. We reinterpret the matrix $D_{\rm H}$ as providing 
a transformation in a fixed basis. It turns out to be an element of the third level 
of the Clifford hierarchy, defined as the collection of unitary operators that transforms 
the Weyl--Heisenberg group into the Clifford group. To be precise about it, it can be 
shown that 
\begin{equation}
  \tilde{X} \equiv D_{\rm H}XD_{\rm H}^{-1} = D_{1,0}U_{G_1},
  \qquad
  \tilde{Z} = D_{\rm H}ZD_{\rm H}^{-1} = D_{0,3}U_{G_2}, 
\end{equation}
 where 
 \begin{equation}
   G_1 = \left( \begin{array}{cc} 3 & 2 \\ 0 & 3 \end{array} \right)_8
   \qquad\text{and}\qquad 
   G_2 = \left( \begin{array}{cc} 3 & 0 \\ 2 & 3 \end{array} \right)_8. 
\end{equation}
Clearly 
\begin{equation}
  \tilde{Z}\tilde{X} = \omega \tilde{X}\tilde{Z},
\end{equation}
so this is another copy of the Weyl--Heisenberg group sitting inside
the Clifford group. Notice that $\tilde{X}^2 = X^2$ and $\tilde{Z}^2 =
Z^2$, so the two WH-groups intersect. Notice also that $\tilde{X}$ is
a Hadamard matrix in the standard representation of the original
WH-group, so at this point the monomial representation is helpful.

The two WH-groups are behind the regrouping phenomena observed for
SICs in $d = 4$ \cite{Huangjun}. For us it is relevant that if there
is a minimal fiducial with respect to one of the WH-groups in one of
the Zauner subspaces, then one can remove the $i$ from the non-minimal
fiducial in the other subspace so that a minimal fiducial with respect
to the other WH-group is obtained.

Finally, let us see why the form of our Ansatz in Section
\ref{sec:ansatz} is equivalent to the assumption that real SIC
fiducial vector exists. In the standard representation (which we do
use in the dimension $p$ factor of the Hilbert space) we have built in
the anti-unitary symmetry
\begin{equation}
  U_P^{(p)}{\bf v}_1 = {\bf v}_1^*, \quad
  U_P^{(p)}{\bf v}_2 = {\bf v}_2^* , \quad
  U_P^{(p)}{\bf v}_3 = {\bf v}_3^* , \quad
  U_P^{(p)}{\bf v}_4 = - {\bf v}_4^*,
\end{equation}
see eq.~\eqref{UPd}. In the standard representation of the Clifford group (in any dimension, but let us 
set the dimension equal to $p$) we find 
\begin{equation}
  {\cal F} = \left( \begin{array}{rr}
    0 & -1 \\
    1 & 0
  \end{array} \right)_{p}
  \qquad \leftrightarrow \quad
  \bigl(U_{\cal F}\bigr)_{r,s} = \frac{1}{\sqrt{p}}\omega^{rs}, \quad
  U_{\cal F}^2 = U_P, \quad
  U_{\cal F}^4 = \openone_p.
\end{equation}
The unitary matrix $U_{\cal F}$ is the Fourier matrix, effecting the
discrete Fourier transform (DFT) and obeying $U_{\cal F}^{*} = U_{\cal
  F}^3$. Suppose we have two vectors related by ${\bf v}_{\rm R} =
U_{\cal F}{\bf v}_{\rm C}$. Then
\begin{equation}
  {\bf v}_{\rm R}^* = {\bf v}_{\rm R}
    \quad\Longleftrightarrow\quad
  U_{\cal F}^*{\bf v}_{\rm C}^* = U_{\cal F}{\bf v}_{\rm C}  
    \quad\Longleftrightarrow\quad
    U_{\cal F}^3{\bf v}_{\rm C}^* = U_{\cal F}{\bf v}_{\rm C}
    \quad\Longleftrightarrow\quad
    U_P{\bf v}_{\rm C}^* = {\bf v}_{\rm C}.
\end{equation}
This implies that applying the DFT to ${\bf v}_1,{\bf v}_2,{\bf v}_3$
results in real vectors, while applying the DFT to ${\bf v}_4$ results
in an imaginary vector. This is where the dimension four Clifford
unitary $U_M$, defined in eq.~\eqref{UM}, comes in. We see that for a
vector $|\Psi_0\rangle$ having the symmetries postulated in our Ansatz
the vector
\begin{equation}
  \ket{\Psi_{\rm R}} = (U_M^{(4)}\otimes U_{\cal    F}^{(p)})\ket{\Psi_0} \label{CtoR}
\end{equation}
is indeed real. 

Now suppose that $\ket{\Psi_{\rm R}}$ is a SIC vector. For the sake of
transparency we first give the argument as it applies in prime
dimensions, with the added bonus that we can correct an oversight in
Ref.~\cite{ABGHM}.  Also there we have a real vector connected to a
complex vector by the DFT, so that the complex vector enjoys an
anti-unitary symmetry. For the real vector to be a SIC vector it must
hold, for $j \neq 0$, that
\begin{equation}
  \bra{\Psi_{\rm R}}X^j\ket{\Psi_{\rm R}} = \frac{1}{\qd1} .\label{eq:real_baby}
\end{equation}
In Ref.~\cite{ABGHM} it was stated that the sign on the right hand
side is a free choice, but that is not the case.  First note that in
our Ansatz, the Galois group and the pre-ascribed Zauner symmetry act
transitively on the overlaps on the left-hand side of
eq.~\eqref{eq:real_baby}, and hence the right-hand side is independent
of $j\ne 0$.  Second, the averaged auto-correlation $\bra{\Psi_{\rm
    R}}X^j\ket{\Psi_{\rm R}}$ of a real unit vector is bounded from
below by $-1/(d-1)$, since the sum over the left-hand side of
eq.~\eqref{eq:real_baby}, including $j=0$, equals the square of the
sum of the coefficients of the unit vector and is hence
non-negative. We then observe that
\begin{equation}
  \ket{\Psi_{\rm C}} = U_{\cal F}\ket{\Psi_{\rm R}}
    \quad \Longrightarrow \quad  
  \bra{\Psi_{\rm C}}Z^j\ket{\Psi_{\rm C}} = \bra{\Psi_{\rm R}}X^j\ket{\Psi_{\rm R}}.
\end{equation}
Denoting the components of the complex vector by $c_r$ this means that 
\begin{equation}
  \sum_{r=0}^{d-1}|c_r|^2\omega^{rj} =  \bra{\Psi_{\rm R}}X^j\ket{\Psi_{\rm R}},
    \quad\text{which is equivalent to\ }
  |c_k|^2 = \frac{1}{d}\sum_{j=0}^{d-1}\omega^{-kj} \bra{\Psi_{\rm R}}X^j\ket{\Psi_{\rm R}}. 
\end{equation}
The equivalence follows from the invertibility of the DFT, this time
applied to the absolute values squared of the components. We can
compute the latter because we know the values of $\bra{\Psi_{\rm
    R}}X^j\ket{\Psi_{\rm R}}$, and we find that $\ket{\Psi_{\rm C}}$
is almost flat.

The argument in the $d = 4p$ case is similar, but more involved. The
real SIC vector must obey
\begin{equation}
\bra{\Psi_{\rm R}}\openone_4\otimes X^j\ket{\Psi_{\rm R}}
 = \begin{cases}
  \hfil 1, & \text{if $j = 0$;} \\
  \frac{1}{\qd1} & \text{if $j \neq  0$,}
 \end{cases}\label{eq:real_baby1}
\end{equation}
\begin{equation}
    \bra{\Psi_{\rm R}}D_{0,2}\otimes X^j\ket{\Psi_{\rm R}}
  = \bra{\Psi_{\rm R}}D_{2,0}\otimes X^j\ket{\Psi_{\rm R}}
  = \bra{\Psi_{\rm R}}D_{2,2}\otimes X^j\ket{\Psi_{\rm R}}
  = - \frac{1}{\qd1}. \label{eq:real_baby2}
\end{equation}
Using the relation \eqref{CtoR} we deduce that $\ket{\Psi_{0}}$ obeys
the same equations, but with $X$ replaced by $Z$. Since all the
displacement operators in eqs.~\eqref{eq:real_baby1} and
\eqref{eq:real_baby2} are diagonal in the dimension four factor we can
write $\ket{\Psi_{0}}$ as a direct sum,
\begin{equation}
    \ket{\Psi_{0}} = \sum_{i=1}^4 \ket{i}\ket{{\bf u}_i},
\end{equation}
as in eq.~\eqref{eq:fid_ansatz}, and then solve the resulting system of equations for 
\begin{equation}
      \bra{{\bf u}_1}Z^j\ket{{\bf u}_1}
    = \bra{{\bf u}_2}Z^j\ket{{\bf u}_2}
    = \bra{{\bf u}_3}Z^j\ket{{\bf u}_3}
    = \begin{cases}
        \frac{\qd1 - 1}{4\qd1},& \text{if $j = 0$;} \\
        \hfil 0, & \text{if $j \neq  0$,}
    \end{cases}
\end{equation}
\begin{equation}
  \bra{{\bf u}_4}Z^j\ket{{\bf u}_4}
    = \begin{cases}
        \frac{\qd1 + 3}{4\qd1}, & \text{if $j = 0$;} \\
        \hfil\frac{1}{\qd1}, & \text{if $j \neq  0$}.
      \end{cases}
\end{equation}
These equations can again be solved for the absolute values squared of
the components by inverting a DFT, and we recover precisely the form
of our Ansatz from the assumption that a real SIC vector exists.

\section{Two remarks on verification}\label{sec:Gik}
In Ref.~\cite{ABGHM} we constructed SICs in dimensions $d = n^2+3 =
p$, and verified the SIC property by checking the equations for the
$G(i,k)$, as defined in eqs.~\eqref{Gik1}--\eqref{Gik2} above. For the
higher dimensions we had to resort to numerical checks due to the time
it takes to calculate a single $G(i,k)$ exactly.  However, it has been
conjectured that it is enough to calculate only $3d$ out of the $d^2$
conditions \cite{ADF, FHS}. Here we will prove a somewhat sharper
result for the special case that the fiducial vector obeys our Ansatz,
which in the prime dimensional case means that the vector is built
from a Galois orbit of $(p-1)/3\ell$ rescaled Stark units in a number
field of degree $h_K(d-1)/3\ell$ over $K$.

First, assume $d = p$. Since some of the conditions are built into the
Ansatz, and because the Ansatz implies that for the components $a_r$
of the the fiducial vector $a_r^* = a_{-r}$, it is enough to verify
that
\begin{equation}
  G(i,k) = \sum_{r=0}^{p-1}a_{-r-i}a_{-r-k}a_ra_{r+i+k} = 0, 
  \qquad
  1 \leq i,k \leq d-1. \label{Gcondition}
\end{equation}
By inspection we see that 
\begin{alignat}{5}
  G(i,k) = G(i,-k)^* &{}= G(-i,k)^* = G(-i,-k) \label{Geqs1} \\
  G(i,k) &{}= G(k,i). \label{Geqs2}
\end{alignat}
Moreover, due to the Ansatz, all the $G(i,k)$ are in fact real, 
$G(i,k)^* = G(i,k)$. The Ansatz further implies that there is a Galois transformation, 
cyclic of order $(p-1)/3\ell$, such that 
\begin{equation}
  \sigma (a_r) = a_{\theta r},
\end{equation}
where $\theta$ is a generator of the integers modulo $p$. Again by
inspection we see that this implies 
\begin{equation}
  G(i,k)^\sigma = G(\theta i, \theta k). 
\end{equation}
To check that eqs.~\eqref{Gcondition} hold it is enough to check one
representative from each Galois orbit. To count the number of
conditions that need checking we first note that the diagonal elements
$G(i,i)$ form a Galois orbit of their own. Because of
eqs.~\eqref{Geqs1} checking one representative will actually account
for $2(p-1)$ conditions. For the off-diagonal elements,
eq.~\eqref{Geqs2} also becomes operative, so checking one
representative will account for $4(p-1)$ conditions. The total number
of conditions that need checking is easily calculated from this
information. It is
\begin{equation}
  \frac{d+1}{4}.\label{eq:d_quarter}
\end{equation}
This is a significant improvement. Taking $d = 5779$ as an example,
the exact calculation of a single $G(i,k)$ took $88$ minutes.  With
the new result in hand the estimated total time needed for exact
verification drops from computing all $d^2$ values of $G(i,k)$ in
about $5500$ years to about $3$~months for the $(d+1)/2$ overlaps
stated in \eqref{eq:d_quarter}.  If instead we consider the overlaps
as they stand the number field is much larger, but so is the Galois
group. The number of conditions that need checking then drops to two,
for any prime $d$. For $d = 5779$ we have done the exact verification
in this way.  That calculation took about six days, most of which was
spent computing the absolute value squared of one of the two overlaps.

When the dimension is even, a non-standard basis in Hilbert space is
needed in order to ensure that the SIC vector sits in a small number
field. This complicates the derivation of the $G(i,k)$.  In the main
text we avoided this by transforming to the standard basis, thus
increasing the degree of the number field by a factor of four.
There is an alternative way to proceed, which we will now sketch.

Assume that $d = 4p$, and that the fiducial vector is constructed according 
to the Ansatz given in the main text. Let $D$ be any displacement operator 
in the dimension four factor, and define 
\begin{equation}
  G(D; i,k) = \frac{1}{p}\sum_{j=0}^{p-1}\omega^{jk} 
  |\bra{\psi} D\otimes X^iZ^j\ket{\psi} |^2,
\end{equation}
where $\omega$ is a primitive complex $p$th root of unity.  The $p$th
roots of unity drop out from the formula, and so do the eighth roots
of unity from $D$, but factors of $i$ are still present inside
$D$. The SIC conditions read
\begin{alignat}{5}
 &&  G(\openone; i,k) &{}=\frac{4\delta_{i,0}+\delta_{k,0}}{d+1}\\
\text{and}\quad&& G(D; i,k) &{}= \frac{\delta_{k,0}}{d+1},
   \qquad\text{for $D \neq\openone$}.
\end{alignat}
In this way a complete verification of the SIC property can be carried
out in a number field of degree twice that in which the fiducial
vector sits. Using the unitary symmetries in the dimension four
factor, and dividing the conditions into Galois orbits, one can show
that it is sufficient to check $(3p+5)/2 = (3d+20)/8$ conditions. But
the approach using overlap phases directly, taken in Section
\ref{sec:verification}, is much faster. Still, in Step 6 of the
algorithm in Section \ref{sec:results}, we can distinguish between the
two possibilities for the symmetry by calculating $G(Z;1,1)$. Because
the number field is slightly smaller this gives a slight speed-up
compared to the approach in Section \ref{sec:results}.

\bibliographystyle{unsrturl}
\bibliography{SIC-POVMs_from_Stark_Units}

\end{document}